\documentclass[10pt]{article}

\usepackage{cite}
\usepackage{amsmath,amssymb,amsfonts}
\usepackage{algorithmic}

\usepackage{textcomp}
\usepackage{lipsum}
\usepackage{float}
\usepackage{mathtools}
\usepackage{mathrsfs}
\usepackage{amssymb}
\usepackage{booktabs}
\usepackage{multirow}
\usepackage{subfig}
\usepackage{graphicx}
\usepackage{epstopdf}
\usepackage{breqn}
\usepackage{xcolor}
\usepackage{enumerate}
\usepackage{breqn}
\usepackage[normalem]{ulem}
\usepackage{arydshln}

\usepackage{enumitem}

\usepackage{amsmath}
\usepackage{cite}
\usepackage{refcount}
\usepackage{flushend}
\usepackage{amsthm}
\definecolor{DarkBlue}{rgb}{0.1,0.22,0.45}
\definecolor{DarkGreen}{rgb}{0.1,0.22,0.45}
\definecolor{DarkRed}{rgb}{0.1,0.22,0.45}

\usepackage[hyperfootnotes=false,colorlinks = true, linkcolor = black,citecolor = black]{hyperref}
\usepackage[capitalize]{cleveref}
\setlist{nolistsep}

\newtheorem{thm}{Theorem}[section]
\newtheorem{deff}{Definition}[section]
\newtheorem{rmk}{Remark}[section]
\newtheorem{lem}{Lemma}[section]
\newtheorem{exmp}{Example}[section]
\newtheorem{cor}{Corollary}[section]

\newtheorem{property}{Property}[section]
\newtheorem{assumption}{Assumption}[section]

\newenvironment{newproofof}{\itshape{Proof of}}{\hfill$\square$}


\usepackage[margin=1.7in]{geometry}

\usepackage{fancyhdr}

\newcommand\shorttitle{ON THE LOCAL INPUT-OUTPUT STABILITY OF EVENT-TRIGGERED CONTROL SYSTEMS}
\newcommand\authors{MOHSEN GHODRAT AND HORACIO J MARQUEZ}

\fancyhf{}

\fancyhead[C]{%
	\ifodd\value{page}
	\scriptsize\scshape\authors
	\else
	\scriptsize\scshape\shorttitle
	\fi
}
\fancyhead[L]
{\scriptsize\thepage}

\pagestyle{fancy}

\begin{document}

\title{\normalsize\textbf{ON THE LOCAL INPUT-OUTPUT STABILITY OF EVENT-TRIGGERED CONTROL SYSTEMS}}

\author{\small MOHSEN GHODRAT AND HORACIO J MARQUEZ 
\thanks{The authors are with the Department of Electrical and Computer Engineering, University of Alberta, Edmonton, AB T6G 2V4, Canada (e-mail:
ghodrat@ualberta.ca; marquez@ece.ualberta.ca)}}
\date{}

\maketitle

\begin{abstract} \small\baselineskip=9pt This paper studies performance preserving event design in nonlinear event-based control systems based on a local $\mathcal{L}_2$-type performance criterion. Considering a finite gain local $\mathcal{L}_2$-stable disturbance driven continuous-time system, we propose a triggering mechanism so that the resulting sampled-data system  preserves similar disturbance attenuation local $\mathcal{L}_2$-gain property. The results are applicable to nonlinear systems with exogenous disturbances bounded by some Lipschitz-continuous function of state. It is shown that an exponentially decaying function of time, combined with the proposed triggering condition, extends the inter-event periods. Compared to the existing works, this paper analytically estimates the increase in intersampling periods at least for an arbitrary period of time.  We also propose a so-called \emph{discrete triggering condition} to quantitatively find the improvement in inter-event times at least for an arbitrary number of triggering iterations. Illustrative examples support the analytically derived results. 
\end{abstract}


\section{Introduction}

{E}vent-based control systems have been an active area of research over the last decade. The primary characteristic of event-based controllers is that they can provide performance very similar to classical control approaches while reducing the transmission of information between plant and controller. This feature is important in a growing number of applications in which limiting transmission rates is a concern. Examples include
battery-operated systems with wireless transmission between plant and controller, which often have limited energy and/or memory supplies, or network control systems with shared wired or wireless communication channels, \cite{hespanha_survey}. In a (classical) time-triggering fashion, data transmission between system elements (such as actuator, sensor and plant) occurs periodically, regardless of whether or not changes in the measured output and/or commands require computation of a new control output. In an event-based scenario, the system decides when to update the control output based on a so called \emph{real time triggering condition} on the measured signals. 

Event-based systems have been used without theoretical supports for many years. The resurgence of interest in the subject began with the work reported in reference \cite{astrom_stochastic} that considers a first order stochastic system and shows that event-based sampling offers better performance than classical time-triggered control in terms of closed loop variance and sampling rate.
Following publication of this work, event-triggered systems became a very active area of research and many important contributions have been reported addressing \emph{stability} (\cite{PID,tabuada, Heemels_periodic,passive,Postuyan_unifying,Postuyan_GFW}), and \emph{performance} (\cite{tracking,minimum_attention,L2_Wang_conf,L_2_selftriggered,L_2_selftriggered_disturbance,Dolk_Lp,Dolk_LP_ieee}), to mention a few (see also the references therein). 

Reference \cite{PID}, one of the first references on stabilization of event-based systems, proposes an event-based mechanism for PID control. 
Reference \cite{tabuada}, presents a clever and rather general solution to the stability problem of event-triggered systems. In this reference the author assumes the existence of a pre-designed continuous-time control law that results in input-to-state stability of a nonlinear plant, and shows
that restricting the measurement error ({\it i.e.}, the difference between the system state and the last sampled value) to stay within a function of the state threshold, guarantees closed loop global asymptotic stability.

Reference \cite{tabuada}  has inspired much work and several event-based strategies have been proposed that extend this work (see  \cite{tabuada_survey} and the references therein).
Reference \cite{tabuada} is restricted to state-feedback and therefore relies on full state measurement. This restriction is relaxed in \cite{Heemels_periodic,passive}. Reference \cite{Heemels_periodic}, considers \emph{periodic} event-based control of linear systems, in which the triggering condition is monitored at regular intervals instead of continuously, and can be viewed as a sampled data version of event-triggered systems. Reference \cite{passive} considers output feedback stabilization using the framework of passivity theory. References \cite{Postuyan_unifying,Postuyan_GFW} offer a unifying framework for the stability problem of nonlinear event-based
in the context of hybrid systems.

All of the above mentioned works focus on stabilization. The effects of an event-based mechanism on control performance was first addressed in \cite{minimum_attention}, which shows a trade-off between system performance and the complexity of the control law. 
A decentralized event-triggered mechanism is proposed in \cite{L_infty_gain} for distributed linear systems. This reference considers an 
impulsive system approach to system stability and proposes an event-based mechanism that satisfies an  $\mathcal{L}_\infty$ bound.
References  \cite{lemmonlinearL_2fullstate,Lemmonbook,L_2_selftriggered, L_2_selftriggered_disturbance, L2_multiagent,L2_Wang_conf,passive_input_output,passive_delay} focus on the
$\mathcal{L}_2$-gain.
The $\mathcal{L}_2$-gain stability analysis of event-based systems was first investigated in \cite{lemmonlinearL_2fullstate} where a full-information $\mathcal{H}_\infty$ controller is proposed for LTI systems. \cite{Lemmonbook,L_2_selftriggered} continued the work of \cite{lemmonlinearL_2fullstate} in more details. \cite{Lemmonbook} proposes an $\mathcal{L}_2$-gain performance-preserving triggering condition for a class of nonlinear affine systems. \cite{L_2_selftriggered} considers LTI systems and derives an explicit lower bound on the sampling periods. In this reference, the disturbance is assumed to be norm bounded by a linear function of the state norm. This condition is then relaxed in \cite{L_2_selftriggered_disturbance}. 
Reference \cite{L2_multiagent} considers the $\mathcal{L}_2$-gain of distributed multi agent systems under event-triggered agreement protocols. 
Reference \cite{L2_Wang_conf} proposes an 
event-triggered mechanism for distributed network linear systems and guarantees finite gain $\mathcal{L}_2$-stability  in the presence of packet data dropouts. 
Reference \cite{passive_input_output} considers passive systems and proposes a triggering condition that guarantees finite gain $\mathcal{L}_2$-stability when the external disturbance is bounded and shows that their approach preserves stability under constant network induced delays
or delays with bounded jitters. Reference \cite{passive_delay} extends the results of \cite{passive_input_output} to systems with constant network induced delays
or time-varying delays with bounded jitters. \cite{Dolk_Lp} proposes a dynamic triggering condition for the centralized state feedback event-based control of nonlinear network control systems with guaranteed $\mathcal{L}_p$-stability. Reference \cite{Dolk_LP_ieee} extends the work of \cite{Dolk_Lp} to the output feedback and decentralized case. 

This note tackles the finite gain $\mathcal{L}_2$-stability problem of event-triggered systems. Our primary interest is to propose a novel triggering scheme that, compared to the above mentioned works, solves the $\mathcal{L}_2$-gain preserving event design problem for a wider class of systems. Indeed, we consider a rather general class of nonlinear system model with the sole assumption of satisfying a rather mild local Lipschitz continuity condition. Taking exogenous disturbances together with measurement errors as inputs, our proposed triggering condition is obtained based on the assumption that the system is input-to-state stable (ISS). 
We assume that the disturbance term is originated from structural uncertainties in the system model and is norm bounded by some locally Lipschitz-continuous function of state. This assumption is rather mild and more general than previous references. For example, in the framework of self-triggered control, \cite{L_2_selftriggered} considers a similar $\mathcal{L}_2$ problem to the one studies here, but assumes that the norm of disturbance is bounded by a linear function of the state norm. Unfortunately, dealing with uncertainties in the event-triggered context is non-trivial. Indeed the triggering mechanism is designed to update the actuators whenever measurement errors are above a pre-established threshold. In the absence of disturbances, the error originates during the intersample as the difference between the present value of the state and its last sampled value. 
In the presence of exogenous disturbance, however, the error is also driven by the disturbance term making it difficult to design an effective triggering condition.

The ISS assumption implies working with bounded inputs and therefore 
suggests the need to consider \emph{small signals} in some sense. To formalize this concept, we present our results
using
an extension of the classical input-output theory of systems with modified input spaces, referred to as {\it local} (or {\it small signal}) input-output stability introduced in  \cite{marquez_local}.

Therefore we can state our contributions as follows: 
First, given a nonlinear plant and a previously designed full-information controller that satisfies a ({\it local}) $\mathcal{L}_2$ performance bound, we present conditions under which the same norm is guaranteed when the controller is implemented in an event-triggered fashion. The importance of this problem lies on the fact that many control systems can be designed to satisfy a tight $\mathcal{L}_2$-gain condition for small-size input signals, although the same gain may not hold when signal amplitude becomes large. We also show that the resulting sequence of sampling times is a uniformly isolated sequence, and hence inter-event periods are strictly nonzero. Finally, we show that, in the absence of disturbances, the system is globally asymptotically stable.

Our second contribution consists of a modified triggering condition to limit the average frequency of triggerings without violating the $\mathcal{L}_2$ property of the resulting system. Indeed, reducing the number of triggering instants over a time interval (average sampling frequency) may be important. For example, high data load on a communication channel over a finite time interval may result in undesirable challenges such as data packet drop out and/or transmission delay. Compared to the existing event-triggered and time-regularized strategies, our method is shown to be efficient in terms of transmission rate, even when the system trajectories are close to the origin, see, {\it{e.g.,}} Examples \ref{exmple2}, \ref{example3}. In addition, our strategy improves existing results, see, {\it e.g.}, \cite{event_time,Girard,me_iet,Dolk_Lp,Dolk_LP_ieee,Postuyan_GFW}, in that the increase in inter sampling-times can be designed a-priori, at least for a desired period of time (or a desired number of triggering iterations). By contrast, in the approaches in \cite{event_time,Girard,me_iet,Dolk_Lp,Dolk_LP_ieee,Postuyan_GFW} the intersampling increase is not estimated quantitatively. We also show that there is a trade-off between intersampling improvement and stability of zero-input system, in the sense that enlarging intersampling periods results in practical sense stability rather than the classical notion of stability.

Our work can be distinguished from those of \cite{lemmonlinearL_2fullstate,Lemmonbook,L_2_selftriggered,L_2_selftriggered_disturbance,L2_multiagent,passive_delay,passive_input_output,Dolk_Lp,Dolk_LP_ieee} as follows: First, we provide a general treatment of input-output performance preservation for event-based feedback systems. Restricting inputs to the space of small size signals, in the sense defined later, we investigate finite gain local $\mathcal{L}_2$-stability of a system with a controller implemented using an event-based mechanism when the original continuous-time system is also finite gain locally $\mathcal{L}_2$-stable. As we will show later, the need of the small signal approach in our study arises from the ISS assumption on the event-based system. A similar but non-local input-output stability problem is considered in the aforementioned references but for a more restrictive class of systems. References \cite{lemmonlinearL_2fullstate,L_2_selftriggered,L_2_selftriggered_disturbance} address the $\mathcal{L}_2$-stability problem of linear systems. \cite{Lemmonbook} extends these results to a class of nonlinear \emph{affine} systems. 
\cite{L_2_selftriggered,L_2_selftriggered_disturbance,L2_multiagent,passive_delay,passive_input_output} consider $\mathcal{L}_2$-gain performance without restriction on the input space ({\it i.e.} the non-local problem).
We frame our work in the context of the  \emph{dissipativity} introduced in \cite{Willems}, but restricting the class of input functions in a way that fits our needs for a local stability theory.

Other references employing dissipative include reference \cite{passive} that studies (non-local) stability of passive systems. References \cite{passive_input_output}, \cite{passive_delay} extend and generalize the results of \cite{passive} to systems with disturbances to guarantee finite gain $\mathcal{L}_2$-stability of the passive system. In these works, the authors assume that the (dynamic) controller communicates continuously to the actuators.
In \cite{L_infty_gain}, it is shown how this assumption can be relaxed by introducing a second triggering mechanism after the controller output. Here we avoid continuous data transfer between controller and plant by choosing a static controller. 
Also compared to \cite{passive_input_output}, \cite{passive_delay}, we address the problem in a more general nonlinear setup. Indeed, except the more general assumption of output feedback, these references focus on passive systems which will be covered in our work as a special case. Furthermore, the desired output in these references is assumed to be the same as the measured output that is also relaxed in our formulation.  

References \cite{Dolk_Lp,Dolk_LP_ieee} propose a time-regularized triggering method for the $\mathcal{L}_p$-stability of nonlinear systems suggesting that the triggering condition is checked only after some specific time has passed since the most recent triggering instant. In this paper we consider a fully event-triggered mechanism which enjoys several advantages compared to the time-regularized approach, {\it{e.g.,}} when the trajetories of event-based control system converge to the origin, the time-regularized triggering often reduces to traditional periodic sampling (see {\it e.g.,} \cite{Dolk_LP_ieee,Dolk_Lp,event-separation} and the examples therein). However, this undesirable behavior is not present under our proposed triggering strategy.

The remainder of the paper is organized as follows. We begin with introducing our notations and preliminary definitions needed throughout the paper. In section \ref{section problem statement} we formulate the main problem to be solved. In sections \ref{section main result} we provide the main results on the local $\mathcal{L}_2$-gain stability of nonlinear systems with state-dependent disturbances. In section \ref{section improve inter-execution time} we introduce a new triggering condition based on the structure of our problem statement to improve the inter-execution times. We also find the amount of increase in the inter-event times' lower bound compared to the old scenario. Illustrative examples are given in section \ref{section examples} to support the results of sections \ref{section main result}. The proofs are mostly given in the Appendix except the short ones that are provided in the body text.

\section{Preliminaries and Problem Statement}\label{section problem statement}

\subsection{Notation and Definitions}
Throughout the paper  $\mathbb{R}$ and $\mathbb{Z}$ represent the field of real numbers and the set of integers, respectively. 
$\mathbb{R}_{\geq0}$, $\mathbb{Z}_{\geq0}$, $\mathbb{R}_{>0}$ 
and $\mathbb{Z}_{>0}$ are the sets of nonnegative and positive elements of $\mathbb{R}$ and $\mathbb{Z}$.
$\mathbb{R}^n$ is the set of n-dimensional vectors with elements in $\mathbb{R}$ and $|x|$ is the Euclidean norm of column vector $x\in\mathbb{R}^n$. ${|x|}_{\infty}$ represents the infinity norm of vector $x=(x_1,x_2,\ldots,x_n)^T\in\mathbb{R}^n$, ${|x|}_{\infty}=\smash{\displaystyle{\max}_{1\leqslant i\leqslant n}\{|x_i|\}}$ and 
$\mathbb{R}_{\geq0}^2$ the causal triangular sector of $\mathbb{R}^2$ defined as $\mathbb{R}_{\geq0}^2=\{(x_2,x_1)\in\mathbb{R}^2|x_2\geqslant x_1\}$. 
By $[x^T~y^T]^T$ we denote the \emph{stack} column vector in $\mathbb{R}^{n+m}$, where $x\in \mathbb{R}^n$ and $y\in \mathbb{R}^m$. 
${||w||}_{2}$ and ${||w||}_{\infty}$  denote the $2$-norm and supremum-norm of function $w$, respectively, defined as ${||w||}_{2} =  (\int_{0}^{\infty}{|w(t)|^{2} dt})^{\frac{1}{2}}$ and ${||w||}_{\infty} =  \operatorname{ess}\sup_{t\in \mathbb{R}_{\geq 0}}|w(t)|$. $\mathcal{L}_2(\mathbb{R}_{\geq0})$ is the space of measurable functions $w$ with bounded $2$-norm. Also $\mathcal{L}_{\infty}(\mathbb{R}_{\geq0})$ is the space of measurable functions $w$ with bounded supremum norm. 
We say that a function is of class ${\bf{C}}^0$ (respectively ${\bf{C}}^1$) if it is continuous (respectively continuously differentiable). 
A function $f:\mathbb{R}^{n} \mapsto\mathbb{R}^p$ is said to be locally Lipschitz-continuous in an open set $B$, if for each $z\in B$ there exist $L_f\in\mathbb{R}_{>0}$ and $r\in\mathbb{R}_{>0}$ such that $|f({{x}})-f(\tilde{{x}})|\leqslant L_f|{{x}}-{\tilde{{x}}}|$ for all ${{x}},{\tilde{{x}}}\in \{y\in B|~|y-z|<r\}$. 
We also say that $f$ is Lipschitz-continuous in a set $D$ if there exists $L_f\in\mathbb{R}_{>0}$ (called the Lipschitz constant of $f$ on $D$) such that $|f({{x}})-f(\tilde{{x}})|\leqslant L_f|{{x}}-{\tilde{{x}}}|$ for all ${{x}},{\tilde{{x}}}\in D$. A function $\alpha:[0,a)\mapsto \mathbb{R}_{\geq 0}$
belongs to class $\mathcal{K}$ if it is strictly increasing and $\alpha(0)=0$. A class $\mathcal{K}$ function $\alpha$ belongs to class $\mathcal{K}_{\infty}$ if $a=\infty$ and $\alpha(r)\rightarrow \infty$ as $r\rightarrow \infty$.  

Throughout the paper we consider a nonlinear system $\mathscr{G}$ defined as follows: 
\begin{equation}\label{eq sys}
\mathscr{G}:
\begin{cases}
\dot{x}=f(x,u,w)\\
z=h(x,w)
\end{cases} 
\end{equation}
where $x\in\mathbb{R}^n$ represents the state, $u\in\mathscr{U}\subseteq\mathbb{R}^m$ the control input,
$w\in \mathscr{W}\subseteq\mathbb{R}^q$ the exogenous disturbance, and $z\in\mathbb{R}^p$ the measured output. 
We assume that $f$ and $h$ are class ${\bf{C}}^0$ and $f(0,0,0)=0$, $h(0,0)=0$ so that $x=0$ is an equilibrium point of zero-input system. Moreover, we will assume the state $x$ evolves on an open subset of $\mathbb{R}^n$ containing the origin.
We also assume that $\mathscr{G}$ is driven from initial conditions $x_0=x(t_0)$ and the inputs $u$ and $w$ are applied at time $t = t_0^+$. The state transition function for the system $\mathscr{G}$ is the function $\Phi: \mathbb{R}^2_{\geq0}\times\mathbb{R}^n\times\mathscr{U}\times \mathscr{W}\mapsto\mathbb{R}^n$ satisfying $x_0=\Phi(t_0,t_0,x_0,u,w)$ and $x(t)=\Phi(t_0,t,x_0,u,w)$ for all $(t,t_0)\in \mathbb{R}^2_{\geq0}$, $x_0\in\mathbb{R}^n$ and $w\in \mathscr{W}$.
\\
Input-output stability is a key tool in the rest of this note. The classical definitions of the input-output stability can be found in many references, see, {\it e.g.,} \cite{vidyasagar}. However, the results are not applicable to the systems with norm bounded input space. Instead, we build our theory using the local version of input-output stability introduced in \cite{marquez_local} and summarized as follows:
\\
In the next definitions we exploit the concept of \emph{relations} as a traditional tool to state the local stability criteria. Equivalently, one can define the input-output stability as a property of the operators. We recall that given two nonempty sets $A_1$ and $A_2$, a relation $\mathscr{R}$ on $A_1\times A_2$ is any subset of the Cartesian product $A_1\times A_2$.
	\begin{deff}
		Let $A_1\times A_2$ be the Cartesian product of two sets $A_1$ and $A_2$. We denote by $P_i:A_1\times A_2\rightarrow A_i$, $i=1,2$ the \emph{evaluation map} at $i$ defined as $P_i(x_1,x_2)=x_i$, $i=1,2$.  	
	\end{deff}
	\begin{deff}\label{wq}
		We define the set $\mathscr{W}_Q\subset\mathcal{L}_2(\mathbb{R}_{\geq0})$ as follows:
		\begin{equation}\label{def_W_Q}
			\mathscr{W}_Q=\{w\in\mathcal{L}_2(\mathbb{R}_{\geq0})|{\|w\|}_{\infty}<Q\},
		\end{equation}
		where $Q\in\mathbb{R}_{>0}$. We note that $\mathscr{W}_Q$, which is a subset of $\mathcal{L}_2(\mathbb{R}_{\geq0})\cap\mathcal{L}_\infty(\mathbb{R}_{\geq0})$, is not a linear space in general since there exists elements $x,y\in\mathscr{W}_Q$ such that $x+y\notin\mathscr{W}_Q$.
	\end{deff}
	We remark that the triplet $(\mathcal{L}_2(\mathbb{R}_{\geq 0}),{\|\cdot\|}_2,{\|\cdot\|}_{\infty})$ consisting of linear space $\mathcal{L}_2(\mathbb{R}_{\geq 0})$ and the norms ${\|\cdot\|}_{2}$ and ${\|\cdot\|}_{\infty}$ is a {\it binormed linear space}, where ${\|\cdot\|}_{2}$ and ${\|\cdot\|}_{\infty}$ are the {\it primary} and {\it secondary} norms of the space $\mathcal{L}_2(\mathbb{R}_{\geq 0})$. $\mathscr{W}_Q$ is then the subset of $\mathcal{L}_2(\mathbb{R}_{\geq 0})$ consisting of functions with secondary norm less than $Q\in\mathbb{R}_{>0}$.
	\begin{deff}
		A relation $\mathscr{R}$ on $\mathcal{L}_2(\mathbb{R}_{\geq0})\times \mathcal{L}_2(\mathbb{R}_{\geq0})$ is said to be $\mathscr{W}_Q$-stable if the evaluation map at $2$ is a bounded subset of $\mathcal{L}_2(\mathbb{R}_{\geq0})$ whenever the evaluation map at $1$ belongs to the set $\mathscr{W}_Q$.
	\end{deff}
	\begin{deff}
		The system $\mathscr{G}$ defined in (\ref{eq sys}) is said to be {\it locally $\mathcal{L}_2$-stable} if for any $w\in\mathscr{W}_Q$, the relation $\mathscr{R}\doteq \{(w,z)\in\mathcal{L}_2(\mathbb{R}_{\geq0})\times \mathcal{L}_2(\mathbb{R}_{\geq0})\}$ 
		is $\mathscr{W}_Q$-stable.
	\end{deff}
In the next definition, we provide a local version of finite gain $\mathcal{L}_2$-stability \footnote{See \cite{vidyasagar} for the classical finite gain $\mathcal{L}_2$-stability definition.}, a deviation from the classical definition by restricting the spaces of admissible inputs and initial conditions to the sets $\mathscr{W}_Q$ (defined in Definition \ref{wq}) and
\begin{eqnarray}\label{initial_cond_set}
\mathscr{X}_0\doteq  \{r\in\mathbb{R}^n|~|r|\leqslant\varepsilon\in\mathbb{R}_{>0}\},
\end{eqnarray}
respectively. 
\begin{deff}\label{def gain}The system $\mathscr{G}$ described in (\ref{eq sys}) is said to be finite gain locally $\mathcal{L}_2$-stable and has the local $\mathcal{L}_2$-gain less than or equal to $\Gamma$, if it is locally $\mathcal{L}_2$-stable and there exist finite constants $\eta\in \mathbb{R}_{\geq0}$, $\Gamma\in\mathbb{R}_{>0}$ and positive semi-definite ${\bf{C}}^0$ function $\mu$ such that for any $(T,t_0)\in\mathbb{R}_{\geq0}^2$, any $w\in\mathscr{W}_{Q}$ and any $x_0\in\mathscr{X}_0\subset\mathbb{R}^n$
\begin{equation}\label{practical eq}
\int_{t_0}^{T}{|z(s)|^2}ds\leqslant\Gamma^2\int_{t_0}^{T}{|w(s)|^2}ds+\mu(x_0)+\eta,
\end{equation}
where $z(s)=h(x(s),w(s))$, $x(s)=\Phi(t_0,s,x_0,u,w)$. We shall denote the local $\mathcal{L}_2$-gain of system $\mathscr{G}$ by ${\|\mathscr{G}\|}_{\mathcal{L}_2}$. We also say that $\mathscr{G}$ is finite gain locally $\mathcal{L}_2$-stable with zero bias if $\eta=0$ in (\ref{practical eq}).	
\end{deff}
The following theorem provides a sufficient condition to estimate an upper bound on the local disturbance attenuation $\mathcal{L}_2$-gain of system $\mathscr{G}$ in the context of dissipative systems theory introduced by \cite{Willems}.
\begin{thm}\label{thm gain}
	The nonlinear system $\mathscr{G}$ is finite gain locally $\mathcal{L}_2$-stable with zero bias and has ${\|\mathscr{G}\|}_{\mathcal{L}_2}\leqslant {\Gamma}$, provided there exist a positive definite ${\bf{C}}^1$ function $V$ and a control input $u\in\mathscr{U}$ such that for all $w \in \mathscr{W}_{Q}$
	\begin{equation}\label{eq H}
		H_{\Gamma}(V,u)\doteq \nabla V(x)\cdot f(x,u,w)-{\Gamma}^2 |w|^2+|h(x,w)|^2 \leqslant 0.
	\end{equation}
\end{thm}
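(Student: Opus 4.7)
The plan is to run the standard dissipativity argument: integrate the pointwise inequality $H_{\Gamma}(V,u)\leq 0$ along a trajectory of $\mathscr{G}$ and use positive definiteness of $V$ to absorb the boundary term into the bias-free form of (\ref{practical eq}).

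First I would fix an arbitrary $(T,t_0)\in\mathbb{R}_{\geq 0}^{2}$, $x_{0}\in\mathscr{X}_{0}$ and $w\in\mathscr{W}_{Q}$, and let $x(t)=\Phi(t_{0},t,x_{0},u,w)$ be the corresponding state trajectory. By the chain rule, $\tfrac{d}{dt}V(x(t))=\nabla V(x(t))\cdot f(x(t),u,w(t))$ a.e.\ on $[t_{0},T]$; this is legitimate because $V\in\mathbf{C}^{1}$ and $x(\cdot)$ is the (absolutely continuous) solution carried by the state-transition function. Substituting into the hypothesis and setting $z(t)=h(x(t),w(t))$ gives the pointwise dissipation inequality
$$\dot{V}(x(t))\;\leq\;\Gamma^{2}\,|w(t)|^{2}-|z(t)|^{2}.$$

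Integrating from $t_{0}$ to $T$ and rearranging yields
$$\int_{t_{0}}^{T}|z(s)|^{2}\,ds\;\leq\;\Gamma^{2}\!\int_{t_{0}}^{T}|w(s)|^{2}\,ds\;+\;V(x_{0})\;-\;V(x(T)).$$
Since $V$ is positive definite, $V(x(T))\geq 0$ and the last term may be dropped, producing exactly the inequality (\ref{practical eq}) with the identifications $\mu(x_{0})=V(x_{0})$ (positive semi-definite and $\mathbf{C}^{0}$, in fact $\mathbf{C}^{1}$) and $\eta=0$. This establishes the finite-gain local $\mathcal{L}_{2}$-stability with zero bias and gain at most $\Gamma$.

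To finish, one must still verify the underlying $\mathscr{W}_{Q}$-stability of the $(w,z)$ relation required by the local $\mathcal{L}_{2}$-stability definition. Taking $T\to\infty$ in the displayed bound and noting that $w\in\mathscr{W}_{Q}\subset\mathcal{L}_{2}(\mathbb{R}_{\geq 0})$ while $V(x_{0})<\infty$, one obtains $\int_{t_{0}}^{\infty}|z(s)|^{2}\,ds<\infty$, so $z\in\mathcal{L}_{2}(\mathbb{R}_{\geq 0})$, which is precisely the required boundedness of the evaluation map $P_{2}$. I do not anticipate a genuine obstacle: the only care needed is the justification of $t\mapsto V(x(t))$ being differentiable along the Carath\'eodory solution, which is immediate from $V\in\mathbf{C}^{1}$ and the standing continuity assumptions on $f,h$ that make $x(\cdot)$ absolutely continuous on compact subintervals.
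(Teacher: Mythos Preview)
Your proposal is correct and follows essentially the same approach as the paper: integrate the dissipation inequality $H_{\Gamma}(V,u)\leqslant 0$ along trajectories, drop $V(x(T))\geqslant 0$ by positive definiteness, and read off (\ref{practical eq}) with $\mu=V$ and $\eta=0$. The paper compresses this into a single sentence; your added remarks on absolute continuity and the explicit check of $\mathscr{W}_Q$-stability are reasonable elaborations but do not change the argument.
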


\begin{proof}
	The result is readily obtained by integration of (\ref{eq H}), positive definiteness of $V(x)$ and Definition \ref{def gain}. 
\end{proof}
\begin{rmk}
	If system $\mathscr{G}$ is reachable from $x_0$, condition (\ref{eq H}) is necessary and sufficient for finite gain local $\mathcal{L}_2$-stability of $\mathscr{G}$ with zero bias and ${\|\mathscr{G}\|}_{\mathcal{L}_2}\leqslant {\Gamma}$.  
\end{rmk}

\begin{proof}
	The result follows directly from (\cite{james}, Theorem 2.1). 
\end{proof}

Next we investigate the input-to-state stability of system $\mathscr{G}$. We shall assume that the measurement of state is affected by an error $e$. As a result, designing the state feedback controller $u=k(x)$, where $k$ is of class ${\bf{C}}^0$ and satisfy $k(0)=0$, the implemented control law will be $k(x+e)$. The corresponding closed loop system with perturbed measurement is therefore
\begin{equation}\label{eq sys-closed}
\mathscr{G}_{e}:
\begin{cases}
\dot{x}=f(x,k(x+e),w),\\
z=h(x,w).\\
\end{cases} 
\end{equation}    
The measurement error $e$ is considered as an input of the system $\mathscr{G}_e$ in the next definition.
\begin{deff}\label{def ISS}
The ${\bf{C}}^1$ function $V: \mathbb{R}^n\mapsto \mathbb{R}_{\geq0}$ is an ISS Lyapunov function for system $\mathscr{G}_{e}$ defined in (\ref{eq sys-closed}) if there exist class $\mathcal{K}_\infty$ functions $\sigma$, $\sigma_i$, $\gamma_i$ ($i=1,2$) such that
\begin{equation}\label{ISS V eq}
\sigma_1(|\xi|)\leqslant V(\xi)\leqslant \sigma_2(|\xi|)
\end{equation}
holds for all $\xi\in\mathbb{R}^n$, and 
\begin{equation}\label{ISS eq}
\nabla V(\xi)\cdot f(\xi,k(\xi+\mu),w)\leqslant-\sigma(|\xi|)
\end{equation}
for any $\xi \in \mathbb{R}^n$, any $\mu \in \mathbb{R}^n$ and any $w\in \mathscr{W}_Q$ such that $|\xi|\geqslant \gamma_1(|\mu|)+\gamma_2(|w|)$.
\end{deff}
The next theorem suggests an equivalent condition to the above given inequality (\ref{ISS eq}). We will use this theorem later to develop our main theorem in Section \ref{section main result}.  
\begin{thm}\label{thm_equiv}
	The ${\bf{C}}^1$ function $V$ is an ISS Lyapunov function for system $\mathscr{G}_{e}$ if and only if (\ref{ISS V eq}) holds and there exist class $\mathcal{K}_{\infty}$ functions $\bar{\sigma}$ and $\beta_i$ ($i=1,2$) so that
	\begin{eqnarray}\label{last_help}
		\nabla V(\xi)\cdot f(\xi,k(\xi+\mu),w)\leqslant- \bar{\sigma}(|\xi|)+\beta_1(|\mu|)+\beta_2(|w|)
	\end{eqnarray}
	for any $\xi \in \mathbb{R}^n$, any $\mu \in \mathbb{R}^n$ and any $w\in \mathscr{W}_Q$.
\end{thm}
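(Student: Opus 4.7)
Since (\ref{ISS V eq}) appears in both characterizations, the task reduces to showing that the implication-form decay (\ref{ISS eq}) and the dissipation-form decay (\ref{last_help}) are equivalent.

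For the direction $(\ref{last_help})\Rightarrow(\ref{ISS eq})$, the idea is to absorb $\beta_1(|\mu|)+\beta_2(|w|)$ into a fraction of $\bar\sigma(|\xi|)$ once $|\xi|$ crosses a suitable gain margin. I would set $\gamma_i(r):=\bar\sigma^{-1}\bigl(4\beta_i(r)\bigr)$, $i=1,2$, which are class $\mathcal{K}_\infty$ since $\bar\sigma^{-1}$ and the $\beta_i$ are. Whenever $|\xi|\geq \gamma_1(|\mu|)+\gamma_2(|w|)$, one has $|\xi|\geq \gamma_i(\cdot)$ and hence, by monotonicity of $\bar\sigma$, $\beta_i(\cdot)\leq \tfrac{1}{4}\bar\sigma(|\xi|)$ for each $i$. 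Substituting into (\ref{last_help}) yields $\nabla V(\xi)\cdot f(\xi,k(\xi+\mu),w)\leq -\tfrac{1}{2}\bar\sigma(|\xi|)$, which is (\ref{ISS eq}) with $\sigma:=\tfrac{1}{2}\bar\sigma$.

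For the converse I would partition $(\xi,\mu,w)$-space along the threshold $|\xi|=\gamma_1(|\mu|)+\gamma_2(|w|)$. On the large-$|\xi|$ side, (\ref{ISS eq}) directly gives $\nabla V\cdot f\leq -\sigma(|\xi|)$, so (\ref{last_help}) holds with $\bar\sigma:=\sigma$ and any nonnegative $\beta_i$. On the complementary small-$|\xi|$ side, $|\xi|$, $|\xi+\mu|$ and $|w|$ are simultaneously bounded by explicit expressions in $(|\mu|,|w|,Q)$; so by continuity of $(\xi,\mu,w)\mapsto \nabla V(\xi)\cdot f(\xi,k(\xi+\mu),w)$ the quantity
\[
\Psi(s,t):=\sup\bigl\{\,\nabla V(\xi)\cdot f(\xi,k(\xi+\mu),w)\,:\,|\xi|\leq \gamma_1(s)+\gamma_2(t),\ |\mu|\leq s,\ |w|\leq t\,\bigr\}
\]
is finite, continuous and nondecreasing on $\mathbb{R}_{\geq 0}\times[0,Q]$, with $\Psi(0,0)=0$ thanks to the equilibrium identities $f(0,0,0)=0$ and $k(0)=0$. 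Majorizing $\Psi(s,t)\leq \tilde\Psi(2s)+\tilde\Psi(2t)$ for some $\tilde\Psi\in\mathcal{K}_\infty$ and setting $\beta_i(r):=\tilde\Psi(2r)+\sigma\bigl(2\gamma_i(r)\bigr)$, the elementary bound $\sigma(a+b)\leq \sigma(2a)+\sigma(2b)$ gives $-\sigma(|\xi|)+\beta_1(|\mu|)+\beta_2(|w|)\geq \Psi(|\mu|,|w|)\geq \nabla V\cdot f$ throughout the small-$|\xi|$ region, completing the verification of (\ref{last_help}) everywhere.

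The main obstacle is the construction of the separable $\mathcal{K}_\infty$ majorant $\tilde\Psi$ of $\Psi$. This is the step that genuinely relies on the continuity of $\nabla V$, $f$ and $k$, on the supremum-norm bound $|w|\leq Q$ available on $\mathscr{W}_Q$ (without which $\Psi$ would not be defined for arbitrary $t$), and on the equilibrium conditions at the origin that guarantee $\Psi(0,0)=0$. A standard Sontag--Wang-type smoothing-and-domination procedure then produces a $\mathcal{K}_\infty$ function dominating the elementary sup and completes the argument.
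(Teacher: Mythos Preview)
Your proposal is correct and follows essentially the same strategy as the paper: the direction $(\ref{last_help})\Rightarrow(\ref{ISS eq})$ is handled identically via $\gamma_i=\bar\sigma^{-1}(4\beta_i)$, and for the converse both you and the paper take a supremum of the Lyapunov derivative over the region $|\xi|\leq\gamma_1(|\mu|)+\gamma_2(|w|)$ and then split it into a sum of two single-variable $\mathcal{K}_\infty$ terms. The only cosmetic difference is that the paper folds $\sigma(|\xi|)$ into the supremum from the outset, defining $\bar\beta(s,t)=\max\{\nabla V\cdot f+\sigma(|\xi|):|\mu|\leq s,\,|w|\leq t,\,|\xi|\leq\gamma_1(|\mu|)+\gamma_2(|w|)\}$ and then simply taking $\beta_i(r)=\bar\beta(r,r)$ via the diagonal trick, which avoids your separate compensation terms $\sigma(2\gamma_i(r))$ and the appeal to a Sontag--Wang majorization.
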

Definition \ref{def ISS} provides a characterization of the notion of input-to-state stability, rather than the ISS definition, using Lyapunov-like conditions. 
Next theorem shows that these conditions are necessary and sufficient for input-to-state stability.
\begin{thm}\label{thm ISS}
The closed loop system $\mathscr{G}_{e}$ defined in (\ref{eq sys-closed}) is ISS with respect to measurement error $e$ and disturbance $w$ iff there exists an ISS Lyapunov function $V$ satisfying (\ref{ISS V eq}), (\ref{ISS eq}).
\end{thm}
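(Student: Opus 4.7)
The plan is to prove the two implications of this biconditional separately, treating it as the Sontag--Wang characterization of ISS via Lyapunov functions, specialized to the closed-loop system $\mathscr{G}_{e}$ with the pair $(e,w)$ playing the role of the disturbance input and $w$ restricted to the set $\mathscr{W}_Q$.

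For the sufficiency direction, assume $V$ satisfies (\ref{ISS V eq})--(\ref{ISS eq}). I would first invoke Theorem \ref{thm_equiv} to replace (\ref{ISS eq}) by the equivalent dissipation inequality
\[
\nabla V(\xi)\cdot f(\xi,k(\xi+\mu),w)\leqslant -\bar{\sigma}(|\xi|)+\beta_1(|\mu|)+\beta_2(|w|).
\]
Along a trajectory of $\mathscr{G}_e$, I would split the time axis into the ``outer'' region where $\bar{\sigma}(|x(t)|)\geqslant 2(\beta_1(\|e\|_\infty)+\beta_2(\|w\|_\infty))$, on which $\dot{V}\leqslant -\tfrac12 \bar{\sigma}\circ\sigma_2^{-1}(V)$ and the comparison lemma forces $V(x(t))$ to decay along a $\mathcal{KL}$ bound, and the ``inner'' region where $|x(t)|$ is bounded by a fixed class $\mathcal{K}_\infty$ function of $\|e\|_\infty+\|w\|_\infty$. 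Combining the two regimes and inverting $\sigma_1$, $\sigma_2$ yields the usual ISS trajectory estimate $|x(t)|\leqslant \beta(|x_0|,t-t_0)+\gamma_e(\|e\|_\infty)+\gamma_w(\|w\|_\infty)$, establishing ISS on the admissible input class.

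For the necessity direction, assuming $\mathscr{G}_{e}$ is ISS with respect to $(e,w)$, the construction follows Sontag--Wang: combining input-to-state stability with the converse Lyapunov theorem for $0$-GAS yields a smooth positive definite $V_0$ decreasing under zero input, and Sontag's ``changing supply rates'' lemma then reshapes $V_0$ into a ${\bf C}^1$ function $V$ for which the dissipation inequality acquires the correction terms $\beta_1(|\mu|)+\beta_2(|w|)$. Finally, applying Theorem \ref{thm_equiv} in the reverse direction converts this into the Definition \ref{def ISS} form (\ref{ISS eq}). The restriction $w\in\mathscr{W}_Q\subset\mathcal{L}_\infty$ is harmless: any Lyapunov estimate that holds for all essentially bounded $w$ restricts immediately to $\mathscr{W}_Q$.

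The main obstacle is the necessity implication, as it rests on the nontrivial converse Lyapunov machinery (smooth mollification of a nonsmooth Lyapunov candidate built from trajectories, combined with the changing-supply-rates procedure); the sufficiency argument, by contrast, is essentially a calculation. In the writeup I would sketch sufficiency in detail and, for necessity, outline the Sontag--Wang construction and cite the underlying converse theorem, noting that the local-in-disturbance setting $w\in\mathscr{W}_Q$ requires no modification beyond the inclusion $\mathscr{W}_Q\subset\mathcal{L}_\infty$.
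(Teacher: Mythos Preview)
Your proposal is correct and follows essentially the same route as the paper: both reduce the statement, via Theorem \ref{thm_equiv}, to the classical Sontag--Wang equivalence between ISS and the existence of an ISS Lyapunov function in dissipation form. The paper simply cites that result (\cite{ISS}, Theorem 1) without unpacking it, whereas you sketch the sufficiency argument and outline the converse-Lyapunov construction for necessity; this is a difference in level of detail, not in strategy.
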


\begin{proof}
The proof follows from Theorem \ref{thm_equiv} and (\cite{ISS}, Theorem 1). 
\end{proof}
	\begin{rmk}\label{rmk_unify}
			Later in Section \ref{section main result} our study will focus on the systems with disturbances norm bounded by some function of state , {\it i.e.,} $|w(t)|\leqslant\gamma_3(|x(t)|)$. 
			This assumption seems to be implied in Definition \ref{def ISS} as condition (\ref{ISS eq}) is valid for $\gamma_2(|w(t)|)\leqslant|x(t)|-\gamma_1(|e(t)|)$. Thus to prevent any possible redundancy of these conditions, we will unify them later in section \ref{section main result}. 
	\end{rmk}

\subsection{Problem Setup}
To state our problem we shall need to define a continuous-time version of system $\mathscr{G}_{e}$ defined in (\ref{eq sys-closed}) by assuming measurement error to be zero all the time. This system will be referred as $\mathscr{G}_{c}$ throughout the rest of this note.
Now assume the existence of a positive definite ${\bf{C}}^1$ function $V$ and a ${\bf{C}}^0$ function $k:\mathbb{R}^n\mapsto \mathbb{R}^m$ such that $H_{\Gamma}(V,k(x))\leqslant0$, {\it i.e.}, the state feedback control law $u=k(x)$ renders the continuous-time system $\mathscr{G}_c$ finite gain locally $\mathcal{L}_2$-stable with zero bias and ${\|\mathscr{G}_c\|}_{\mathcal{L}_2}\leqslant {\Gamma}$. We also assume the implementation of the control law to be performed in an event-based scheme in which an event detector decides when to update the control signal. As a consequence, the actuator receives an updated control signal at triggering instants $\{t_i|i\in \mathbb{Z}_{\geq 0}\}$, at which an event condition is satisfied. The first sampling instant can always be assumed to coincide with initial time $t_0$.  
A zero order hold device serves to maintain the controller signal constant between two successive sampling instants. Thus, between time instants $t_i$ and $t_{i+1}$, the controller signal is $k(x(t_i))$ and remains unchanged.
This enables us to define the measurement error $e(t)$ as the difference between the current value of state at the event detector, $x(t)$, and the last triggered value of state, $x(t_i)$, {\it i.e.}, 
\begin{equation}\label{error}
e(t)=x(t_i)-x(t),~~~t\in[t_i,t_{i+1}).
\end{equation}
It follows that the measurement error is zero at each sampling instants and its value is continuously monitored to check a triggering condition which, as we will see later, sets an upper bound on the norm of admissible measurement error.
Once the triggering condition holds, the system sends an updated signal to the actuator and resets the measurement error to zero. 

In \cite{tabuada} it is shown that in presence of an execution rule that restricts the measurement error to satisfy
\begin{equation}\label{trig old}
\beta_1(|e|)\leqslant {c}\bar{\sigma}(|x|),
\end{equation}
where ${c}\in(0,1)$, and if there exists an ISS Lyapunov function $V$ so that
\begin{equation}
\nabla V(x)\cdot f(x,k(x+e),0)\leqslant -\bar{\sigma}(|x|)+\beta_1(|e|),
\end{equation}
the system $\mathscr{G}_e$ with zero-input is globally asymptotically stable.

In general, the aforementioned triggering mechanism (\ref{trig old}) guarantees closed loop stability. However, it is by no means clear how it affects the {\it input/output} performance of the system. More specifically, in this paper, we are concerned with finite gain $\mathcal{L}_2$-stability performance. The purpose of this paper is then to present an input-output stability analysis of event-based systems. Departing from the event condition offered in \cite{tabuada}, we propose a condition which guarantees the finite gain local $\mathcal{L}_2$-stability of the system.

\section{$\mathcal{L}_2$-gain Performance of Event Triggered Nonlinear Systems}\label{section main result}  
In this section we present a novel event-triggering rule that ensures finite gain local $\mathcal{L}_2$-stability of the event-based system $\mathscr{G}_e$. The design of such a sampling rule is based on the following assumptions.
\begin{assumption}\label{assumption2}
There exist a positive definite ${\bf{C}}^1$ function $W$ and some $Q\in\mathbb{R}_{> 0}$ such that 
\begin{equation}\label{gain}
H_{\Gamma}(W,k(x))\leqslant 0,
\end{equation}
for all $w\in\mathscr{W}_Q$, where $\mathscr{W}_Q$ is defined in (\ref{def_W_Q}).
\end{assumption}
\begin{assumption}\label{assumption1}
		There exis a radially unbounded positive definite ${\bf{C}}^1$ function $V$ and class $\mathcal{K}_\infty$ functions $\bar{\sigma}$, $\beta_1$ satisfying
		\begin{eqnarray}{\label{assump_eq_V_dot}}
			\nabla V(\xi) \cdot f(\xi,k(\xi+\mu),w)\leqslant -\bar{\sigma}(|\xi|)+\beta_1(|\mu|)
		\end{eqnarray}  	
		for any $\xi \in \mathbb{R}^n$, any $\mu \in \mathbb{R}^n$ and any $w\in \mathscr{W}_Q$.
\end{assumption}
Recalling Theorem \ref{thm gain}, condition (\ref{gain}) ensures that the continuous-time system $\mathscr{G}_c$ is finite gain locally $\mathcal{L}_2$-stable with zero bias and has ${\|\mathscr{G}_{c}\|}_{\mathcal{L}_2}\leqslant\Gamma$.	
The following lemma describes the connection between Assumption \ref{assumption1} and the previously defined ISS concept. Indeed, we show that this assumption can be used to deal with unmodeled parameter uncertainties.
\begin{lem}\label{lem unify}
	(a) Assumption \ref{assumption1} holds if and only if there exists a radially unbounded positive definite ${\bf{C}}^1$ function $V$ satisfying
	$\nabla V(\xi) \cdot f(\xi,k(\xi+\mu),w)\leqslant -{\sigma}(|\xi|)$ for any $\xi \in \mathbb{R}^n$, any $\mu \in \mathbb{R}^n$ and any $w\in \mathscr{W}_Q$ such that $|\xi|\geqslant \gamma(|\mu|)$ for some $\sigma,\gamma\in\mathcal{K}_\infty$. (b) The later condition is satisfied when for any $w\in\mathscr{W}_Q$ the followings hold:
		(I) $V$ is an ISS Lyapunov function for the system $\mathscr{G}_{e}$,
			(II) there exist solutions $\gamma_3$, $\gamma_4\in\mathcal{K}_{\infty}$ to the inequality \begin{eqnarray}\label{cond_id}
			\gamma_4\circ(\gamma_{id}-\gamma_2\circ\gamma_3)(r)\geqslant r,
		\end{eqnarray}
		for all $r\in\mathbb{R}_{\geq0}$, where $\gamma_{id}$ is the identity function and $\gamma_2\in\mathcal{K}_{\infty}$ is defined in Definition \ref{def ISS},
		(III) disturbance is bounded through
		\begin{equation}\label{bound_on_w}
			|w(t)|\leqslant\gamma_3(|x(t)|)
			\end{equation}
		for all $t\in\mathbb{R}_{\geq0}$ where $x$ denotes the state of system $\mathscr{G}_{e}$ defined in (\ref{eq sys-closed}).		
\end{lem}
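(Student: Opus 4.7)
The plan is to treat parts (a) and (b) separately. Part (a) is a standard equivalence between two forms of the ISS-type decay estimate --- the additive dissipation bound (\ref{assump_eq_V_dot}) and the implication-form bound displayed in (a). Part (b) then shows that the state-dependent disturbance hypothesis lets an ordinary ISS Lyapunov function play the role of a $V$ satisfying the implication form, so that the conclusion of (a) applies.

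For the forward direction in (a), I would use the familiar ``halve the decay'' trick: set $\gamma := \bar\sigma^{-1}\circ(2\beta_1)\in\mathcal{K}_\infty$ and $\sigma := \bar\sigma/2$. Whenever $|\xi|\geq \gamma(|\mu|)$ one has $\beta_1(|\mu|)\leq \bar\sigma(|\xi|)/2$, and (\ref{assump_eq_V_dot}) reduces to $\nabla V\cdot f\leq -\sigma(|\xi|)$. For the converse, split $\mathbb{R}^n\times\mathbb{R}^n$ into the region where the implication form is active and its complement, and define
\begin{equation*}
\tilde M(s) := \sup\bigl\{[\nabla V(\xi)\cdot f(\xi,k(\xi+\mu),w)+\sigma(|\xi|)]_+ : |\mu|\leq s,\ |\xi|\leq \gamma(|\mu|),\ |w|\leq Q\bigr\}.
\end{equation*}
Continuity of $\nabla V,f,k$ and compactness of the indexing set render $\tilde M(s)$ finite and nondecreasing; because $V$ is positive definite, $\nabla V(0)=0$, hence $\tilde M(0)=0$. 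Dominating $\tilde M$ by a class-$\mathcal{K}_\infty$ function $\beta_1$ (if necessary by adding an identity term to ensure strict monotonicity and radial unboundedness) and setting $\bar\sigma:=\sigma$, both cases then yield the additive bound.

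For (b), I would insert the pointwise bound $|w|\leq \gamma_3(|\xi|)$ from (III) into the ISS condition (\ref{ISS eq}) supplied by (I). The sufficient condition for $|\xi|\geq \gamma_1(|\mu|)+\gamma_2(|w|)$ then becomes $(\gamma_{id}-\gamma_2\circ\gamma_3)(|\xi|)\geq \gamma_1(|\mu|)$. Hypothesis (II) is equivalent to $(\gamma_{id}-\gamma_2\circ\gamma_3)(r)\geq \gamma_4^{-1}(r)$ for all $r\geq0$, so whenever $|\xi|\geq \gamma_4\circ\gamma_1(|\mu|)$ monotonicity of $\gamma_4^{-1}\in\mathcal{K}_\infty$ yields $(\gamma_{id}-\gamma_2\circ\gamma_3)(|\xi|)\geq \gamma_4^{-1}(|\xi|)\geq \gamma_1(|\mu|)$. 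Invoking (I) delivers $\nabla V\cdot f\leq -\sigma(|\xi|)$, which is precisely the implication form with gain $\gamma := \gamma_4\circ\gamma_1$.

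The main obstacle is the converse in (a): producing a genuine class-$\mathcal{K}_\infty$ majorant $\beta_1$ from the bare implication bound. The two subtle points --- finiteness of the supremum defining $\tilde M$ and its vanishing at $s=0$ --- are precisely where the small-signal restriction $|w|\leq Q$ and the positive definiteness of $V$ ($\nabla V(0)=0$) enter, so the step relies in an essential way on the local/bounded-input framework of the paper.
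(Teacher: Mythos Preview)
Your proposal is correct and essentially follows the paper's approach. Part (a) in the paper is obtained by appeal to Theorem~\ref{thm_equiv}, whose proof uses precisely your halving trick for one direction and a supremum construction (there denoted $\bar\beta$) equivalent to your $\tilde M$ for the converse; for part (b) the paper derives the same gain $\gamma=\gamma_4\circ\gamma_1$ via the same chain $(\gamma_{id}-\gamma_2\circ\gamma_3)(|\xi|)\geq\gamma_4^{-1}(|\xi|)\geq\gamma_1(|\mu|)$ that you wrote.
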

Note that condition (\ref{cond_id}) is similar to $\delta$-admissible {\color{black}perturbation} provided in (\cite{dig_self_trig}, Definition 2).

We will need the following technical lemma to prove our main result. This lemma sets the stage for the design of the triggering condition required to achieve disturbance attenuation bound $\Gamma$ for the event-based system.
\begin{lem}\label{key_lem}
	Assumption \ref{assumption1} holds if and only if there exist a radially unbounded positive definite ${\bf{C}}^1$ function $V$ and class $\mathcal{K}_\infty$ functions $\hat{\sigma}$, $\sigma_0$, $\beta_0$, $\psi$, $\bar{\beta}_1$ and some $c\in(0,1)$ satisfying
	\begin{eqnarray}\label{lem_eq}
	\nabla V(\xi)\cdot f(\xi,k(\xi+\mu),w) \leqslant -\hat{\sigma}(|\xi|)-\sigma_0(|\xi|)\beta_0(|\mu|) 
	\end{eqnarray}
	for any $\xi \in \mathbb{R}^n$, any $\mu \in \mathbb{R}^n$ and any $w\in \mathscr{W}_Q$ such that $c\psi (|\xi|)\geqslant \bar{\beta}_1(|\mu|)$.
\end{lem}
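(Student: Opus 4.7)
The claim is a biconditional, which I will prove as two separate implications. The forward direction ($\Rightarrow$) is a short algebraic computation, while the reverse direction ($\Leftarrow$) reduces to a compactness argument.

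\emph{Forward direction.} Starting from Assumption \ref{assumption1}, I will specialize $c = 1/2$, $\psi = \bar{\sigma}$, $\bar{\beta}_1 = \beta_1$, so that the cone $c\psi(|\xi|) \geqslant \bar{\beta}_1(|\mu|)$ is precisely $\beta_1(|\mu|) \leqslant \bar{\sigma}(|\xi|)/2$. Inside this cone, Assumption \ref{assumption1} yields
\[
\nabla V(\xi)\cdot f(\xi,k(\xi+\mu),w) \leqslant -\bar{\sigma}(|\xi|)/2 = -\bar{\sigma}(|\xi|)/4 - \bar{\sigma}(|\xi|)/4.
\]
Squaring shows that the cone constraint $\beta_1(|\mu|) \leqslant \bar{\sigma}(|\xi|)/2$ is equivalent to $\bar{\sigma}(|\xi|)/4 \geqslant \sqrt{\bar{\sigma}(|\xi|)\beta_1(|\mu|)/8} = \sqrt{\bar{\sigma}(|\xi|)}\cdot\sqrt{\beta_1(|\mu|)/8}$, so the second quarter can be replaced by a negative product in $|\xi|$ and $|\mu|$. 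Chaining the bounds gives (\ref{lem_eq}) with $\hat{\sigma}(r) = \bar{\sigma}(r)/4$, $\sigma_0(r) = \sqrt{\bar{\sigma}(r)}$, $\beta_0(r) = \sqrt{\beta_1(r)/8}$, all manifestly of class $\mathcal{K}_\infty$ since $\bar{\sigma},\beta_1\in\mathcal{K}_\infty$.

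\emph{Reverse direction.} Assume (\ref{lem_eq}) and target the dissipation form of Assumption \ref{assumption1} with $\bar{\sigma} := \hat{\sigma}$; it remains to construct a $\beta_1 \in \mathcal{K}_\infty$ such that $\nabla V\cdot f + \hat{\sigma}(|\xi|) \leqslant \beta_1(|\mu|)$ holds for all $\xi,\mu\in\mathbb{R}^n$ and $w\in\mathscr{W}_Q$. Inside the cone $c\psi(|\xi|)\geqslant\bar{\beta}_1(|\mu|)$ the hypothesis already forces the left-hand side to be $\leqslant -\sigma_0(|\xi|)\beta_0(|\mu|) \leqslant 0$, so any nonnegative $\beta_1$ suffices there. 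Outside the cone, $|\xi| < \phi(|\mu|) \doteq \psi^{-1}(\bar{\beta}_1(|\mu|)/c)$, confining $(\xi,\mu,w)$ to a compact set parametrized by $|\mu|$. Since $V\in{\bf{C}}^1$ and $f,k\in{\bf{C}}^0$, the map $(\xi,\mu,w)\mapsto \nabla V(\xi)\cdot f(\xi,k(\xi+\mu),w) + \hat{\sigma}(|\xi|)$ is continuous, so its supremum
\[
M(r) \doteq \sup\bigl\{\nabla V(\xi)\cdot f(\xi,k(\xi+\mu),w) + \hat{\sigma}(|\xi|) : |\xi|\leqslant\phi(r),\,|\mu|\leqslant r,\,|w|\leqslant Q\bigr\}
\]
is finite and nondecreasing in $r$, with $M(0)=0$ because $\phi(0)=0$, $k(0)=0$, and $\nabla V(0)=0$ (the last follows from positive definiteness of $V$ together with $V\in{\bf{C}}^1$). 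Taking $\beta_1$ to be any class $\mathcal{K}_\infty$ majorant of $M$ — e.g.\ $\beta_1(r) := M(r) + r$, smoothed if necessary — then yields the desired global dissipation bound.

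\emph{Main obstacle.} The reverse direction is the subtler one. Inequality (\ref{lem_eq}) controls $\nabla V\cdot f$ only on the cone $c\psi(|\xi|)\geqslant\bar{\beta}_1(|\mu|)$; extending this bound to arbitrary $(\xi,\mu)$ without any Lipschitz or polynomial growth hypothesis on $f$, $k$, $V$ is possible only because outside that cone $|\xi|$ is itself bounded by $\phi(|\mu|)$. This collapses the otherwise unbounded complementary region to a compact set once $|\mu|$ is fixed, so that mere pointwise continuity delivers a finite supremum. Preserving the class $\mathcal{K}_\infty$ property of the resulting majorant requires $M(0)=0$, which in turn hinges crucially on the standing assumptions $k(0)=0$ and $\nabla V(0)=0$.
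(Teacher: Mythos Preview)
Your proof is correct, but both directions differ in route from the paper's argument.

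\textbf{Forward direction.} The paper does not fix $\sigma_0,\beta_0$ in advance. Instead, it takes \emph{arbitrary} $\sigma_0,\beta_0\in\mathcal{K}_\infty$ and $c\in(0,1)$, adds and subtracts $\sigma_0(|\xi|)\beta_0(|\mu|)$ to the right-hand side of Assumption~\ref{assumption1}, and observes that whenever $\beta_1(|\mu|)+\sigma_0(|\xi|)\beta_0(|\mu|)\leqslant c\bar\sigma(|\xi|)$ one obtains $\nabla V\cdot f\leqslant -(1-c)\bar\sigma(|\xi|)-\sigma_0(|\xi|)\beta_0(|\mu|)$. It then \emph{defines} $\psi(r)=\bar\sigma(r)/(1+\sigma_0(r))$ and $\bar\beta_1(r)=\max\{\beta_1(r),\beta_0(r)\}$, and checks that the cone $c\psi(|\xi|)\geqslant\bar\beta_1(|\mu|)$ implies the preceding constraint. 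Thus the paper produces exactly the functions appearing in the triggering rule~(\ref{execution rule})--(\ref{key_eq}), with the specific choice $\sigma_0(r)=L_fL_k\sigma_3(r)$, $\beta_0(r)=r$ used downstream in Theorem~\ref{thm main}. Your square-root construction ($\sigma_0=\sqrt{\bar\sigma}$, $\beta_0=\sqrt{\beta_1/8}$, $\psi=\bar\sigma$, $c=1/2$) is tidy and fully valid for the existential statement, but it would not deliver the particular $\psi,\bar\beta_1$ that the rest of the paper relies on; the paper's version is really ``for any $\sigma_0,\beta_0$ there exist $\psi,\bar\beta_1,\hat\sigma$'', which is what the application needs.

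\textbf{Reverse direction.} The paper simply drops the $-\sigma_0\beta_0$ term, notes that $\nabla V\cdot f\leqslant -\hat\sigma(|\xi|)$ on the cone $|\xi|\geqslant\psi^{-1}(\bar\beta_1(|\mu|)/c)$, and invokes Lemma~\ref{lem unify}(a) (itself a consequence of Theorem~\ref{thm_equiv}). Your compactness argument with the majorant $M(r)$ is precisely a self-contained re-derivation of that theorem, so the two agree in substance; you just unfold the cited result.
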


\textit{Triggering Condition:} Let $t_i$, $i\in\mathbb{Z}_{\geq 0}$, be the most recent sampling instant, the control signal is {\color{black}updated again} at $t_{i+1}$ defined by the following rule:
\begin{eqnarray}\label{execution rule}
\begin{array}{l}
\hspace*{-0.5em}
t_{i+1}^-=\inf\Big{\{}t\in\mathbb{R}_{\geq 0}|~t>t_i \bigwedge \bar{\beta}_1(|e(t)|)\geqslant c\psi(|x(t)|) \Big{\}},\hspace*{-0.2em}
\end{array}
\end{eqnarray}
where $c\in(0,1)$ and $\psi$, $\bar{\beta}_1$ are defined as 
\begin{eqnarray}\label{key_eq}
\psi(r)\doteq \frac{\bar{\sigma}(r)}{1+\sigma_0(r)},~\bar{\beta}_1(r)\doteq \smash{\displaystyle\max}{\{\beta_1(r),\beta_0(r)\}}.
\end{eqnarray}
for $\sigma_0(r)=L_fL_k\sigma_3(r)$ and $\beta_0(r)=r$ with $L_f$, $L_k$ defined in Remark \ref{rmk lip f}. Note that we assume that the update of the control task is done at $t_{i+1}$, shortly after the given inequality in (\ref{execution rule}) is satisfied at $t_{i+1}^-$
The following theorem states that if the continuous-time system has some local $\mathcal{L}_2$-gain property, it is always possible to guarantee the same disturbance attenuation level for the event-based system by applying the above triggering mechanism.
\begin{thm}\label{thm main}
Let us consider Assumptions \ref{assumption2}, \ref{assumption1} and the following conditions:
\begin{enumerate}[label=\roman*{$)$}]
\item [(i)] \label{cond1} $|\nabla W(x)|\leqslant \sigma_3(|x|)$ for some class $\mathcal{K}_\infty$ function $\sigma_3$, locally Lipschitz-continuous in $\mathbb{R}_{\geq0}$,
\item [(ii)] \label{cond3} ${\bar{\sigma}}^{-1}$, $\beta_1$, $\gamma_3$ are locally Lipschitz-continuous in $\mathbb{R}_{\geq0}$ \footnote{\label{note1} This condition can be relaxed in the proof of Theorem \ref{thm main}, however, is needed in the proof of Theorem \ref{thm main execution time}.},
\item [(iii)] \label{cond4} $k$ and $f$ are locally Lipschitz-continuous in $\mathbb{R}^n$ and $\mathbb{R}^n\times \mathbb{R}^m\times \mathbb{R}^q$, respectively \footnotemark[\getrefnumber{note1}].
\end{enumerate}
Then the system $\mathscr{G}_{e}$ driven from initial conditions $x_0\in\mathscr{X}_0$, defined in (\ref{initial_cond_set}), is finite gain locally $\mathcal{L}_2$-stable with zero bias and has ${\|\mathscr{G}_{e}\|}_{\mathcal{L}_2}\leqslant \Gamma$ if the control signal is executed under rule (\ref{execution rule}).
\end{thm}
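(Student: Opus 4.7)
The plan is to treat $U(x)\doteq V(x)+W(x)$ as a combined storage function for the event-based closed loop $\mathscr{G}_e$ and to establish, on every inter-event interval $[t_i,t_{i+1})$, the pointwise dissipation inequality
\begin{equation*}
\dot U(x(t))\leq -|z(t)|^2+\Gamma^2|w(t)|^2.
\end{equation*}
Integrating this bound and telescoping across the sampling instants (at which both $V$ and $W$ are continuous in $x$), together with $U\geq 0$, will immediately give $\int_{t_0}^T|z|^2\,ds\leq \Gamma^2\int_{t_0}^T|w|^2\,ds+U(x_0)$, matching Definition \ref{def gain} with $\mu(x_0)=V(x_0)+W(x_0)$ and $\eta=0$, thereby certifying ${\|\mathscr{G}_e\|}_{\mathcal{L}_2}\leq\Gamma$ with zero bias.

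The core computation is the bound on $\dot W$. First I would split
\begin{equation*}
\dot W=\nabla W(x)\cdot f(x,k(x),w)+\nabla W(x)\cdot\bigl[f(x,k(x+e),w)-f(x,k(x),w)\bigr],
\end{equation*}
so that Assumption \ref{assumption2} controls the first summand through $H_\Gamma(W,k(x))\leq 0$, yielding $\nabla W\cdot f(x,k(x),w)\leq -|z|^2+\Gamma^2|w|^2$. For the cross-term, local Lipschitz continuity of $k$ and of $f$ in $u$ (condition (iii)) gives $|f(x,k(x+e),w)-f(x,k(x),w)|\leq L_f L_k|e|$, while condition (i) gives $|\nabla W(x)|\leq \sigma_3(|x|)$, so that the cross-term is bounded above by $\sigma_0(|x|)\beta_0(|e|)$ with $\sigma_0$ and $\beta_0$ exactly as fixed under (\ref{key_eq}). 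Combining,
\begin{equation*}
\dot W\leq -|z|^2+\Gamma^2|w|^2+\sigma_0(|x|)\beta_0(|e|).
\end{equation*}

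The final step cancels the cross-term against $\dot V$. On any inter-event interval $[t_i,t_{i+1})$ the triggering rule (\ref{execution rule}) enforces $\bar\beta_1(|e(t)|)<c\psi(|x(t)|)$, which is precisely the hypothesis of Lemma \ref{key_lem}. The lemma then yields $\dot V\leq -\hat\sigma(|x|)-\sigma_0(|x|)\beta_0(|e|)$, whence $\sigma_0(|x|)\beta_0(|e|)\leq -\dot V-\hat\sigma(|x|)\leq -\dot V$. Adding this to the previous bound on $\dot W$ produces $\dot U=\dot W+\dot V\leq -|z|^2+\Gamma^2|w|^2$, as required.

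The main obstacle I foresee is a preliminary forward-invariance argument, since $L_f$, $L_k$ and the estimate $\sigma_3$ are only \emph{locally} valid, and so one must first verify that $x(t)$ stays in a compact set on $[t_0,T]$. This is not circular: once the triggering condition holds, Lemma \ref{key_lem} already furnishes $\dot V\leq -\hat\sigma(|x|)\leq 0$, so $V$ is non-increasing along the trajectory; radial unboundedness of $V$ together with $x_0\in\mathscr{X}_0$ then confine $x(t)$ to the sublevel set $\{\xi\in\mathbb{R}^n\colon V(\xi)\leq V(x_0)\}$, on which the local constants become uniform. A secondary technicality is that the dissipation inequality must be integrated across countably many sampling instants on $[t_0,T]$; this is legitimate because $x(t)$ is absolutely continuous and the triggering sequence does not accumulate, a fact that will be established separately (and without loss of generality may be invoked here).
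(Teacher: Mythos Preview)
Your proposal is correct and follows essentially the same approach as the paper: define $U=V+W$, split $\dot W$ into the ideal-feedback term (handled by Assumption~\ref{assumption2}) and the cross-term (bounded by $\sigma_0(|x|)\beta_0(|e|)$ via condition~(i) and the Lipschitz estimate~(\ref{rmk lip eq4})), then absorb the cross-term using Lemma~\ref{key_lem} applied on each inter-event interval. The paper additionally retains the residual $-(1-c)\bar\sigma(|x|)$ in the final bound on $\dot U$, whereas you discard it; and the forward-invariance argument you flag as an obstacle is handled in the paper separately as Lemma~\ref{pro IC}, with your sketch matching that lemma's proof.
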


It is worth remarking that Theorem \ref{thm main} is stated in local form. Note that condition (\ref{bound_on_w}) which restricts $w$ to be norm bounded by some Lipschitz-continuous function of state, plays an essential role in satisfying Assumption \ref{assumption1}. This assumption is not consistent with classical input-output stability notion that requires $w$ to be {\it any} perturbation in $\mathcal{L}_2(\mathbb{R}_{\geq 0})$. Thus it remains to define $Q$ such that for any given initial conditions in $\mathscr{X}_0$, $w$ is guaranteed to be in the set $\mathscr{W}_Q$. Condition (\ref{bound_on_w}) is a key tool to define such an admissible inputs set. Indeed, later in view of Lemma \ref{pro IC}, condition (\ref{bound_on_w}) and Lipschitz-continuity of $\gamma_3$ with Lipschitz constant $L_{\gamma_3}$ defined in Remark \ref{rmk lip f}, one can choose $Q=L_{\gamma_3}\bar{\varepsilon}$.
\begin{rmk}\label{remark_represent w}
The assumed dependence of $\gamma_3$ on the state of the system in (\ref{bound_on_w}) is a generalization of the assumption of state-dependent disturbance made in (\cite{L_2_selftriggered}, Assumption 6.1). Indeed, the Assumption 6.1 in \cite{L_2_selftriggered} can be extracted from (\ref{bound_on_w}) by choosing $\gamma_3$ to be a linear function of state, {\it i.e.}, $\gamma_3(|x|)=c_0|x|$, for all $x\in\mathbb{R}^n$ and some $c_0\in\mathbb{R}_{>0}$. This generalization has to be considered more carefully as it gives more flexibility in choosing function $\gamma_2$ in (\ref{cond_id}), {\it e.g.}, for $\gamma_2(r)=\sqrt{r}$, (\ref{cond_id}) does not provide any solution for possible linear functions $\gamma_3$. However, it is not difficult to verify that the solution to this inequality exists assuming $\gamma_3$ to be locally Lipschitz-continuous in $\mathbb{R}_{\geq 0}$.
\end{rmk}
\begin{rmk}
Using the same discussion as in (\cite{L_2_selftriggered}, Remark 6.2), it is more precise to state condition (\ref{bound_on_w}) as $|w(t,x(t))|\leqslant \gamma_3(|x(t)|)$ for all $t\in \mathbb{R}_{\geq0}$ to emphasize the state dependence of exogenous disturbance. To simplify our notation, we write $w(t)$ instead of $w(t,x(t))$ throughout the rest of the paper. 
\end{rmk}
\begin{rmk}\label{rmk w=0}
The triggering condition (\ref{trig old}) proposed in \cite{tabuada} can be extracted from the one we proposed in (\ref{execution rule}). Indeed, between consecutive sampling instants, (\ref{execution rule}) suggests
\setlength{\arraycolsep}{0.0em}
\begin{eqnarray}\label{comparison}
{c}\bar{\sigma}(|x|)&{}\geqslant{}& \smash{\displaystyle\max}{\{\beta_1(|e|),\beta_0(|e|)\}}(1+\sigma_0(|x|))\nonumber \\ &{}\geqslant{}& \beta_1(|e|)+\beta_0(|e|)\sigma_0(|x|)
\end{eqnarray}
and hence we conclude that $\beta_1(|e|)\leqslant {c} \bar{\sigma}(|x|)$. 
This consequence simply suggests that under the conditions assumed in this paper, in order to preserve system performance (in $\mathcal{L}_2$ sense) along with asymptotic stability provided in \cite{tabuada}, a more conservative execution rule than the one proposed in \cite{tabuada} is needed.
\end{rmk}
Our next Lemma shows that the state of the event-based system $\mathscr{G}_e$ is constrained to some compact set. The result is fundamental in the rest of this section.
\begin{lem}\label{pro IC}
	Under the assumptions of Theorem \ref{thm main}, $\mathscr{X}\doteq \{r\in\mathbb{R}^n|~|r|\leqslant \bar{\varepsilon}\}$ for $\bar{\varepsilon}=\sigma_1^{-1}(\sigma_2(\varepsilon))$ is a positive invariant set for the trajectories of system $\mathscr{G}_e$ driven from any $x_0\in\mathscr{X}_0$.
\end{lem}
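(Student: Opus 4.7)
The plan is to use the candidate Lyapunov function $V$ supplied by Assumption \ref{assumption1}, verify that it is monotone along the event-triggered trajectory, and then translate that monotonicity into the stated level-set confinement via the class $\mathcal{K}_\infty$ sandwich on $V$. Because $V$ is $\mathbf{C}^1$, radially unbounded and positive definite, standard arguments (see, e.g., the functional bounds used in \eqref{ISS V eq}) furnish class $\mathcal{K}_\infty$ functions $\sigma_1$, $\sigma_2$ with $\sigma_1(|\xi|)\leq V(\xi)\leq \sigma_2(|\xi|)$ for all $\xi\in\mathbb{R}^n$; these are exactly the $\sigma_1$, $\sigma_2$ appearing in the definition of $\bar{\varepsilon}$.

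First I would examine an arbitrary inter-sample interval $[t_i,t_{i+1})$. By the definition of $t_{i+1}$ in \eqref{execution rule} as the \emph{infimum} of times at which $\bar{\beta}_1(|e(t)|)\geq c\psi(|x(t)|)$, and by the fact that $e(t_i)=0$ so the strict inequality $\bar{\beta}_1(|e(t)|)<c\psi(|x(t)|)$ holds at $t=t_i$ (whenever $x(t_i)\neq 0$; the case $x(t_i)=0$ is trivial), continuity of $e$ and $x$ ensures that this strict inequality persists throughout $[t_i,t_{i+1})$. Consequently, Lemma~\ref{key_lem} is applicable with $\xi=x(t)$ and $\mu=e(t)$, and therefore
\begin{equation*}
\dot{V}(x(t))=\nabla V(x(t))\cdot f(x(t),k(x(t)+e(t)),w(t))\leqslant -\hat{\sigma}(|x(t)|)-\sigma_0(|x(t)|)\beta_0(|e(t)|)\leqslant 0
\end{equation*}
almost everywhere on $[t_i,t_{i+1})$.

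Next I would piece these inter-sample estimates together. Because $x(\cdot)$ is continuous across each triggering instant (only the control signal jumps, while the state is the integral of a bounded vector field), $V(x(\cdot))$ is absolutely continuous on $[t_0,\infty)$, and the nonpositive derivative obtained above yields $V(x(t))\leqslant V(x_0)$ for all $t\geq t_0$. The $\mathcal{K}_\infty$ sandwich then gives
\begin{equation*}
\sigma_1(|x(t)|)\leqslant V(x(t))\leqslant V(x_0)\leqslant \sigma_2(|x_0|)\leqslant \sigma_2(\varepsilon),
\end{equation*}
whence $|x(t)|\leqslant \sigma_1^{-1}(\sigma_2(\varepsilon))=\bar{\varepsilon}$ for every $t\geq t_0$, which is the claimed invariance of $\mathscr{X}$.

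I expect the main delicate point to be the justification, alluded to above, that the hypothesis $c\psi(|x(t)|)\geq \bar{\beta}_1(|e(t)|)$ of Lemma~\ref{key_lem} is satisfied on the \emph{closed} inter-sample interval in the limiting sense required for the differential inequality to be integrated — in particular, handling the boundary instants $t_i$ and passing to the limit as $t\to t_{i+1}^-$. This is resolved by noting that $e(t_i)=0$ and that the triggering rule \eqref{execution rule} defines $t_{i+1}$ precisely so that the Lemma~\ref{key_lem} hypothesis holds on $[t_i,t_{i+1})$, while continuity of the state process ensures no loss of mass in $V$ at the triggering instants themselves. Everything else is a direct application of the comparison argument for scalar Lyapunov functions.
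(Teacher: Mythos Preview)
Your proposal is correct and follows essentially the same route as the paper: show $\dot V\leqslant 0$ on each inter-sample interval using the triggering condition, then invoke the $\mathcal{K}_\infty$ sandwich $\sigma_1(|x|)\leqslant V(x)\leqslant\sigma_2(|x|)$ to conclude $|x(t)|\leqslant\sigma_1^{-1}(\sigma_2(\varepsilon))$. The only cosmetic difference is that you invoke Lemma~\ref{key_lem} as a black box to obtain $\dot V\leqslant 0$, whereas the paper instead re-derives the needed inequality directly from \eqref{comparison} (i.e., $\beta_1(|e|)\leqslant c\bar\sigma(|x|)$) and plugs it into \eqref{assump_eq_V_dot}; both paths yield the same conclusion.
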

\begin{rmk}\label{rmk lip f}
We now show how this analysis can be applied to find upperbounds on the norm of $\dot{x}$, something needed later to exclude Zeno-behaviour for the system $\mathscr{G}_e$. 
Lemma \ref{pro IC} suggests that $x(t)$ remains in the compact set $\mathscr{X}$ for all $t\in\mathbb{R}_{\geq0}$. Moreover, in view of definition of $e$ given in (\ref{error}) we have $|e(t)|\leqslant 2\bar{\varepsilon}$ for all $t\in\mathbb{R}_{\geq 0}$. Thus we may conclude that $e(t)\in {\mathscr{X}}_e\doteq \{r\in\mathbb{R}^n|r/2\in\mathscr{X}\}$ for all $t\in\mathbb{R}_{\geq 0}$. Also the control signal $u=k(x+e)$ does not leave the compact set ${\mathscr{X}}_u\doteq \{r\in\mathbb{R}^n|r/k\in\mathscr{X}\}$ since $|u(t)|\leqslant k|x(t_i)|= k\bar{\varepsilon}$ for all $t\in\mathbb{R}_{\geq 0}$.
Now consider compact sets $\mathscr{B}_x \subset \mathscr{X}$ and $\mathscr{B}_e \subset {\mathscr{X}}_e$. In the view of Lipschitz-continuity of function $k$, one can define the compact set $\mathscr{B}_u\subset {\mathscr{X}}_u$ of all points $u\in \mathbb{R}^m$ satisfying $|u|\leqslant |k(x+e)|$ for all $x\in \mathscr{B}_x$ and $e\in \mathscr{B}_e$. Similarly, we can define the compact set $\mathscr{B}_w\subset \mathscr{W}_Q$ containing all  points $w\in\mathbb{R}^q$ satisfying $|w|\leqslant \gamma_3(|x|)$ for all $x\in \mathscr{B}_x$.
Now using the Lipschitz-continuity of function $f$ with respect to $(x^T~u^T~w^T)^T$ in compact set $\mathscr{B}_x \times \mathscr{B}_u \times \mathscr{B}_w$ with $L_f$ is the Lipschitz constant of the function $f$ on $\mathscr{X} \times {\mathscr{X}}_u \times \mathscr{W}_Q$ and applying triangle inequality $|f(x,u,w)-f(\tilde{x},\tilde{u},\tilde{w})| \leqslant
|f(x,u,w)-f({x},\tilde{u},{w})|+
|f(x,\tilde{u},w)-f(\tilde{x},\tilde{u},\tilde{w})|$, it is not difficult to confirm the Lipschitz-continuity of function $\bar{f}(x,e,w)\doteq  f(x,k(x+e),w)$ in any compact set $\mathscr{B}_x\times \mathscr{B}_e\times \mathscr{B}_w$ with Lipschitz constant $L_f(L_k+1)$.
It is also straight forward to check
\setlength{\arraycolsep}{0.0em}
\begin{eqnarray}
{}&&{}|\dot{x}|\leqslant L_f(L_k+1)|x|+L_f L_k|e|+L_f|w|, \label{dotx0} \\
{}&&{}|f(x,k(x+e),w)-f(x,k(x),w)|\leqslant L_fL_k|e|,  \label{rmk lip eq4}
\end{eqnarray}
that will be used further. Also inequality (\ref{dotx0}) in view of condition (\ref{bound_on_w}) in Theorem \ref{thm main} and Lipschitz-continuity of $\gamma_3$ in the compact set $\{r\in\mathbb{R}_{\geq 0}| r\leqslant \max_{x\in\mathscr{B}_x}|x|\}$ with Lipschitz constant $L_{\gamma_3}$ (defined on $[0,\bar{\varepsilon}]$), reads as
\begin{equation}\label{dotx}
\begin{gathered}
|\dot{x}|\leqslant L_f(L_k+L_{\gamma_3}+1)|x|+L_fL_k|e|.
\end{gathered}
\end{equation}
\end{rmk}

In the next theorem we show that the sequence of triggering instants is a uniformly isolated set and hence there always exists a non-zero lower bound $\tau$ on the intersampling times. This feature guarantees the non-existence of accumulation points and is thus critical to the successful implementation of the proposed triggering mechanism.
\begin{thm}\label{thm main execution time}
If the hypotheses of Theorem \ref{thm main} hold, the inter sampling periods are lower bounded by some $\tau\in\mathbb{R}_{>0}$, {\it i.e.}, $t_i\geqslant t_{i-1}+\tau$ for all $i\in\mathbb{Z}_{>0}$.
\end{thm}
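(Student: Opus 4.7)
The plan is a Tabuada-style argument: reduce the triggering condition to a linear threshold on the ratio $\phi(t) := |e(t)|/|x(t)|$, obtain a differential inequality for $\phi$ with constant coefficients, and deduce that $\phi$ takes a strictly positive time to grow from $0$ (its value at $t_i$) to that threshold, which then yields the uniform lower bound $\tau$ on intersampling periods.

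For the reduction step, I would work on the positive invariant compact set $\mathscr{X}$ provided by Lemma~\ref{pro IC}. Hypothesis (ii) of Theorem~\ref{thm main} supplies local Lipschitz constants: $\bar{\beta}_1 = \max\{\beta_1,\mathrm{id}\}$ is locally Lipschitz with some constant $L_{\bar{\beta}_1}$, hence $\bar{\beta}_1(|e|) \leq L_{\bar{\beta}_1}|e|$, while Lipschitz continuity of $\bar{\sigma}^{-1}$ near zero combined with continuity of $\sigma_0$ on $\mathscr{X}$ yields $\psi(|x|) = \bar{\sigma}(|x|)/(1+\sigma_0(|x|)) \geq \kappa|x|$ for some $\kappa > 0$. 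Combining these, the triggering condition $\bar{\beta}_1(|e|) \geq c\psi(|x|)$ forces $\phi(t_{i+1}^-) \geq c\kappa/L_{\bar{\beta}_1} =: \phi^* > 0$. The differential inequality then follows from $\dot{e} = -\dot{x}$ and the bound $|\dot{x}| \leq A|x| + B|e|$ of Remark~\ref{rmk lip f} (with $A = L_f(L_k+L_{\gamma_3}+1)$, $B = L_f L_k$): the elementary estimate $\dot{\phi} \leq |\dot{e}|/|x| + \phi |\dot{x}|/|x|$ yields $\dot{\phi} \leq (1+\phi)(A + B\phi)$. Comparison with the scalar ODE $\dot{y} = (1+y)(A+By)$, $y(0)=0$, which reaches $\phi^*$ in a strictly positive time $\tau$, then gives $t_{i+1}-t_i \geq \tau$ for all $i$.

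The main obstacle is the first step: converting the nonlinear triggering threshold $c\psi(|x|)$ into a uniform linear lower bound in $|x|$, which is precisely where the locally Lipschitz hypotheses on $\bar{\sigma}^{-1}$, $\beta_1$, $\gamma_3$, $k$, and $f$ become essential; once that is in place the differential-inequality step is mechanical. A subsidiary care point is the degenerate case $x(t_i)=0$: either $x \equiv 0$ on $[t_i,\infty)$, in which case $e \equiv 0$ and no further events are generated (trivially satisfying the lower bound), or the analysis is bootstrapped from the first time at which $x$ departs from the origin, using continuity of the closed-loop flow.
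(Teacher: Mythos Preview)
Your proposal is correct and follows essentially the same Tabuada-style route as the paper: reduce the nonlinear triggering condition to a linear threshold on $|e|/|x|$ using the Lipschitz data on the invariant compact set $\mathscr{X}$ from Lemma~\ref{pro IC}, then bound the growth of this ratio via the differential inequality derived from~(\ref{dotx}). The only cosmetic difference is that the paper packages the reduction step through the Lipschitz constant $\bar L$ of the composite $\psi^{-1}(\bar\beta_1/c)$ (Properties~\ref{pro lip}--\ref{pro beta}) rather than bounding $\bar\beta_1$ from above and $\psi$ from below separately as you do, but the resulting thresholds and the comparison ODE $\dot y = L_f(1+y)(L+L_ky)$ are the same.
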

Proof of Theorem \ref{thm main execution time} relies on Properties \ref{pro lip}-\ref{pro beta} outlined below and is given in the Appendix.
\begin{property}\label{pro lip}
Function $\psi^{-1}$ defined in (\ref{key_eq}) is Lipschitz-continuous in any compact set $\mathscr{D}_x\subset\mathbb{R}_{\geq0}$.
\end{property}
\begin{property}\label{pro beta}
	The function $\bar{\beta}_1(r)$ defined in (\ref{key_eq}) is of class $\mathcal{K}_\infty$ and locally Lipschitz-continuous in $\mathbb{R}_{\geq0}$. Also if the Lipschitz constant of function $\beta_1$ is $L_{\beta_1}$ on some compact set $\mathscr{D}_e\subset \mathbb{R}_{\geq 0}$, then $L_{\bar{\beta}_1}=\max\{L_{\beta_1},1\}$ is the Lipschitz constant of $\bar{\beta}_1$ on this set. 
\end{property}

Proof of Theorem \ref{thm main execution time} implies that $\tau$ is a function of $L_f$, $L_k$, $L_{\gamma_3}$. Applying Lemma \ref{pro IC} and Remark \ref{rmk lip f}, we conclude that these {\color{black}constants are} defined on invariant sets and hence are valid for all initial conditions.

The proof of Property \ref{pro lip} suggests that function $\psi^{-1}$ is Lipschitz-continuous in $\mathscr{D}_x$ with Lipschitz constant $L_{\psi^{-1}}=(1+\sigma_0(\varepsilon_m))^2\big{\{}{L^{-1}_{\bar{\sigma}^{-1}}}-L_{\sigma_0}\bar{\sigma}(\varepsilon_m)\big{\}}^{-1}$, where $\varepsilon_m=\max_{r\in\mathscr{D}_x}\{r\}$ and $L_{\sigma_0}=L_fL_kL_{\sigma_3}$. To make sure $L_{\psi^{-1}}$ is positive, $\sigma_3$ which is the upper bound on the norm of Lyapunov function $W$, has to be chosen so that $L_fL_kL_{\sigma_3}L_{\bar{\sigma}^{-1}}\bar{\sigma}(\varepsilon_m)<1$. This condition depends on the set $\mathscr{D}_x$. In design procedure, however, one can choose $\sigma_3$ such that
\begin{equation}\label{lambda_bar_restriction}
L_fL_kL_{\sigma_3}L_{\bar{\sigma}^{-1}}\bar{\sigma}(\bar{\varepsilon})<1,
\end{equation}
where $\bar{\varepsilon}$ is defined in Lemma \ref{pro IC}. To see this, let us assume that system starts from initial condition $x(0)=x_0$. Then in view of Lemma \ref{pro IC}, we have $|x(t)|\leqslant \bar{\varepsilon}$ and hence for any compact set $\mathscr{D}_x\subseteq[0,\bar{\varepsilon}]$ we have $\varepsilon_m\leqslant \bar{\varepsilon}$. Thus since $\bar{\sigma}$ is a class $\mathcal{K}_{\infty}$ function, we will have ${\bar{\sigma}(\varepsilon_m)}\leqslant {\bar{\sigma}(\bar{\varepsilon})}$ and hence (\ref{lambda_bar_restriction}) ensures that $L_fL_kL_{\sigma_3}L_{\bar{\sigma}^{-1}}\bar{\sigma}(\varepsilon_m)<1$.

We finish our discussions in this section by showing the global asymptotic stability property for the event-based system $\mathscr{G}_e$ in the absence of disturbances.
\begin{cor}\label{cor w-Iss}
Under the assumptions of Theorem \ref{thm main}, the zero-input event-based system $\mathscr{G}_e$ has a global asymptotically stable point at $0\in\mathbb{R}^n$.
\end{cor}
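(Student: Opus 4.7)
The plan is to specialise the inequality supplied by Lemma \ref{key_lem} to $w\equiv 0$ and then invoke the standard Lyapunov stability theorem. First, since $w\equiv 0 \in \mathscr{W}_Q$, Lemma \ref{key_lem} applied to Assumption \ref{assumption1} yields a radially unbounded positive definite ${\bf C}^1$ function $V$, class $\mathcal{K}_\infty$ functions $\hat\sigma,\sigma_0,\beta_0,\psi,\bar\beta_1$, and a constant $c\in(0,1)$ such that
\[
\nabla V(x)\cdot f(x,k(x+e),0)\leqslant -\hat\sigma(|x|)-\sigma_0(|x|)\beta_0(|e|)
\]
whenever $c\psi(|x|)\geqslant \bar\beta_1(|e|)$. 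The triggering rule (\ref{execution rule}) enforces exactly this inequality on every interval $[t_i,t_{i+1})$ (it is satisfied strictly on $(t_i,t_{i+1})$ and trivially at $t_i$, where $e(t_i)=0$), so along every solution of the zero-input system $\mathscr{G}_e$ we obtain $\dot V(x(t))\leqslant -\hat\sigma(|x(t)|)$ for almost every $t\geqslant 0$.

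Next, I would establish forward completeness and absence of Zeno behaviour in a global fashion. Because $V$ is radially unbounded, positive definite and non-increasing along trajectories, any solution starting from $x_0\in\mathbb{R}^n$ remains confined to the compact sublevel set $\Omega(x_0)\doteq\{\xi: V(\xi)\leqslant V(x_0)\}$. On $\Omega(x_0)$ one may recompute the Lipschitz constants $L_f,L_k,L_{\gamma_3}$ exactly as in Remark \ref{rmk lip f}, so the proof of Theorem \ref{thm main execution time} delivers a strictly positive lower bound $\tau(x_0)>0$ on inter-event times; this gives global forward completeness of the event-triggered solution.

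Finally, the pair $(V,\hat\sigma)$ satisfies the hypotheses of the classical Lyapunov stability theorem: $V$ is positive definite, radially unbounded, and $\dot V\leqslant -\hat\sigma(|x|)$ a.e.\ with $\hat\sigma\in\mathcal{K}_\infty$. Integrating this differential inequality yields $V(x(t))\to c\geqslant 0$; if $c>0$ then $|x(t)|$ would be bounded away from $0$ for large $t$ and $\hat\sigma(|x(t)|)$ would be bounded below by a positive constant, contradicting the boundedness of $V$. Hence $V(x(t))\to 0$ and, by radial unboundedness, $x(t)\to 0$ for every $x_0\in\mathbb{R}^n$, establishing global asymptotic stability.

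The only delicate point is the globalisation in the second paragraph: Lemma \ref{pro IC} and Theorem \ref{thm main execution time} were stated for initial conditions in the compact set $\mathscr{X}_0$, and one must observe that once the disturbance is removed the Lyapunov decay itself furnishes the required forward invariance on the sublevel set $\Omega(x_0)$, allowing the Lipschitz-based Zeno exclusion argument of Theorem \ref{thm main execution time} to be applied uniformly on $\Omega(x_0)$ rather than on the a-priori set $\mathscr{X}$. Once this observation is in place, the remainder is a routine Lyapunov argument.
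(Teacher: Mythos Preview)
Your proof is correct and follows essentially the same route as the paper: both arguments use the execution rule (\ref{execution rule}) to force a strict Lyapunov decrease $\dot V\leqslant -\hat\sigma(|x|)$ along event-triggered trajectories and then appeal to radial unboundedness of $V$ for globality. The only cosmetic difference is that the paper goes through Remark~\ref{rmk w=0} and Assumption~\ref{assumption1} directly (obtaining $\beta_1(|e|)\leqslant c\bar\sigma(|x|)$ and hence $\dot V\leqslant -(1-c)\bar\sigma(|x|)$), whereas you invoke Lemma~\ref{key_lem}; since the proof of that lemma shows $\hat\sigma=(1-c)\bar\sigma$, the two computations coincide. Your additional paragraph on recomputing the Lipschitz constants over the sublevel set $\Omega(x_0)$ to globalise the Zeno-exclusion of Theorem~\ref{thm main execution time} is a point the paper's short proof leaves implicit, so in that respect your argument is more complete.
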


Although the $\mathcal{L}_2$-stability results are provided locally, the above result is global. Recall that the local character of the results arises from the {\color{black}restrictions placed} on the input space. Thus, in the absence of disturbances, the result becomes global.
\section{Improving Average Sampling Frequency}\label{section improve inter-execution time}
In this section, we are concerned with the problem of decreasing the average sampling rate for the proposed triggering mechanism of Section \ref{section main result}. Note that limiting the number of triggerings over a time interval (when necessary) may be of higher importance than decreasing {\color{black}their number over the entire time span}.  For example, high data load on a communication channel over a finite time interval may result in undesirable effects such as data packet drop out and/or transmission delay. Therefore, instead of improving all inter-event times, we focus our study on controlling the number of samples over a time interval in which the triggering frequency may become critical.


Our solution consists of modifying the triggering condition (\ref{execution rule}) of section \ref{section main result} by adding an exponentially time decaying term to the right hand side. We show that following this idea, the event-based system enjoys the same $\mathcal{L}_p$-gain performance as in section \ref{section main result}, however, the zero-input system is stable in practical sense as opposed to asymptotically stable.

The results of this section can then be applied to limit high triggering during transient response. In a regulation problem, after an initial transition, the state remains near the equilibrium, possibly continuously excited by a disturbance or noise. Focusing on practical stability of such problems, where the state is required to enter an stability bound, it is reasonable to assume that the triggering frequency reduces when the transient response vanishes. Note that when the state is near the equilibrium, the control action is only required to keep the state within the desired bound. As a result, the system can be controlled with much less attention and hence the number of triggering instants drops significantly. Similar behaviour is expected when tackling regulation problems with non-persistent disturbances. In such case, while in transient both disturbance and the change in the state's norm affect sampling frequency, during steady state a lower triggering rate is expected due to the non-existence of disturbance.

We now state the main problem to be solved in this section. Note that we implicitly assume that the system experiences finite transition interval over which the sampling frequency exceeds a critical level. Without loss of generality, we assume that only one such interval exists. Generalization to several transition intervals is discussed later.
\\
\textit{Problem 1:} 
Modify the proposed triggering rule (\ref{execution rule}) so that while the resulting event-based system is finite gain locally $\mathcal{L}_2$-stable with the same disturbance rejection bound $\Gamma$, the average sampling frequency does not exceed $f_{cr}$ at least for
\begin{itemize}
	\item [(A)] a desired period of time, {\it i.e.,} $t\in[0,\bar{T}]$.
	\item [(B)] a desired number of triggerings, {\it i.e.,} $1\leqslant i \leqslant N$.
\end{itemize}
	

\subsection{Continuous Triggering Condition Scenario}\label{section_cont}
We begin our study of Problem 1-(A) by modifying rule (\ref{execution rule}) as
\begin{eqnarray}\label{new execution}
\begin{array}{l}
\hspace*{-0.5em}
	t_{i+1}^-=\inf\Big{\{}t\in\mathbb{R}_{\geq 0}|~t>t_i \bigwedge \bar{\beta}_1(|e(t)|)\geqslant c\tilde{\psi}(|x(t)|) \Big{\}},\hspace*{-0.3em}
\end{array}
\end{eqnarray}
where $\tilde{\psi}\doteq {\tilde{\sigma}(t,r)}/({1+\sigma_0(r)})$ and $\tilde{\sigma}$ is an exponentially time-decaying perturbation of function $\bar{\sigma}$ defined in Assumption \ref{assumption1}, {i.e.,}
\begin{equation}\label{cont_perturbation}
	\tilde{\sigma}(t,r)\doteq\bar{\sigma}(r)+\frac{\kappa}{c} e^{-\zeta t}.
\end{equation}
Also $\kappa$ and $\zeta$ are positive parameters to be designed. 

\subsubsection{Stability Analysis}\label{subsection L2 stability}
The following theorem shows that the time-decaying perturbation of function $\bar{\sigma}$ introduces a non-zero bias term (see Definition \ref{def gain})
but does not affect the  $\mathcal{L}_2$ bound with respect to the input.

\begin{thm}\label{thm practical0}
Under the hypotheses of Theorem \ref{thm main}, the system $\mathscr{G}_e$ is finite gain locally $\mathcal{L}_2$-stable and has ${\|\mathscr{G}_e\|}_{\mathcal{L}_2}\leqslant \Gamma$ if the control signal is updated under the execution rule (\ref{new execution}).
\end{thm}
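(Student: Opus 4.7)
\begin{newproof}
The plan is to parallel the proof of Theorem~\ref{thm main}, treating the extra term $\kappa e^{-\zeta t}/(1+\sigma_0(|x|))$ in the modified threshold $c\tilde\psi$ as a bounded time-decaying perturbation. Because $e^{-\zeta t}$ is integrable on $\mathbb{R}_{\geq 0}$, we expect this perturbation to contribute only a bounded, $T$-independent constant to the dissipation inequality; that constant will appear as the bias $\eta$ in Definition~\ref{def gain}, while the coefficient of $\int|w|^2$ is untouched and remains $\Gamma^2$.

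The first step is to translate the new triggering rule (\ref{new execution}) into two pointwise bounds that hold between events. Since $\tilde\sigma(t,r)=\bar\sigma(r)+(\kappa/c)e^{-\zeta t}$, the trigger gives $\bar\beta_1(|e(t)|)\leqslant c\tilde\psi(|x(t)|)=(c\bar\sigma(|x|)+\kappa e^{-\zeta t})/(1+\sigma_0(|x|))$. Multiplying by $(1+\sigma_0(|x|))\geqslant 1$, and using $\beta_0\leqslant\bar\beta_1$ and $\beta_1\leqslant\bar\beta_1$ from (\ref{key_eq}), I obtain the two inequalities
\begin{equation}\label{prop_eq1}
\beta_1(|e|)\leqslant c\bar\sigma(|x|)+\kappa e^{-\zeta t},\qquad \sigma_0(|x|)\beta_0(|e|)\leqslant c\bar\sigma(|x|)+\kappa e^{-\zeta t},
\end{equation}
valid for all $t\in[t_i,t_{i+1})$ and hence a.e.\ on $\mathbb{R}_{\geq 0}$.

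Next I apply Assumption~\ref{assumption1} along trajectories of $\mathscr{G}_e$: $\dot V\leqslant-\bar\sigma(|x|)+\beta_1(|e|)$. Using the first bound in (\ref{prop_eq1}), this becomes $\dot V\leqslant-(1-c)\bar\sigma(|x|)+\kappa e^{-\zeta t}$. Integrating from $t_0$ to $T$ and using $V\geqslant 0$ yields
\begin{equation}\label{prop_eq2}
(1-c)\int_{t_0}^{T}\bar\sigma(|x(s)|)\,ds\leqslant V(x_0)+\frac{\kappa}{\zeta}.
\end{equation}
In parallel, Assumption~\ref{assumption2} gives $\nabla W(x)\cdot f(x,k(x),w)\leqslant \Gamma^2|w|^2-|h(x,w)|^2$. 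Along $\mathscr{G}_e$, adding and subtracting $f(x,k(x),w)$ and using condition~(i) of Theorem~\ref{thm main} together with (\ref{rmk lip eq4}) produces
\begin{equation*}
\dot W\leqslant \Gamma^2|w|^2-|z|^2+\sigma_3(|x|)L_fL_k|e|=\Gamma^2|w|^2-|z|^2+\sigma_0(|x|)\beta_0(|e|).
\end{equation*}
Substituting the second bound in (\ref{prop_eq1}) and integrating,
\begin{equation}\label{prop_eq3}
\int_{t_0}^{T}|z|^2\,ds\leqslant \Gamma^2\int_{t_0}^{T}|w|^2\,ds+W(x_0)+c\int_{t_0}^{T}\bar\sigma(|x|)\,ds+\frac{\kappa}{\zeta}.
\end{equation}
Finally I insert (\ref{prop_eq2}) into (\ref{prop_eq3}) to eliminate the integral of $\bar\sigma(|x|)$, which gives
\begin{equation*}
\int_{t_0}^{T}|z|^2\,ds\leqslant \Gamma^2\int_{t_0}^{T}|w|^2\,ds+\underbrace{W(x_0)+\tfrac{c}{1-c}V(x_0)}_{\mu(x_0)}+\underbrace{\tfrac{\kappa}{(1-c)\zeta}}_{\eta},
\end{equation*}
matching Definition~\ref{def gain} with gain $\Gamma$ and bias $\eta=\kappa/((1-c)\zeta)$.

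The only genuinely delicate point, and what I see as the main obstacle, is verifying that the admissibility hypotheses of Theorem~\ref{thm main} still apply under (\ref{new execution}): concretely, that the trajectories of $\mathscr{G}_e$ remain in a compact set so that $w\in\mathscr{W}_Q$ through $|w|\leqslant\gamma_3(|x|)$, and that Lipschitz constants $L_f,L_k,L_{\gamma_3}$ used above are valid along the whole trajectory. The inequality $\dot V\leqslant-(1-c)\bar\sigma(|x|)+\kappa e^{-\zeta t}$ yields $V(x(T))\leqslant V(x_0)+\kappa/\zeta$, so by radial unboundedness of $V$ and (\ref{ISS V eq}) one obtains an enlarged invariant set $\bar{\mathscr X}=\{r:|r|\leqslant\sigma_1^{-1}(\sigma_2(\varepsilon)+\kappa/\zeta)\}$ in place of the $\bar{\mathscr X}$ of Lemma~\ref{pro IC}. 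Once this enlarged invariant set is identified, the arguments of Remark~\ref{rmk lip f} transfer verbatim and the derivation above is justified.
\end{newproof}
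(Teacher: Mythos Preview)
Your argument is correct and establishes the result, but it is organized differently from the paper's proof. The paper works with the composite storage $U=V+W$ and, exactly as in Theorem~\ref{thm main}, exploits the cancellation of the cross term $\sigma_0(|x|)\beta_0(|e|)$ between the $V$-bound and the $W$-bound at the differential level. Under the modified trigger one has $\beta_1(|e|)+\sigma_0(|x|)\beta_0(|e|)\leqslant\bar\beta_1(|e|)(1+\sigma_0(|x|))\leqslant c\bar\sigma(|x|)+\kappa e^{-\zeta t}$, so the paper obtains in one stroke
\[
\dot U\leqslant -(1-c)\bar\sigma(|x|)+\kappa e^{-\zeta t}+\Gamma^2|w|^2-|z|^2,
\]
and integration gives the bias $\eta=\kappa/\zeta$ with $\mu=U$. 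You instead split the analysis: you bound $\dot V$ and $\dot W$ separately via the two inequalities in (\ref{prop_eq1}), integrate each, and only then recombine. Because you never use the cancellation, you pay twice for the perturbation and end up with the (harmless but looser) constants $\eta=\kappa/((1-c)\zeta)$ and $\mu=W+\tfrac{c}{1-c}V$.

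What your route buys is that the invariance argument becomes cleaner: your $\dot V\leqslant-(1-c)\bar\sigma(|x|)+\kappa e^{-\zeta t}$ follows from the first inequality in (\ref{prop_eq1}), which only uses $1/(1+\sigma_0)\leqslant 1$ and hence does not depend on the Lipschitz constants $L_f,L_k$. This lets you establish the enlarged invariant set $\{|x|\leqslant\sigma_1^{-1}(\sigma_2(\varepsilon)+\kappa/\zeta)\}$ first, and only afterwards invoke Remark~\ref{rmk lip f} to justify (\ref{rmk lip eq4}) on that set---a point the paper's proof leaves implicit.
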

\begin{rmk}
	It can be readily inferred from the proof of Theorem \ref{thm practical0} that the exponential time decaying term in (\ref{new execution}) does not affect the finite gain local $\mathcal{L}_2$-stability of the event-based system $\mathscr{G}_e$ as its integral from $0$ to \emph{any} $T\in\mathbb{R}_{>0}$ is finite, independent of $T$ and hence can be considered as the bias term $\eta$ in Definition \ref{def gain}.
\end{rmk}
\begin{cor}\label{cor w-Iss2}
Under the assumptions of Theorem \ref{thm practical0}, trajectories of the system $\mathscr{G}_e$ converge to $0\in\mathbb{R}^n$.
\end{cor}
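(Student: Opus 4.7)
The plan is to mirror the Lyapunov argument used for Corollary \ref{cor w-Iss} but carry along the exponentially decaying perturbation in $\tilde{\psi}$, then close the argument with Barbalat's lemma. I work in the zero-input setting ($w\equiv 0$), in analogy with Corollary \ref{cor w-Iss}.

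First I would derive the fundamental differential inequality along trajectories. Between consecutive sampling instants the triggering rule (\ref{new execution}) enforces $\bar{\beta}_1(|e(t)|)(1+\sigma_0(|x(t)|))\leqslant c\bar{\sigma}(|x(t)|)+\kappa e^{-\zeta t}$. Since $\bar{\beta}_1\geqslant \max\{\beta_1,\beta_0\}$, this gives $\beta_1(|e|)+\beta_0(|e|)\sigma_0(|x|)\leqslant c\bar{\sigma}(|x|)+\kappa e^{-\zeta t}$, exactly as in the derivation preceding Lemma \ref{key_lem}. Combining with Assumption \ref{assumption1} and dropping the (negative) $\beta_0\sigma_0$ contribution yields
\begin{equation*}
\dot V(x(t))\leqslant -\bar{\sigma}(|x(t)|)+\beta_1(|e(t)|)\leqslant -(1-c)\bar{\sigma}(|x(t)|)+\kappa e^{-\zeta t},
\end{equation*}
valid on each open interval $(t_i,t_{i+1})$; continuity of $V$ across sampling instants lets me integrate this bound globally.

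Next I would integrate from $0$ to $T$ to obtain
\begin{equation*}
V(x(T))+(1-c)\int_{0}^{T}\bar{\sigma}(|x(s)|)\,ds\leqslant V(x_0)+\frac{\kappa}{\zeta}.
\end{equation*}
From the lower bound $\sigma_1(|x|)\leqslant V(x)$ this immediately forces $x(t)$ to be bounded for all $t\geqslant0$, and, letting $T\to\infty$, it also gives $\bar{\sigma}(|x(\cdot)|)\in\mathcal{L}_1(\mathbb{R}_{\geq0})$. Because $x$ is bounded and (by Remark \ref{rmk lip f}) $|e|\leqslant 2\bar{\varepsilon}$, the closed-loop vector field $\bar f(x,e,0)$ is bounded, hence $\dot x$ is bounded and $x(\cdot)$ is uniformly continuous. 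Local Lipschitz continuity of $\bar{\sigma}$ on the compact set $\mathscr{X}$ then makes $t\mapsto \bar{\sigma}(|x(t)|)$ uniformly continuous. Barbalat's lemma forces $\bar{\sigma}(|x(t)|)\to 0$, and since $\bar{\sigma}\in\mathcal{K}_\infty$, this is equivalent to $x(t)\to 0$, which is the claim.

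The only place where I expect any friction is the step that transforms the sampling inequality into the Lyapunov decrease bound; all the algebraic manipulations already appear in the proof of Theorem \ref{thm practical0}, so I would simply specialize that derivation to $w=0$ rather than redo it from scratch. The Barbalat step is standard once the $\mathcal{L}_1$-integrability of $\bar{\sigma}(|x(\cdot)|)$ and uniform continuity of $x(\cdot)$ are in hand; the decaying bias $\kappa/\zeta$ plays no role in the asymptotic conclusion because the perturbation $\kappa e^{-\zeta t}$ is itself integrable, in contrast to a persistent disturbance which would only yield practical convergence to a neighborhood.
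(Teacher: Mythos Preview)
Your argument is correct, but it follows a different route from the paper's own proof. Both proofs begin from the same differential inequality
\[
\dot V(x)\leqslant -(1-c)\bar{\sigma}(|x|)+\kappa e^{-\zeta t},
\]
which you derive just as the paper does. From there, however, the paper proceeds geometrically: it builds a nested sequence of compact sets $\Omega_i=\{x:V(x)\leqslant V(m_i)\}$ indexed by integer times $i$, shows each $\Omega_i$ is positively invariant and globally attracting for $t\geqslant i$, and concludes from $\Omega_i\downarrow\{0\}$ that trajectories converge to the origin. Your approach is analytic: integrate the inequality to get $\bar{\sigma}(|x(\cdot)|)\in\mathcal L_1$, establish uniform continuity of $t\mapsto \bar{\sigma}(|x(t)|)$ via boundedness of $\dot x$, and invoke Barbalat. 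Your route is shorter and self-contained; the paper's route gives a more explicit picture of the convergence mechanism through shrinking invariant neighborhoods, which it later reuses verbatim for the discrete-scenario Corollary~\ref{cor w-Iss discrete}.

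Two minor points. First, you restrict to $w\equiv 0$, but the corollary as stated (and as proved in the paper) covers all $w\in\mathscr W_Q$: the bound in Assumption~\ref{assumption1} is uniform in $w$, so your argument works verbatim without that restriction. Second, your appeal to Remark~\ref{rmk lip f} for $|e|\leqslant 2\bar\varepsilon$ tacitly uses Lemma~\ref{pro IC}, whose proof required $\dot V\leqslant 0$, which no longer holds under the perturbed rule. You do not actually need that lemma, though: your own integrated inequality already gives $|x(t)|\leqslant \sigma_1^{-1}(V(x_0)+\kappa/\zeta)$, and boundedness of $e$ and $\dot x$ then follow on this (possibly larger) compact set. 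Finally, the paper only assumes local Lipschitz continuity of $\bar\sigma^{-1}$, not of $\bar\sigma$; for Barbalat you only need uniform continuity of $\bar\sigma$ on a compact interval, which holds automatically since $\bar\sigma\in\mathcal K_\infty$ is continuous.
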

\begin{rmk}
Note that Corollary \ref{cor w-Iss2} proves that trajectories converge to the origin, but \underline{does not} imply that the origin is asymptotically stable for the disturbance-free system.
Asymptotic stability does not follow from this corollary since the result falls short of proving \emph{stability} of the origin of the zero-input event-based system $\mathscr{G}_e$.
This situation may occur, for example, when the trajectories of the zero-input system that start from certain neighbourhood of the origin, diverge from origin temporarily, but finally converge to it. In such situations, the system may still be finite gain locally $\mathcal{L}_2$-stable, however, the zero-input system is not necessarily stable since there exist neighbourhoods of the origin such that any trajectory starting there, can not stay there {\color{black}forever}. This happens for system $\mathscr{G}_e$ under triggering condition (\ref{new execution}) as the proof of Corollary \ref{cor w-Iss2} suggests that in the absence of disturbances we have $\dot{V}(x)\geqslant 0$ for $|x(t)|\leqslant \bar{\sigma}^{-1}(\kappa e^{-\zeta t}/(1-c))$, {\it i.e.,} trajectories starting within this bound diverge from origin at first but finally converge as the area of positive $\dot{V}$ shrinks to zero.
\end{rmk}

The analysis, however, can be extended a bit further than the classical notion of stability. Indeed, we now show that in the absence of disturbances, the event-based system $\mathscr{G}_e$ is practically stable in the sense of following definition cited from \cite{practical_stability}:
\begin{deff}\label{practical stability}
Given $\varsigma>\rho\in \mathbb{R}_{\geq 0}$, the origin of the system $\dot{x}=f(x,t)$ is $(\varsigma \rightarrow \rho)$-stable if
	\begin{enumerate}[label*=\alph*)]
	\item [(a)] \label{1} for any $\epsilon > \rho$ there exists $\delta(\epsilon)\in\mathbb{R}_{>0}$ such that if $|x_0|\leqslant \delta(\epsilon)$, then $|x(t)|<\epsilon$ for all $t\in\mathbb{R}_{\geq 0}$,
	\item [(b)] \label{2} for a given $r\in(0,\varsigma)$ there exists a finite $\upsilon(r)\in\mathbb{R}_{>0}$ such that if $|x_0|\leqslant r$, then $|x(t)|<\upsilon(r)$ for all $t\in\mathbb{R}_{\geq 0}$,
	\item [(c)] \label{3} for a given $r\in(0,\varsigma)$ and $\epsilon > \rho$ there exists a finite $T(r,\epsilon)\in\mathbb{R}_{>0}$ such that if $|x_0|\leqslant r$, then $|x(t)|<\epsilon$ for all $t\geqslant T(r,\epsilon)$.
\end{enumerate}
\end{deff}
If we set $\varsigma=\infty$ and $\rho=0$ in the above definition, we obtain the familiar uniform global asymptotic stability, (\cite{practical_stability}, Remark 2.1). it is worth mentioning that the stability in the above-mentioned $\rho$-practical sense guarantees convergence of trajectories of the system $\dot{x}=f(x,t)$ to the set $\{x\in\mathbb{R}^n|~|x|\leqslant \rho\}$ through condition \ref{3} in Definition \ref{practical stability}. The converse, however, is not generally true.
\begin{thm}\label{thm practical}
Under the assumptions of Theorem \ref{thm practical0}, the zero-input event-based system $\mathscr{G}_e$ is $(\infty \rightarrow \bar{\sigma}^{-1}({\kappa}/(1-c)))$-stable.
\end{thm}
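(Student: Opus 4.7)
The plan is to verify conditions (a)--(c) of Definition \ref{practical stability} with $\varsigma=\infty$ and $\rho=\bar\sigma^{-1}(\kappa/(1-c))$, building on the Lyapunov inequality that already underlies Theorem \ref{thm practical0} and Corollary \ref{cor w-Iss2}. First, setting $w\equiv 0$ in Assumption \ref{assumption1} and using the triggering rule (\ref{new execution}) --- which between sampling instants gives $\bar\beta_1(|e|)(1+\sigma_0(|x|))\leq c\bar\sigma(|x|)+\kappa e^{-\zeta t}$, and hence $\beta_1(|e|)\leq c\bar\sigma(|x|)+\kappa e^{-\zeta t}$ since $\beta_1\leq\bar\beta_1$ and $1+\sigma_0\geq 1$ --- I would derive the key differential inequality
\[
\dot V(x(t))\leq -(1-c)\bar\sigma(|x(t)|)+\kappa e^{-\zeta t}.
\]
Introducing the time-varying radius $\rho(t):=\bar\sigma^{-1}(\kappa e^{-\zeta t}/(1-c))$, which satisfies $\rho(0)=\rho$ and $\rho(t)\to 0$, this gives the key implication $|x(t)|\geq\rho(t)\Rightarrow\dot V(t)\leq 0$.

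Next, I would establish the envelope estimate
\[
V(x(t))\leq\max\{V(x_0),\sigma_2(\rho)\}\qquad\forall\,t\geq 0,
\]
via a first-crossing contradiction: if $t_1$ were the first time $V$ strictly exceeded this envelope, then $V(x(t_1))\geq\sigma_2(\rho)\geq\sigma_2(\rho(t_1))$ combined with $\sigma_2(|x|)\geq V$ would force $|x(t_1)|\geq\rho(t_1)$ and thus $\dot V(t_1)\leq 0$, ruling out the upward crossing. Composing with $\sigma_1(|x|)\leq V$ then yields the state-space bound $|x(t)|\leq\sigma_1^{-1}\bigl(\sigma_2(\max\{|x_0|,\rho\})\bigr)$.

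With this bound I would read off the three conditions. Condition (b) is immediate with $\upsilon(r):=\sigma_1^{-1}\bigl(\sigma_2(\max\{r,\rho\})\bigr)$. Condition (a) follows by choosing $\delta(\epsilon)$ so that $\sigma_2(\delta)<\sigma_1(\epsilon)$, keeping the envelope inside $\{|x|<\epsilon\}$. For condition (c), I would combine Corollary \ref{cor w-Iss2} (which already asserts $x(t)\to 0$) with a comparison-principle step: the scalar ODE $\dot y=-(1-c)\bar\sigma\circ\sigma_2^{-1}(y)+\kappa e^{-\zeta t}$, $y(0)=\sigma_2(r)$, majorizes $V(x(t))$, and since $\kappa e^{-\zeta t}\to 0$ while $-(1-c)\bar\sigma\circ\sigma_2^{-1}$ is a class-$\mathcal{K}_\infty$ drain, this comparison trajectory enters $\{y<\sigma_1(\epsilon)\}$ within a time $T(r,\epsilon)$ depending only on $r,\epsilon,\kappa,\zeta,c$, delivering the required uniformity.

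The hardest step, I expect, will be condition (a): the envelope only confines the trajectory to the ball of radius $\sigma_1^{-1}(\sigma_2(\rho))$, which may be larger than $\rho$ when the $\mathcal{K}_\infty$ bounds on $V$ are not tight. The resolution is to exploit the forward-invariance of the sublevel set $\{V\leq\sigma_2(\rho)\}$ established in the second step, together with the monotone shrinkage $\rho(t)\downarrow 0$ of the ``growth zone'' of $V$, so that choosing $\delta$ small forces $V(x_0)\leq\sigma_2(\delta)$ and, through the envelope, $|x(t)|<\epsilon$ for every $\epsilon>\rho$.
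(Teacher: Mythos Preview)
Your overall strategy---derive $\dot V\leq-(1-c)\bar\sigma(|x|)+\kappa e^{-\zeta t}$ and then verify (a)--(c) of Definition~\ref{practical stability}---matches the paper's. The main methodological difference is that the paper immediately discards the time dependence via $\kappa e^{-\zeta t}\leq\kappa$, working with the time-invariant bound $\dot V\leq-(1-c)\bar\sigma(|x|)+\kappa$ throughout. This makes (b) and (c) considerably shorter: for (c) the paper simply integrates the inequality from $0$ to the first time $|x|$ hits $\epsilon$ and solves for an explicit $T(r,\epsilon)\geq\bigl(V(r)-V(\epsilon)+\kappa\bigr)/\bigl((1-c)\bar\sigma(\epsilon)\bigr)$, avoiding any comparison-principle machinery. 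Your envelope argument with the shrinking radius $\rho(t)$ is more refined but buys nothing for the stated result, since $\rho=\rho(0)$ is what matters.

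On condition (a): you correctly identify the difficulty, but your proposed resolution does not close the gap. The envelope $V(x(t))\leq\max\{V(x_0),\sigma_2(\rho)\}$ only confines $|x(t)|$ to $\sigma_1^{-1}(\sigma_2(\rho))$, and for $\epsilon\in(\rho,\sigma_1^{-1}(\sigma_2(\rho)))$ no choice of small $\delta$ fixes this---the $\sigma_2(\rho)$ term in the envelope is independent of $\delta$. Invoking ``monotone shrinkage $\rho(t)\downarrow0$'' does not help at $t=0$. For what it is worth, the paper's own proof of (a) is extremely terse (``choose $0<\delta(\epsilon)<\epsilon$'') and tacitly treats norm balls as if they were $V$-sublevel sets, so your concern is legitimate and not merely a defect of your route; a fully rigorous argument for (a) would need either an additional structural assumption on $V$ or a sharper invariance statement than the envelope provides.
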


\subsubsection{Inter-Event Lower Bound Comparison}\label{subsection lower bound increase} 
We recall from Theorem \ref{thm main execution time} that the lower bound on intersampling periods of event-based system $\mathscr{G}_e$ under execution rule (\ref{execution rule}) is $\tau$ and given in (\ref{lower bound}). Also by $\tau_1$ we denote the lower bound on intersampling periods of this system under execution rule (\ref{new execution}). We show that one can design parameters $\kappa$ and $\zeta$ in (\ref{new execution}) such that for a given $\bar{T}>0$ and ${\tau}^*>f_{cr}^{-1}$, we have $\tau_1\geqslant\tau+\tau^*$ at least for $t\in [0,\bar{T}]$. This guarantees that the average sampling frequency is less than $f_{cr}$ for $t\in[0,\bar{T}]$. To this end, defining $\bar{\kappa}={\kappa}/({1+\sigma_0(\bar{\varepsilon})})$, we assume the updation of the control task is decided based on the following event condition
\begin{equation}\label{execut_more_restric1}
	\bar{\beta}_1(|e|)\geqslant c\psi(|x|)+\bar{\kappa}e^{-\zeta t}
\end{equation}
which is more conservative than the one proposed in (\ref{new execution}) and hence gives a lower bound on $\tau_1$. Let $L_{\psi^{-1}}$ and $L_{\bar{\beta}_1}$ be the Lipschitz constants of functions $\psi^{-1}$ and $\bar{\beta}_1$, respectively. A more conservative triggering condition than (\ref{execut_more_restric1}) can be obtained from  $L_{\bar{\beta}_1}|e|\geqslant {c}  L_{\psi^{-1}}^{-1}|x|+\bar{\kappa}e^{-\zeta t}$. In fact, if this condition is not satisfied, we have
\begin{equation}\label{cont analysis 1}
\bar{\beta}_1(|e|)\leqslant L_{\bar{\beta}_1}|e|< {c}  L_{\psi^{-1}}^{-1}|x|+\bar{\kappa}e^{-\zeta t}\leqslant {c} \psi(|x|)+\bar{\kappa}e^{-\zeta t}
\end{equation}
and hence (\ref{execut_more_restric1}) will not be satisfied too. This triggering condition restricts measurement error $e$ to satisfy 
\begin{equation}\label{cont analysis 2}
{c} L_{\psi^{-1}}^{-1} |x|(\hat{L}\frac{|e|}{|x|}-1) \leqslant \bar{\kappa}e^{-\zeta t},
\end{equation}
where $\hat{L}={c}^{-1} L_{\psi^{-1}} L_{\bar{\beta}_1}$. We remark that $\hat{L}\geqslant \bar{L}$, where $\bar{L}$ is the Lipschitz constant of function ${\psi}^{-1}({\bar{\beta}_1}/{{c}})$. From the proof of Theorem \ref{thm main execution time} it follows that ${|e|}/{|x|}\geqslant  {1}/{\bar{L}}$ shortly before the execution instant $t_i$
and hence we have $\hat{L}{|e(t_i^-)|}/{|x(t_i^-)|}>1$. We can even express the triggering condition more conservatively, by virtue of Lemma \ref{pro IC}, so that the control signal is updated at sampling instant $t_i$ when the following condition is satisfied
\begin{equation}\label{cont analysis 3}
{c}\bar{\varepsilon} L_{\psi^{-1}}^{-1} (\hat{L}\frac{|e(t_i^-)|}{|x(t_i^-)|}-1) \geqslant \bar{\kappa}e^{-\zeta t_i^-}.
\end{equation}
We now define $L^*$ so that $y(\tau+\tau^*) =1/L^*$ where $y$ is the solution to (\ref{differnetial eq}). Thus our aim is to design ${\kappa}$ and $\zeta$ such that the solution ${|e(t_i^-)|}/{|x(t_i^-)|}$ to inequality (\ref{cont analysis 3}) satisfy $L^*|e(t_i^-)|\geqslant |x(t_i)|$ for all execution instants $t_i\leqslant \bar{T}$, $i\in\mathbb{R}_{\geq0}$, {\it i.e.}, until $t=\bar{T}$ the intersampling intervals are lower bounded by the solution $\tau_1$ of $y(\tau_1)={1}/{L^*}$. This means that the lower bound on inter-event times increases to $\tau_1\geqslant \tau+\tau^*$ at least until instant $t=\bar{T}$. Finally it suffices to choose $\kappa$ and $\zeta$ so that
\begin{equation}
{\kappa}={c}\bar{\varepsilon} L_{\psi^{-1}}^{-1} (\frac{\hat{L}}{L^*}-1) ({1+\sigma_0(\bar{\varepsilon})})e^{\zeta \bar{T}}.
\end{equation}
Then the lower bounds on intersampling periods are the solutions $\tau_1$ and $\tau$ to
\begin{equation}
\begin{cases}
y(\tau_1)=\frac{1}{L^*},&\text{for } 0\leqslant t \leqslant \bar{T}\\
y(\tau)=\frac{1}{\bar{L}},&\text{for } t> \bar{T}\\
\end{cases}
\end{equation}  
\begin{rmk}
Our result in section \ref{subsection lower bound increase} is far more general than that of (\cite{tabuada}, Theorem $\textrm{III}.1$, when the delay between state measurement and actuator updating is nonzero). In \cite{tabuada}, it is shown that the lower bound on intersampling times, $\tau$, can be extended (due to the time required to read state measurement, compute the control signal and update actuators) to the solution $\tau^{\prime}$ of $y(\tau^{\prime})={1}/{{\bar{L}}^{\prime}}$, where ${\bar{L}}^{\prime}$ is the Lipschitz constant of function ${\psi}^{-1} ({\bar{\beta}_1}/{{c}^{\prime}})$ on compact set $\mathscr{D}_e$ defined in the proof of Theorem \ref{thm main execution time}, where ${c}^{\prime}\in({c},1)$. Following this approach, the lower bound on intersampling intervals is restricted (through the upper bound limit on ${c}^{\prime}$) to $\tau^{\prime}<\tau_{\max}$, where $\tau_{\max}$ is the solution to $y(\tau_{\max})={1}/{{\bar{L}}_{\min}}$ with ${\bar{L}}_{\min}$ as the Lipschitz constant of function ${\psi}^{-1}({\bar{\beta}_1})$. This limitation, however, is relaxed in our proposed method by introducing the exponentially decaying term $\kappa e^{-\zeta t}$ which allows taking $L^*$ smaller than ${\bar{L}}_{\min}$.
\end{rmk}
\subsection{Discrete Triggering Condition Scenario}\label{subsection discrete}
In this section we address Problem 1-(B). In section \ref{section_cont} we showed that an exponentially time decaying term added to execution rule (\ref{execution rule}) enables us to affect average sampling frequency arbitrarily at least for the period $[0,\bar{T}]$. Here, we address the problem of improving the average sampling frequency for the first $N$ iterative triggerings of the control task using a discrete version of the triggering condition (\ref{new execution}). Let $i\in \mathbb{Z}_{>0}$ denote the number of triggerings completed up until time $t$ assuming the first triggering occurs at $t_0=0$. As a consequence, $t_i$ and $t_{i-1}$ denote the upcoming and the most recent execution instants, respectively. We denote by $t_i^{\prime}>t_{i-1}$, $i\in \mathbb{Z}_{>0}$, just a moment after the following so called discrete event condition holds 
\begin{equation}\label{discrete execution}
	\begin{array}{l}\hspace*{-1em}
	{t'_i}^-=\inf\Big{\{}t\in\mathbb{R}_{\geq 0}|~t>t_{i-1} \bigwedge \bar{\beta}_1(|e(t)|)\geqslant c\check{\psi}(|x(t)|) \Big{\}},\hspace*{-0.3em}
	\end{array}
\end{equation}
where $\check{\psi}\doteq {\check{\sigma}(t,r)}/({1+\sigma_0(r)})$ and $\check{\sigma}$ is a discrete decaying perturbation of $\bar{\sigma}$ defined in Assumption \ref{assumption1} defined as
\begin{equation}\label{disc_perturbation}
\check{\sigma}(t,r)\doteq\bar{\sigma}(r)+\frac{\hat{\kappa} e^{\theta^i}}{c i!},
\end{equation}
where $\hat{\kappa}$ and $\theta$ are positive parameters to be designed. 
We refer to (\ref{discrete execution}) as the discrete triggering condition as it depends on index $i$ which changes non-continuously between successive triggerings. 

Now suppose that the $i$-th execution of the control task happens at
\begin{equation}\label{2 time}
t_i=\smash{\displaystyle\min}\{t_{i-1}+\Delta,t^{\prime}_i\},
\end{equation}
where $\Delta\in\mathbb{R}_{>0}$ is an upper bound on intersampling intervals. The following theorem then states that discrete decaying perturbation of function $\bar{\sigma}$ given in (\ref{disc_perturbation}) satisfies the same local $\mathcal{L}_2$-gain  bound for the event-based system.
\begin{thm}\label{thm practical2}
Under the hypotheses of Theorem \ref{thm main}, the system $\mathscr{G}_e$ is finite gain $\mathcal{L}_2$-stable and has ${\|\mathscr{G}_e\|}_{\mathcal{L}_2}\leqslant {\Gamma}$ if the control signal is updated at triggering instants $\{t_i|i\in\mathbb{Z}_{>0}\}$ defined in (\ref{2 time}).
\end{thm}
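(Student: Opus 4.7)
The plan is to mimic the proof of Theorem \ref{thm practical0} step-for-step, with the only change being that the time-integrable exponential bias $\kappa e^{-\zeta t}/c$ is replaced by the piecewise-constant, iteration-indexed slack $\hat{\kappa}e^{\theta^i}/(c\,i!)$ coming from (\ref{disc_perturbation}). I would first observe that, because this extra term is non-negative, the discrete rule (\ref{discrete execution}) is strictly weaker than the baseline rule (\ref{execution rule}); consequently, the invariance result of Lemma \ref{pro IC}, the Lipschitz bounds of Remark \ref{rmk lip f}, and the inter-event lower bound $\tau$ of Theorem \ref{thm main execution time} all transfer verbatim. Combined with the upper bound $\Delta$ enforced by (\ref{2 time}), this guarantees that the triggering sequence $\{t_i\}$ is well-defined, non-Zeno, and diverges.

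Next, fix $T>0$ and let $t_0 = 0 < t_1 < \dots < t_N \leq T < t_{N+1}$ be the resulting sampling instants. On each subinterval $[t_{i-1},\min\{t_i,T\})$ the discrete event condition (\ref{discrete execution}) is not yet violated, so
\begin{equation*}
\bar{\beta}_1(|e(t)|)\,(1+\sigma_0(|x(t)|)) \;\leq\; c\,\bar{\sigma}(|x(t)|) + \frac{\hat{\kappa}\,e^{\theta^i}}{i!}.
\end{equation*}
Plugging this into the same chain of inequalities used in the proof of Theorem \ref{thm practical0}, namely: Assumption \ref{assumption2} for the continuous-time dissipation of $W$, the Lipschitz estimate (\ref{rmk lip eq4}), the bound on $|\nabla W|$ coming from condition (i) of Theorem \ref{thm main}, and Lemma \ref{key_lem} for absorbing the $c\bar{\sigma}(|x|)$ residue, produces a pointwise inequality of the form
\begin{equation*}
\dot{\Psi}(x(t)) + |z(t)|^2 - \Gamma^2|w(t)|^2 \;\leq\; \frac{\hat{\kappa}\,e^{\theta^i}}{i!}, \qquad t\in[t_{i-1},\min\{t_i,T\}),
\end{equation*}
where $\Psi$ is the same (composite) storage function used in Theorem \ref{thm practical0}.

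Integrating over $[0,T]$ and assembling the piecewise estimates yields
\begin{equation*}
\int_0^T |z(s)|^2\,ds \;-\; \Gamma^2\!\int_0^T |w(s)|^2\,ds \;\leq\; \Psi(x_0) + \sum_{i=1}^{N+1} \frac{\hat{\kappa}\,e^{\theta^i}}{i!}\bigl(\min\{t_i,T\}-t_{i-1}\bigr).
\end{equation*}
By (\ref{2 time}) each factor $\min\{t_i,T\}-t_{i-1}$ is at most $\Delta$, so the sum is majorized by the $T$-independent constant $\eta \doteq \Delta\,\hat{\kappa}\,\sum_{i=1}^{\infty} e^{\theta^i}/i!$, finite provided $\theta$ is chosen in a range where this series converges (e.g.\ $\theta\in(0,1]$, so that $e^{\theta^i}$ stays bounded and the factorial handles summability). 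This is precisely the bound of Definition \ref{def gain} with $\mu(x_0)=\Psi(x_0)$ and bias $\eta$, which establishes ${\|\mathscr{G}_e\|}_{\mathcal{L}_2}\leq \Gamma$.

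The main obstacle I anticipate is controlling the accumulated bias: unlike in Theorem \ref{thm practical0}, where $\int_0^T\kappa e^{-\zeta t}dt$ was automatically bounded uniformly in $T$, here the per-iteration slack grows with $i$, so a $T$-uniform bound crucially needs both the factorial denominator (to dominate $e^{\theta^i}$) and the intersampling cap $\Delta$ in (\ref{2 time}) (without which the weights $\min\{t_i,T\}-t_{i-1}$ could be arbitrary and the sum uncontrollable). By contrast, the Lyapunov/ISS bookkeeping on each intersampling interval is a direct transcription of the argument in Theorem \ref{thm practical0}, with the time-dependent slack $\kappa e^{-\zeta t}$ simply replaced by the piecewise-constant, $i$-dependent slack $\hat{\kappa}\,e^{\theta^i}/i!$.
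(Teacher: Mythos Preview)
Your plan coincides with the paper's proof: on each $[t_{i-1},t_i)$ derive $\dot U\leq -(1-c)\bar\sigma(|x|)+(\text{slack})_i+\Gamma^2|w|^2-|z|^2$ from Assumptions \ref{assumption2}, \ref{assumption1}, condition (i) and (\ref{rmk lip eq4}); integrate over $[0,T]$; bound each interval length by $\Delta$ via (\ref{2 time}); and sum the slack. The paper takes $\mu=U$ and $\eta=\hat\kappa\Delta e^\theta$.

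The one substantive discrepancy is the slack term itself. You read (\ref{disc_perturbation}) literally as $\hat\kappa e^{\theta^i}/(c\,i!)$, but this is a typo in that display: every other appearance in Section \ref{subsection discrete} (the conservative rule just below (\ref{2 time}), inequality (\ref{disc analysis 3}), the $\hat\kappa$ design formulas), the proofs of Corollary \ref{cor w-Iss discrete} and Theorem \ref{thm_disc}, and the paper's own proof of the present theorem all use $\hat\kappa\theta^i/(c\,i!)$. With the intended term, $\sum_{i\geq1}\theta^i/i!\leq e^\theta$ for \emph{every} $\theta>0$, so the bias $\eta=\hat\kappa\Delta e^\theta$ is finite with no constraint on $\theta$, and your restriction $\theta\in(0,1]$ should be dropped. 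That restriction is also inconsistent with the paper's own usage: the $N_\theta$ case split in Section \ref{subsection discrete} is nontrivial only for $\theta>1$, and Example \ref{example3} takes $\theta=5$.

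A minor aside: the sentence ``the discrete rule is weaker, hence Lemma \ref{pro IC} transfers verbatim'' has the implication backwards. A later-triggering rule permits larger $|e|$, so the inequality $\dot V\leq 0$ underlying Lemma \ref{pro IC} is no longer guaranteed; one only gets $\dot V\leq -(1-c)\bar\sigma(|x|)+\hat\kappa\theta^i/i!$. The paper's proof of this theorem does not invoke invariance either, so this does not break your $\mathcal{L}_2$ argument, but the justification you give for it is incorrect.
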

\begin{rmk}
	The $\Delta$ term in (\ref{2 time}) imposes an upper bound on inter-event times. This restriction on intersampling intervals is necessary as it confirms the finiteness of the bias term in (\ref{dissipative discrete}).
\end{rmk}
\begin{cor}\label{cor w-Iss discrete}
Under the assumptions of Theorem \ref{thm practical2}, trajectories of the event-based system $\mathscr{G}_e$ converge to $0\in\mathbb{R}^n$.
\end{cor}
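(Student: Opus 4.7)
The plan is to mimic the argument sketched in Corollary~\ref{cor w-Iss2}, replacing the continuous time-decaying term $\kappa e^{-\zeta t}$ by its discrete counterpart $\hat{\kappa}e^{\theta^i}/i!$, and then use the upper bound $\Delta$ on inter-event times in (\ref{2 time}) to make the iteration index $i$ tend to infinity as $t\to\infty$.

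First, I would set $w\equiv 0$ and invoke Assumption~\ref{assumption1} to obtain, along trajectories of $\mathscr{G}_e$,
\begin{equation*}
\dot{V}(x(t))\le -\bar{\sigma}(|x(t)|)+\beta_1(|e(t)|)\le -\bar{\sigma}(|x(t)|)+\bar{\beta}_1(|e(t)|)
\end{equation*}
for all $t\in[t_{i-1},t_i)$. Between two consecutive execution instants, rule (\ref{discrete execution}) fails, so $\bar{\beta}_1(|e(t)|)<c\,\check{\psi}(|x(t)|)$. Using the definitions $\check{\psi}(r)=\check{\sigma}(t,r)/(1+\sigma_0(r))$ and (\ref{disc_perturbation}), together with $\sigma_0\ge 0$ (hence $(1+\sigma_0(|x|))^{-1}\le 1$), this yields
\begin{equation*}
\dot{V}(x(t))\le -(1-c)\,\bar{\sigma}(|x(t)|)+\frac{\hat{\kappa}\,e^{\theta^i}}{i!}.
\end{equation*}

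Second, from the inequality above, $\dot{V}<0$ whenever $|x(t)|$ lies outside the sub-level set
\begin{equation*}
\mathcal{B}_i\doteq\Bigl\{r\in\mathbb{R}_{\ge 0}\,\bigl|\,r\le\bar{\sigma}^{-1}\!\Bigl(\tfrac{\hat{\kappa}\,e^{\theta^i}}{(1-c)\,i!}\Bigr)\Bigr\}.
\end{equation*}
The crucial observation is that the $\Delta$-term in (\ref{2 time}) enforces an a priori upper bound on intersampling intervals, so the number of triggerings satisfies $i(t)\to\infty$ as $t\to\infty$. Choosing the design parameter $\theta$ so that $e^{\theta^i}/i!\to 0$ (which holds, for instance, for any $\theta\in(0,1)$ since then $\theta^i\to 0$ and $i!\to\infty$), the radius of $\mathcal{B}_i$ decays monotonically to zero.

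Third, I would combine the two observations: Lemma~\ref{pro IC} keeps the trajectory in the compact invariant set $\mathscr{X}$, $V$ is radially unbounded and positive definite, and $\dot{V}(x)<0$ outside the shrinking ball $\mathcal{B}_i$. A standard invariance-type argument—identical in spirit to the one behind Corollary~\ref{cor w-Iss2}—then forces $|x(t)|\to 0$: once $i$ is large enough, any escape from $\mathcal{B}_i$ triggers a strict decrease of $V$, and since $V$ is bounded below this is only possible finitely often before the trajectory is trapped inside each successive $\mathcal{B}_i$. The main delicate point will be handling the transition instants $t_i$ at which the index $i$ jumps (so $\mathcal{B}_i$ changes discontinuously); this is benign because $V$ is continuous in $t$ and the jumps of $\mathcal{B}_i$ are toward smaller sets. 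I expect this bookkeeping—matching the jumps in $\check{\sigma}(t,r)$ with the monotone decrease of $V$ to rule out non-convergent limit behavior—to be the main obstacle, but it follows the template of Theorem~\ref{thm practical} where practical stability with shrinking bound was established in the continuous case.
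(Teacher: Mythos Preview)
Your overall strategy matches the paper's: bound $\dot V$ via Assumption~\ref{assumption1} and the triggering rule, obtain shrinking balls, and use the $\Delta$--upper bound in (\ref{2 time}) to force $i\to\infty$. But there is a genuine gap, and it comes from reading (\ref{disc_perturbation}) literally. That display contains a typo: the discrete perturbation is meant to be $\hat\kappa\,\theta^{i}/(c\,i!)$, not $\hat\kappa\,e^{\theta^{i}}/(c\,i!)$. Every subsequent use in the paper confirms this---the proofs of Theorem~\ref{thm practical2}, Theorem~\ref{thm_disc}, and of this very corollary all carry the factor $\theta^{i}/i!$; the bias in (\ref{dissipative discrete}) is obtained from $\sum_{i\ge 1}\theta^{i}/i!\le e^{\theta}$; and Example~\ref{example3} takes $\theta=5$. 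With the intended term, $\theta^{i}/i!\to 0$ for \emph{every} $\theta>0$, so your extra hypothesis $\theta\in(0,1)$ is both unneeded and illegitimate: the corollary is asserted under the assumptions of Theorem~\ref{thm practical2}, which place no such restriction on $\theta$. As written, your argument proves only a special case.

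Two smaller points once you correct the perturbation term. First, you need not set $w\equiv 0$: Assumption~\ref{assumption1} already gives $\dot V\le -\bar\sigma(|x|)+\beta_1(|e|)$ for every $w\in\mathscr W_Q$, and the corollary (like Corollary~\ref{cor w-Iss2}) is stated with disturbance present. Second, your claim that the radius of $\mathcal B_i$ ``decays monotonically'' fails for $\theta>1$, since $\theta^{i}/i!$ increases up to $i=\lfloor\theta\rfloor$ before decreasing; the paper deals with this by building the nested attracting sets $\Omega_i$ only for $i\ge\lfloor\theta\rfloor$. With these fixes your proof is essentially the paper's.
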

\begin{thm}\label{thm_disc}
Under the assumptions of Theorem \ref{thm practical2} and in absence of disturbances, the event-based system $\mathscr{G}_e$ is $(\infty \rightarrow \bar{\sigma}^{-1}({\hat{\kappa}_\theta}/(1-c)))$-stable, where ${\hat{\kappa}_\theta}\doteq {\hat{\kappa}\theta^{\lfloor \theta \rfloor}}/{{\lfloor \theta \rfloor}!}$.
\end{thm}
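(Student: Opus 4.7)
The plan is to adapt the proof of Theorem~\ref{thm practical} (the continuous-perturbation counterpart) to the discrete setting. The first step is to derive a dissipation inequality for $V$ along trajectories of the zero-input event-based system under the discrete rule~(\ref{discrete execution}). Within the $i$-th intersampling interval the triggering condition forces $\bar{\beta}_1(|e|) < c\check{\psi}(|x|)$; combining this with Assumption~\ref{assumption1}, the inequality $1+\sigma_0(|x|)\ge 1$, and the definitions of $\check{\psi}$ and $\check{\sigma}$ (exactly as in the proof of Theorem~\ref{thm practical0}), I would obtain
\begin{equation*}
\dot{V}(x)\le -(1-c)\bar{\sigma}(|x|)+\frac{\hat{\kappa}\,\theta^{i}}{i!}.
\end{equation*}

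The second step is to bound the discrete perturbation $\hat{\kappa}\,\theta^{i}/i!$ uniformly in $i\in\mathbb{Z}_{\ge 0}$. A short calculation with the ratio $(\theta^{i+1}/(i+1)!)/(\theta^{i}/i!)=\theta/(i+1)$ shows that the sequence $\{\theta^{i}/i!\}$ is non-decreasing for $i+1\le\theta$ and decreasing for $i+1>\theta$, so its maximum is attained at $i=\lfloor\theta\rfloor$ and equals $\theta^{\lfloor\theta\rfloor}/\lfloor\theta\rfloor!$. Hence $\hat{\kappa}\theta^{i}/i!\le\hat{\kappa}_{\theta}$ for every $i$, which yields
\begin{equation*}
\dot{V}(x)\le -(1-c)\bar{\sigma}(|x|)+\hat{\kappa}_{\theta},
\end{equation*}
so that $\dot{V}<0$ whenever $|x|>\rho\doteq\bar{\sigma}^{-1}(\hat{\kappa}_{\theta}/(1-c))$.

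Finally I would verify the three items of Definition~\ref{practical stability} with $\varsigma=\infty$ and this value of $\rho$ via a standard sublevel-set argument. Using the class-$\mathcal{K}_\infty$ bounds $\sigma_1(|x|)\le V(x)\le\sigma_2(|x|)$ coming from Definition~\ref{def ISS}, any sublevel set $L_c=\{V\le c\}$ with $c\ge\sigma_2(\rho)$ has its boundary contained in $\{|x|\ge\rho\}$, so $\dot{V}\le 0$ on $\partial L_c$ and $L_c$ is positively invariant. For condition~(a), given $\epsilon>\rho$ I would pick $c_\epsilon\in[\sigma_2(\rho),\sigma_1(\epsilon)]$ and set $\delta(\epsilon)=\sigma_2^{-1}(c_\epsilon)$, so that $\{|x|\le\delta\}\subset L_{c_\epsilon}\subset\{|x|<\epsilon\}$. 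Condition~(b) follows by the same construction with $c=\max\{\sigma_2(r),\sigma_2(\rho)\}$, giving $\upsilon(r)=\sigma_1^{-1}(c)$. For condition~(c), outside $\{|x|\le\epsilon\}$ one has $\dot{V}\le -(1-c)(\bar{\sigma}(\epsilon)-\bar{\sigma}(\rho))<0$, which produces an explicit upper bound $T(r,\epsilon)$ on the time needed to enter $L_{c_\epsilon}\subset\{|x|<\epsilon\}$ from $L_{\sigma_2(r)}$. The only nontrivial step is the second one---identifying the sharp uniform bound $\hat{\kappa}_{\theta}$ on the discrete perturbation, which pins down the practical-stability radius; once this is in place, the remainder is a direct transcription of the proof of Theorem~\ref{thm practical}, with the continuously decaying term $\kappa e^{-\zeta t}$ replaced by its discrete analogue $\hat{\kappa}\theta^{i}/i!$.
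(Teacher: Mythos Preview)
Your proposal is correct and follows essentially the same approach as the paper: derive the piecewise dissipation inequality $\dot V\le -(1-c)\bar\sigma(|x|)+\hat\kappa\theta^{i}/i!$ on $[t_{i-1},t_i)$, replace the perturbation by its maximum $\hat\kappa_\theta$ (attained at $i=\lfloor\theta\rfloor$), and then verify items (a)--(c) of Definition~\ref{practical stability}. The only minor deviation is in item~(c): the paper integrates the original piecewise inequality directly and bounds $\sum_i\theta^i/i!$ by $e^\theta$ together with $\max_i(t_i-t_{i-1})\le\Delta$, whereas you work throughout with the uniform bound $\hat\kappa_\theta$; both routes yield an explicit finite $T(r,\epsilon)$.
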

In the rest of this section, we provide a discrete counterpart to the analysis given in section \ref{subsection lower bound increase}. Indeed, we design $\hat{\kappa}$ and $\theta$ in (\ref{discrete execution}) so that given some $N\in\mathbb{Z}_{>0}$ and ${\tau}^*\in\mathbb{R}_{>0}$, we have $\tau_2\geqslant\tau+\tau^*$ at least for $t\in [0,\bar{T}]$, where $\tau_2$ denotes the lower bound on intersampling periods of system $\mathscr{G}_e$ under execution rule (\ref{discrete execution}).

Choosing $\Delta>\tau+\tau^*$ in (\ref{2 time}), it remains to consider the case where $t_i=t^{\prime}_i$ and hence $t_i$ satisfy the triggering condition (\ref{discrete execution}). Even a more conservative event condition can be obtained if the $i$-th execution of control task is fulfilled when the following holds
\begin{equation}
	\bar{\beta}_1(|e|)\geqslant c\psi(|x|)+\tilde{\kappa}\frac{{\theta^i}}{i!},
\end{equation}
where $\tilde{\kappa}={\hat{\kappa}}/({1+\sigma_0(\bar{\varepsilon})})$. Now using the same procedures as (\ref{cont analysis 1}) and (\ref{cont analysis 2}) were derived, we obtain a discrete version of event condition (\ref{cont analysis 3}):
\begin{equation}\label{disc analysis 3}
{c}\bar{\varepsilon} L_{\psi^{-1}}^{-1} (\hat{L}\frac{|e(t_i^-)|}{|x(t_i^-)|}-1) \geqslant \tilde{\kappa}\frac{\theta^i}{i!}.
\end{equation}
Our goal is to design $\hat{\kappa}$ and $\theta$ such that the solution ${|e(t_i^-)|}/{|x(t_i^-)|}$ to the above inequality satisfies $L^*|e(t_i)|\geqslant |x(t_i)|$ for the first $N$ triggerings, where $L^*$ is defined such that $y(\tau+\tau^*)=1/L^*$ and $y$ is the solution to (\ref{differnetial eq}). 
We now consider two cases.
\\Case $1$: If $1\leqslant N \leqslant N_\theta\doteq  \smash{\displaystyle\operatorname\max}\{i|{\theta^i}/{i !}\geqslant \theta \}$ we have $\min_{1\leqslant i\leqslant N}\{{\theta^i}/{i !}\}=\theta$, {\it i.e.}, the discrete function ${\theta^i}/{i !}$ takes its minimum value at $i=1$, and hence we can choose $\hat{\kappa}$ and $\theta$ so that
\begin{equation}
\hat{\kappa}=\frac{{c}\bar{\varepsilon}}{\theta} L_{\psi^{-1}}^{-1} (\frac{\hat{L}}{L^*}-1) ({1+\sigma_0(\bar{\varepsilon})}).
\end{equation}
That is, for any $1\leqslant N \leqslant N_\theta$, the first $N_\theta$ inter-event intervals are lower bounded by the solution $\tau_2$ of $y(\tau_2)={1}/{L^*}$. 
\\Case $2$: For $N>N_\theta$ we have $\min_{1\leqslant i\leqslant N}\{{\theta^i}/{i !}\}={\theta^N}/{N !}$ and we can pick $\hat{\kappa}$ and $\theta$ such that
\begin{equation}
\hat{\kappa}={c}\bar{\varepsilon} L_{\psi^{-1}}^{-1} (\frac{\hat{L}}{L^*}-1)({1+\sigma_0(\bar{\varepsilon})}) \frac{N !}{\theta^N},
\end{equation}
{\it i.e.}, for the first $N$ samplings, the inter-event times are lower bounded by the solution $\tau_2$ of $y(\tau_2)={1}/{L^*}$. Therefore, the lower bounds on intersampling periods are the solutions $\tau_2$ and $\tau$ to
\begin{equation}
\begin{cases}
\begin{rcases}
y(\tau_2)=\frac{1}{L^*},&\text{for } 0\leqslant i \leqslant N_\theta\\
y(\tau)=\frac{1}{\bar{L}},&\text{for } i> N_\theta
\end{rcases}\text{for } 1\leqslant N\leqslant N_\theta
\\
\\
\begin{rcases}
y(\tau_2)=\frac{1}{L^*},&\text{for } 0\leqslant i \leqslant N\\
y(\tau)=\frac{1}{\bar{L}},&\text{for } i> N\\
\end{rcases}\text{for } N>N_\theta.
\end{cases} 
\end{equation}

Note that while the continuous and discrete scenarios proposed in this section have similarities, they have different {\color{black}structures that lead} to different properties. The primary difference between these methods is that while in continuous time the decaying term is a function of time and will vanish as $t$ grows, this is not the case in discrete scenario. The decaying term in discrete approach is a function of the sampling instant and not time. Thus, if only a few triggering instants occur, the effect of perturbation term may be considerable, regardless of the time that has passed. This important feature of discrete scenario can be seen from the examples provided in next section and shows that in contrast to continuous counterpart, the decaying term may still be kept effective for a much longer time.
\section{Illustrative Examples}\label{section examples}
In this section we illustrate the $\mathcal{L}_2$-stabilizing triggering design through several examples. Our examples are simple enough so that the $\mathcal{L}_2$-gain analysis can be done analytically, thus enabling us to provide further insight.  We show in Example \ref{exmple2} that if some of the conditions of Theorem \ref{thm main} are not satisfied, it may still be possible to relax these conditions by redefining triggering condition. In Example \ref{example3}, we replace the Euclidean vector norm with the infinity norm to obtain the $\mathcal{L}_2$-gain. This is important since this change facilitates the computation of the Lyapunov function.
\\
We continue with the following remarks, containing important points regarding the simulations.
\begin{rmk}
	The examples are constructed according to our design principle, {\it i.e.} performance is defined in ${\mathcal L}_2$-sense and the design is such that preserves the ${\mathcal L}_2$ gain of the continuous-time design. In this approach, we have purposely ignored transient behaviour and pushed the design to the extreme to save communications during transient, something that should, of course, be corrected in a more realistic design. 
	The simulations indeed show a deterioration of the transient response. This should be interpreted as indicative that, in general, ${\mathcal L}_2$ performance does not, in any way, imply good transient behaviour. 
\end{rmk} 
\begin{rmk}
	Note that the plots for verification of $\mathcal{L}_2$-gain and  system's trajectories are provided for one single initial condition. However, the discussion on number of samples and minimum inter-event times are provided based on averaging $100$ initial conditions. Thus, one should be careful that since the $\mathcal{L}_2$-gain plots depend on initial condition, no general conclusion (such as comparing the $\mathcal{L}_2$-gain of continuous-time and event-based systems) other than verification of the proposed $\mathcal{L}_2$-gain for different scenarios can be made from them.
\end{rmk}
\begin{exmp}\label{example1}
		Consider the following first order system
		\begin{equation}\label{eq_ex1}
		\dot{x}=-x^3+xw+u,~~~
		z=x,
		\end{equation}
		where $x\in \mathbb{R}$, $u=-k(x+e)$ for some $k\in\mathbb{R}_{>0}$ is the control input, $e$ is the measurement error and $w$ is the exogenous disturbance belongs to the set $\mathscr{W}_Q$ defined in (\ref{def_W_Q}). Choosing the Lyapunov function $V(x)={x^2}/{2}$ it is straightforward to show that the system is ISS with respect to $e$ and $w$. Assuming $e$ to be zero all the time, the continuous-time system is finite gain locally $\mathcal{L}_2$-stable. To show this, we take $W(x)=\lambda V(x)$ for some $\lambda\in\mathbb{R}_{>0}$. Now since $\dot{V}(x)=-kx^2 -x^4+x^2w\leqslant -kx^2 +(\gamma_1-1)x^4+w^2/({4\gamma_1})$, where $\gamma_1\in\mathbb{R}_{>0}$, we will have $\dot{W}(x)\leqslant -\lambda kx^2 +\lambda w^2/({4\gamma_1})$ for $\gamma_1\leqslant 1$. As a consequence, the minimum upper bound on the $\mathcal{L}_2$-gain of system (\ref{eq_ex1}) is $1/(2\sqrt{\gamma_1 k})$ (when $\gamma_1=1$). Finally by choosing $U=V+W$ we have $\dot{U}(x)\leqslant \lambda{w^2}/{4}-\lambda kz^2-kx^2+w^2/4+(1+\lambda)k|x||e|$, which by restricting $e$ and $w$ to satisfy $|e|\leqslant{{c} |x|}/{(1+\lambda)}$ and $|w(t)|\leqslant\gamma_3(|x(t)|)=2\sqrt{\bar{{c}}k}|x(t)|$, reads as $\dot{U}(x)\leqslant\lambda{w^2}/{4}-\lambda kz^2-(1-{c}-\bar{{c}})kx^2$. Thus we can design ${c}$ and $\bar{{c}}$ so that ${c}+\bar{{c}}<1$ and hence ensure that the event-based system is finite gain $\mathcal{L}_2$-stable. Also it is not difficult to verify that $\dot{U}(x)<0$ and hence $|x|$ monotonically converges to zero. This enables us to find $Q$ assuming $x_0\in\mathscr{X}_0$, {\it i.e.,} $|x_0|\leqslant \varepsilon$. Indeed, we can write $|w(t)|\leqslant2\sqrt{\bar{{c}}k}|x(t)|\leqslant 2\sqrt{\bar{{c}}k}|x_0|$. Hence taking $Q=2\sqrt{\bar{{c}}k}\varepsilon$ guarantees $w(t)\in\mathscr{W}_Q$.
		
We continue the discussion carried out above, numerically. Taking $k=1$, $\varepsilon= 1$, ${c}=0.5$, $\bar{{c}}=0.45$, $Q=1.34$, $\lambda=0.5$, $\kappa=15$, $\zeta=1.6$, $\hat{\kappa}=1.5$, $\theta=1$ and $\Delta=1.1$, we arrive at the execution rule $|e|={|x|}/{3}$. Consequently we have $\dot{U}(x)\leqslant|w|^2/8-|z|^2/2$, where $U(x)={3x^2}/{4}$. It then follows that the event-based system is finite gain $\mathcal{L}_2$-stable with zero bias and has $\mathcal{L}_2$-gain less than or equal to ${1}/{2}$.
To confirm the value of $\mathcal{L}_2$-gain numerically, we integrate $\dot{U}(x)-|w|^2/8+|z|^2/2\leqslant 0$ to get $U(x) - U(x_0) -{\frac{1}{8}}\int_{0}^{t}  |w(\tau)|^2d\tau+{\frac{1}{2}}\int_{0}^{t}|z(\tau)|^2d\tau\leqslant 0$
which by defining $\Gamma=\frac{1}{2}$, $\mu=2U$ and using positive definiteness of $U$ reduces to
\begin{equation}\label{eq_ex_gamma}
\frac{\int_{0}^{t}|z(\tau)|^2d\tau}{\int_{0}^{t} |w(\tau)|^2d\tau}\leqslant {\Gamma^2}+\frac{\mu(x_0)}{\int_{0}^{t} |w(\tau)|^2d\tau}
\end{equation}
and is verified in \cref{fig:gamma2} for $x_0=1$.
\begin{figure}[H]
	\vspace{-1em} 
	\centering
	\includegraphics[width=1.01\columnwidth]{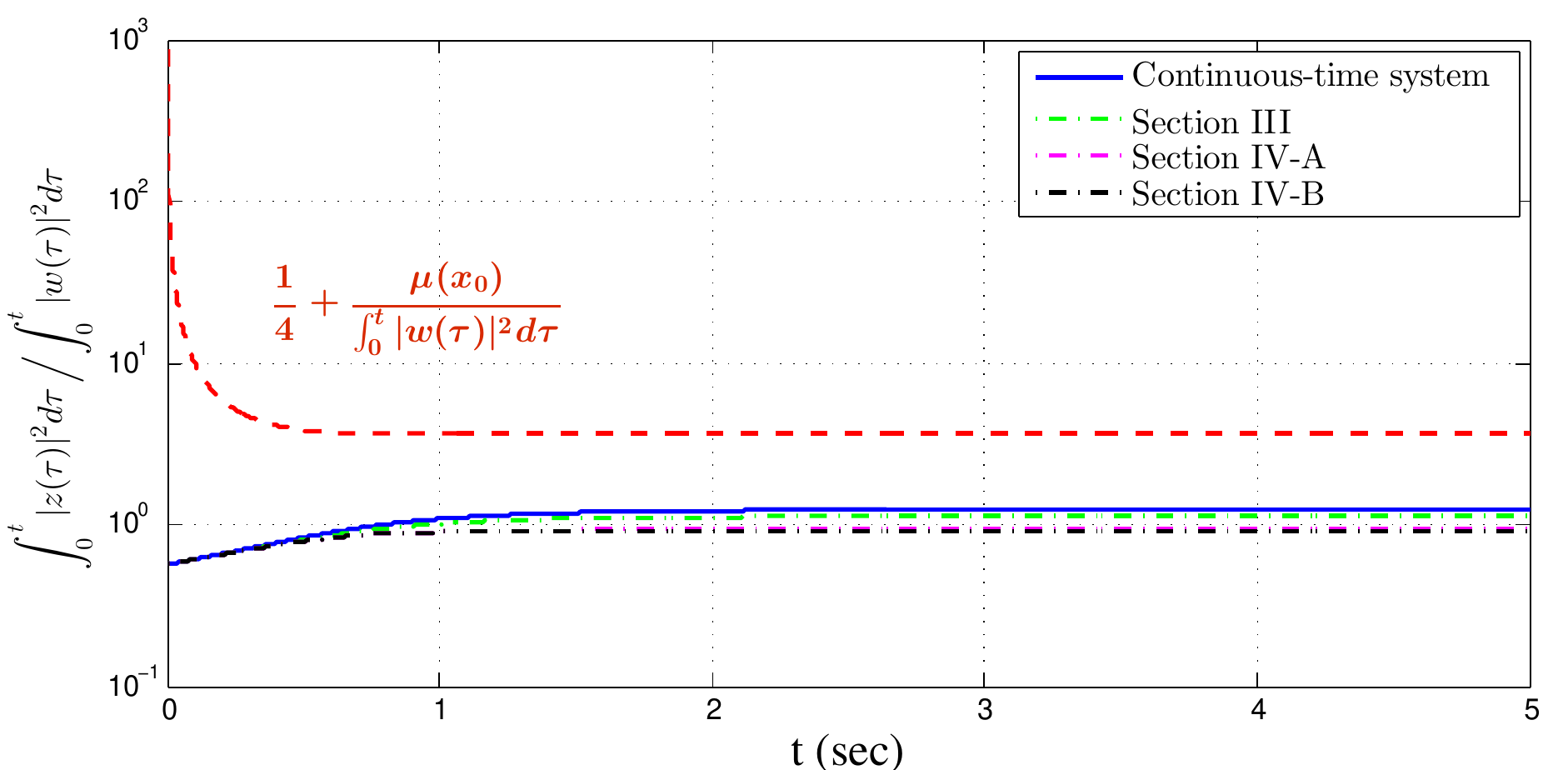}
	\vspace{-1.5em}
	\caption{Verification of $\mathcal{L}_2$-gain.}
	\label{fig:gamma2}
	\vspace{-1em} 	
\end{figure} 
Also the corresponding state trajectory of the system is shown in \cref{fig:gamma1}.
\begin{figure}[H]
	\vspace{0em} 
	\centering
	\includegraphics[width=1.01\columnwidth]{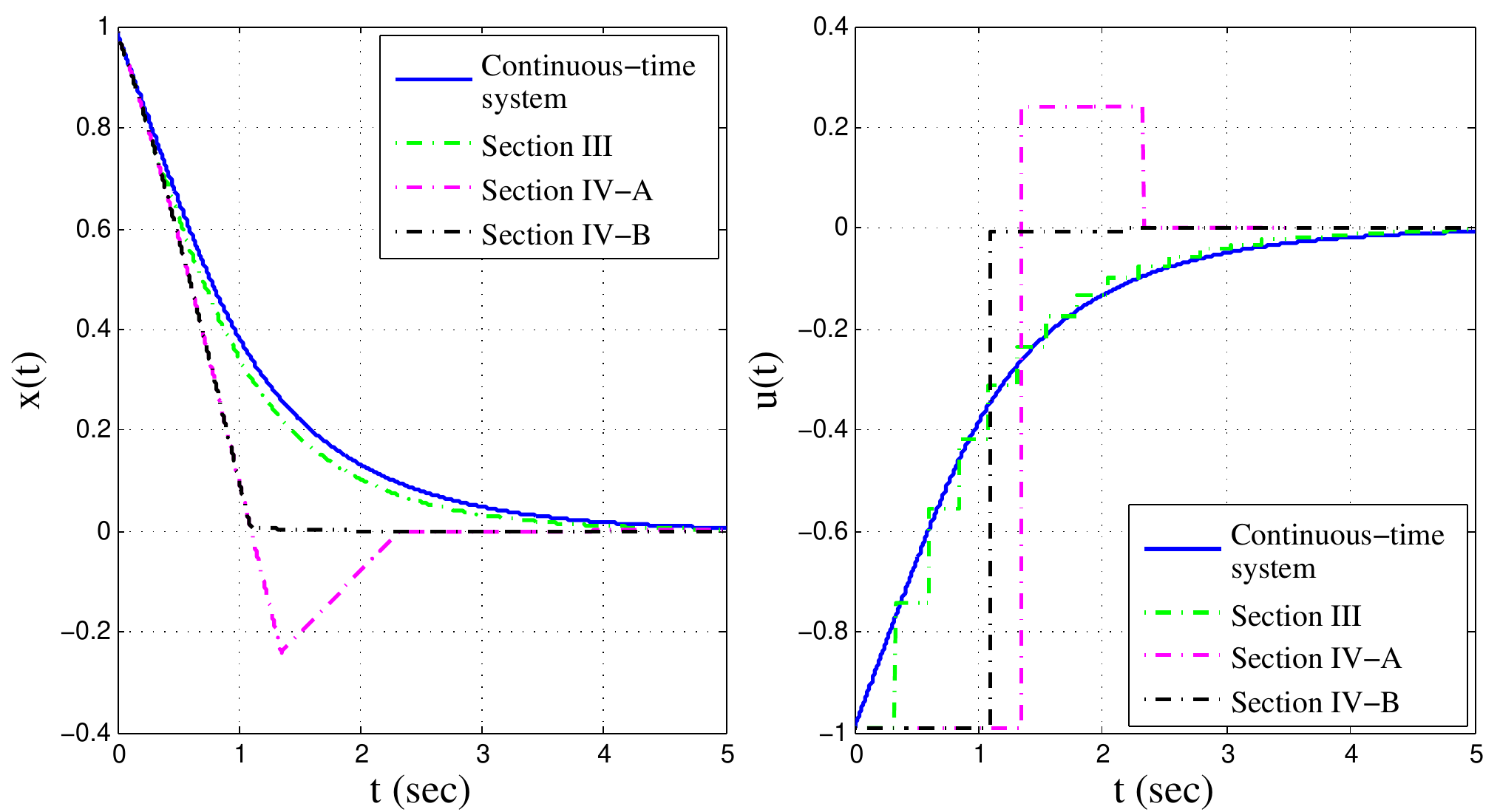}
	\vspace{-1em}
	\caption{System's trajectory (Left). Actuator signal (Right).}
	\label{fig:gamma1}
	\vspace{-1em}	
\end{figure} 
It is worth noticing that the $\mathcal{L}_2$-gain preserving nature of our proposed method can be inferred from \cref{fig:gamma2} as the curves for the event-based scenarios lie under the one for continuous-time system. 		
Also a comparison of the number of triggering instants and the minimum intersampling period is given in the following table, where we average the results obtained from $100$ initial conditions, uniformly distributed in $[-1,1]$. The results of Table \ref{tab:table1} clearly suggests that the effectiveness of the methods proposed in Section \ref{section improve inter-execution time} on the sampling rate and intersampling interval diminishes with the passing of time.
\begin{table}[H]
	\vspace{-0.5em}
	\centering
	\caption{Comparison of different scenarios.}
	\label{tab:table1}
	\begin{tabular}{ccccc}
		\toprule[1.5pt]
		&Simulation & Section \ref{section main result} & \multicolumn{2}{c}{Section \ref{section improve inter-execution time}} \\
		&time (sec) & & \ref{section_cont} & \ref{subsection discrete} \\
		\midrule
		\multirow{3}{*}{Number of samples} & 10 & 40 & 11 & 10 \\
		& 30 & 120 & 48 & 38 \\
		& 100 & 400 & 286 & 318 \\
		\cline{2-5}
		\multirow{3}{*}{Min inter-event time} & 10  & 0.24 & 0.66 & 1.1 \\
		& 30 & 0.24 & 0.46 & 0.27\\
		& 100 & 0.24 & 0.25 & 0.25 \\
		\bottomrule[1.5pt]
	\end{tabular}
\end{table}
\end{exmp}

The proof of Theorem \ref{thm main execution time} suggests that nonzero inter-event times can be guaranteed if instead of condition (ii) in Theorem \ref{thm main}, the function $\psi^{-1}(\bar{\beta}_1/c)$ is Locally Lipschitz-continuous in $\mathbb{R}^n$. Neither of the these conditions hold in the next examples, however, we can still prove this important property for the event-based system through defining a new triggering condition.\\
\begin{exmp}\label{exmple2}
In the next example, we consider the following second order system 
\begin{equation}
\begin{cases}
\dot{x}_1=x_2, \\
\dot{x}_2=-h(x_1)+u+w,\\
z=x_2.
\end{cases}
\end{equation}
where $u$ is the control input, $w$ is the exogenous disturbance and is restricted to satisfy $|w|\leq 1$ and $z$ is the measured output. We design $u=-k(x_2+e)$ where $e$ is the measurement error in $x_2$. The nonlinear function $h:\mathbb{R}\mapsto\mathbb{R}$ is assumed to be in the sector $[c_1,c_2]$, {\it{i.e.,}} $c_1r^2\leqslant rh(r)\leqslant c_2r^2$ for any $r\in \mathbb{R}$.
We first show that, in view of (\ref{last_help}), the system is ISS with respect to $e$ and $w$. To this end, let us consider Lyapunov function $V(x)=\frac{1}{2}x^TPx+2\int_{0}^{x_1}h(r)dr$,
where $x=[x_1~x_2]^T$ and $P=[1~1;1~2]$.
Then by choosing $k=1$ we have $\dot{V}(x)=-x_1h(x_1)-x_2^2+2x_2(-e+w)+x_1(-e+w)$. Next we can rewrite the last two terms in $\dot{V}$ as $2x_2(-e+w)=-\frac{1}{4}(x_2+4e)^2-\frac{1}{4}(x_2-4w)^2+\frac{1}{2}x_2^2+4e^2+4w^2$ and $x_1(-e+w)=-\frac{1}{4}(x_1+2e)^2-\frac{1}{4}(x_1-2w)^2+\frac{1}{2}x_1^2+e^2+w^2$.
Assuming $h$ to be in the sector $[1,2]$, we conclude that $r^2\leqslant rh(r)\leqslant 2 r^2$. Taking this into account, we obtain $\dot{V}(x)\leqslant -\bar{\sigma}(|x|)+\beta_1(|e|)+\beta_2(|w|)$, where $\bar{\sigma}(r)=r^2/2$ and $\beta_1(r)=\beta_2(r)=5r^2$.

We also claim that when $e\equiv 0$, the continuous-time system is finite gain locally $\mathcal{L}_2$-stable. To see this, consider the Lyapunov function $W(x)=\lambda V(x)$ for some $\lambda\in\mathbb{R}_{>0}$. Then since $\dot{V}(x)=-x_1h(x_1)-x_2^2+2x_2w+x_1w = -x_1^2 (1-\epsilon_1)-x_2^2(1-\epsilon_2)+(\frac{1}{4}\epsilon_1^{-1}+\epsilon_2^{-1})w^2 = -\epsilon_1 (x_1-\frac{1}{2}\epsilon_2^{-1}w)^2-\epsilon_2(x_2-\epsilon_2^{-1}w)^2$,
we conclude $\dot{W}(x)\leqslant\lambda(1-\epsilon_2)z^2+\lambda(\epsilon_1^{-1}/4+\epsilon_2^{-1})w^2$, {\it i.e.,} the continuous-time system has local $\mathcal{L}_2$-gain less than or equal to $\sqrt{(4\epsilon_1+\epsilon_2)/(4\epsilon_1\epsilon_2(1-\epsilon_2))}$. The minimum value of this upper bound on the $\mathcal{L}_2$-gain of the system is $4.4861$ and obtained by setting $\epsilon_1=1$ and $\epsilon_2=0.4721$.

To verify condition (\ref{bound_on_w}) we restrict $w$ to satisfy $|w(t)|\leqslant\gamma_3(|x(t)|)$, where $\gamma_3(r)= \sqrt{{\bar{{c}}}/{2}}r$ for some $\bar{{c}}\in(0,1)$ is a solution to the inequality (\ref{cond_id}). So far we have showed that Assumptions \ref{assumption2}, \ref{assumption1} hold. Therefore it suffices to verify conditions (i)-(iii) in Theorem \ref{thm main} hold as well. Condition (iii) is readily hold for functions $f$ and $k$. Also, condition (i) holds for $\sigma_3(r)=\lambda(\|P\|+2c_2)r$
since we have
\begin{equation*}
|{\frac{\partial W}{\partial x}}(x)|=\lambda\begin{vmatrix}\begin{bmatrix}
x_1+x_2+2h(x_1) \\ x_1+2x_2
\end{bmatrix}\end{vmatrix}\leqslant \lambda(\|P\|+2c_2)|x|.
\end{equation*}
Condition (ii) in Theorem \ref{thm main} is not satisfied for the given functions $\bar{\sigma}$, $\beta_1$. However, we will redefine functions $\psi$ and $\bar{\beta}_1$ in (\ref{execution rule}) and show the results of theorem are still valid. To this end, let us start with (\ref{last_help}) which can be written as $\nabla V(x) \cdot f(x,k(x+e),w)\leqslant -(1-c_0)\bar{\sigma}(|x|)+\beta_1(|e|)$ for some $c_0\in(0,1)$ when $|w|\leqslant \gamma_3(|x|)$. This is true since choosing $c_0\geqslant 5\bar{c}$ ensures $\beta_2(|w|)\leqslant c_0\bar{\sigma}(|x|)$. Therefore, (\ref{dot V}) reduces to $\dot{U}(x)\leqslant-(1-c_0)\bar{\sigma}(|x|)+\beta_1(|e|)+\sigma_0(|x|)\beta_0(|e|)+\Gamma^2|w|^2-|z|^2$. Using the definition of $\beta_1$ in this example, we can write $\beta_1(|e|)+\sigma_0(|x|)\beta_0(|e|)=\beta_1(|e|)+L_fL_k|e|\sigma_3(|x|)=(\sqrt{5}|e|+{L_fL_k}\sigma_3(|x|)/({2\sqrt{5}}))^2-{L_f^2L_k^2}\sigma_3^2(|x|)/{20}$ and hence
\begin{eqnarray}
\dot{U}(x)&{}\leqslant{}& -(1-c_0)\bar{\sigma}(|x|)-\frac{L_f^2L_k^2}{20}\sigma_3^2(|x|)+(\sqrt{5}|e|\nonumber\\&&{+}\: \frac{L_fL_k}{2\sqrt{5}}\sigma_3(|x|))^2+ {\Gamma}^2|w|^2-|z|^2.
\end{eqnarray} 
Therefore, we can define $\psi(r)\doteq \sqrt{\bar{\sigma}(r)+{L_f^2L_k^2}{\sigma_3}^2(r)/{20}}-{L_fL_k}\sigma_3(r)/({2\sqrt{5}})$ and $\bar{\beta}_1(r)\doteq \sqrt{5}r$.
Thus if for some $c\in(0,1-c_0)$ the next triggering of control task occurs when 
\begin{eqnarray}
\sqrt{5}|e|\geqslant\sqrt{{c}\bar{\sigma}(|x|)+\frac{L_fL_k^2}{20}\sigma_3^2(|x|)}-\frac{L_fL_k}{2\sqrt{5}}\sigma_3(|x|)
\end{eqnarray}
we conclude that $\dot{U}(x)\leqslant -(1-c_0-c)\bar{\sigma}(|x|)+ {\Gamma}^2|w|^2-|z|^2$. As a consequence, the event-based system has the local $\mathcal{L}_2$-gain less than or equal to $4.4861$.
Also one can check the local Lipschitz-continuity of $\psi^{-1}(\bar{\beta}_1/c)$ in $\mathbb{R}^n$ which is necessary to prove Zeno-freeness property for the system. 

To find $Q$, we have to find functions $\sigma_1$ and $\sigma_2$ so that (\ref{ISS V eq}) holds. Since $x_1^2\leqslant 2\int_{0}^{x_1}h(r)dr\leqslant 2x_1^2$ and $V(x)=(x^TPx)/2+2\int_{0}^{x_1}h(r)dr$, one can choose $\sigma_1(r)={\Sigma_{\text{min}}(P_1)}r^2/2$ and $\sigma_2(r)={\Sigma_{\text{max}}(P_2)}r^2/2$, where $\Sigma_{\text{max}}(A)$ (respectively $\Sigma_{\text{min}}(A)$) denotes maximum (respectively minimum) eigenvalue of matrix $A$, and $P_1=[3~1;1~2]$, $P_2=[5~1;1~2]$.
Thus we can take $Q=L_{\gamma_3}\bar{\varepsilon}=L_{\gamma_3}\sigma_1^{-1}(\sigma_2(\varepsilon))=\varepsilon\sqrt{(\bar{{c}}\Sigma_{\text{max}}(P_2))/(2\Sigma_{\text{min}}(P_1))}$, where $\varepsilon$ is the upper bound on the norm of admissible initial conditions.  
For numerical simulations we take $\varepsilon=1$, $\lambda=10^{-3}$, ${c}=0.7$, $\bar{{c}}=0.05$, $\kappa=\hat{\kappa}=50$, $\zeta=\theta=1$, $\Delta=4$ and $Q=0.62$. 
The verification of $\mathcal{L}_2$-gain of the system for $x_0=[0.87~0.5]^T$ is presented in Fig. \ref{fig:ex2-1} as it suggests (\ref{eq_ex_gamma}) holds for $\Gamma=4.4861$.
\begin{figure}[H]
	\vspace{-0.6em} 
	\centering
	\includegraphics[width=1.0\columnwidth]{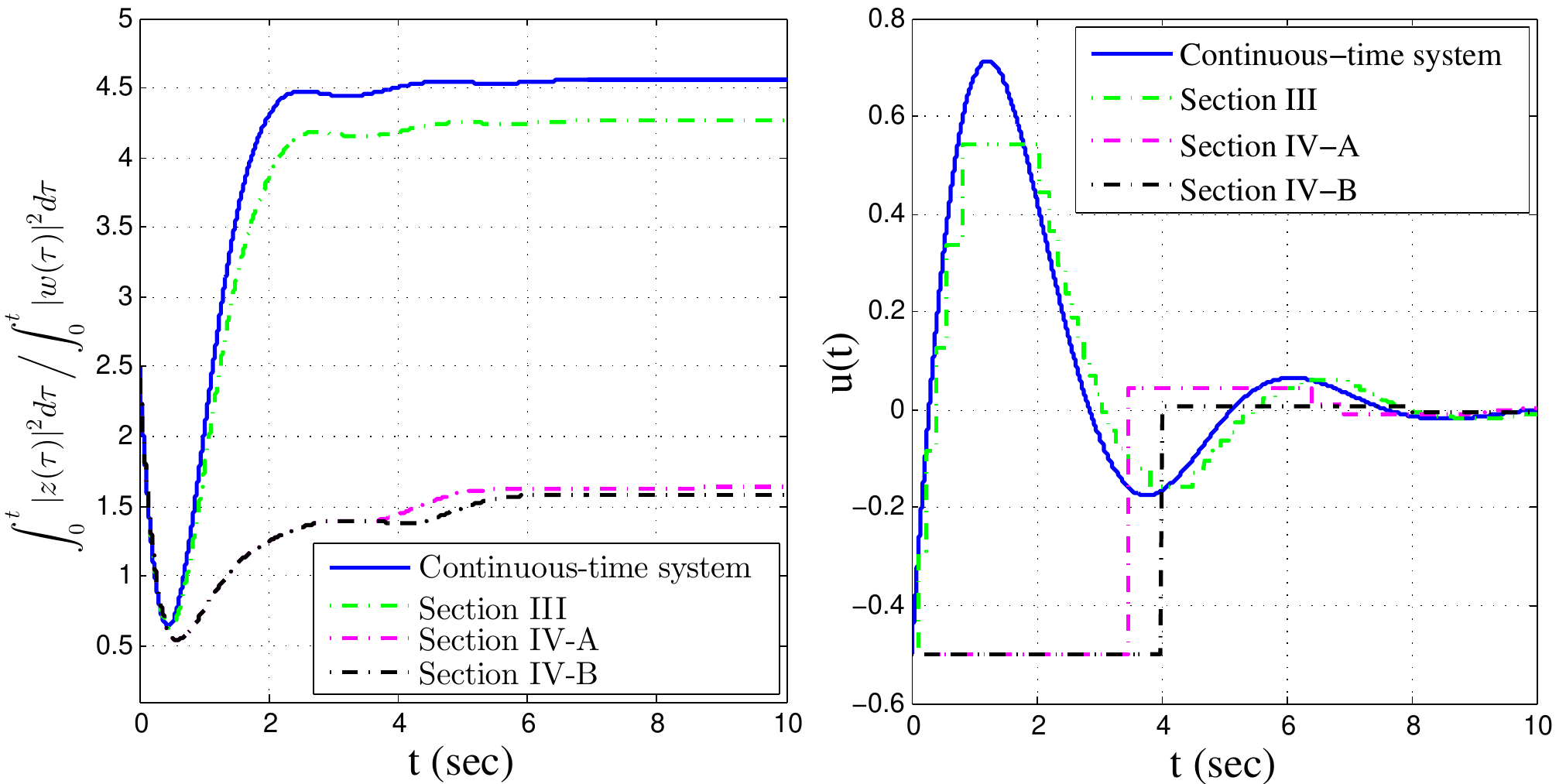}
	\vspace{-1.5em}
	\caption{Verification of $\mathcal{L}_2$-gain (Left). Actuator signal (Right).}
	\label{fig:ex2-1}
	\vspace{-1em}	
\end{figure}
The state trajectories of the system is also plotted in Fig. \ref{fig:ex2-2}.
\begin{figure}[H]
	\vspace{-1em} 
	\hspace{-0.5em}
	\includegraphics[width=1.01\columnwidth]{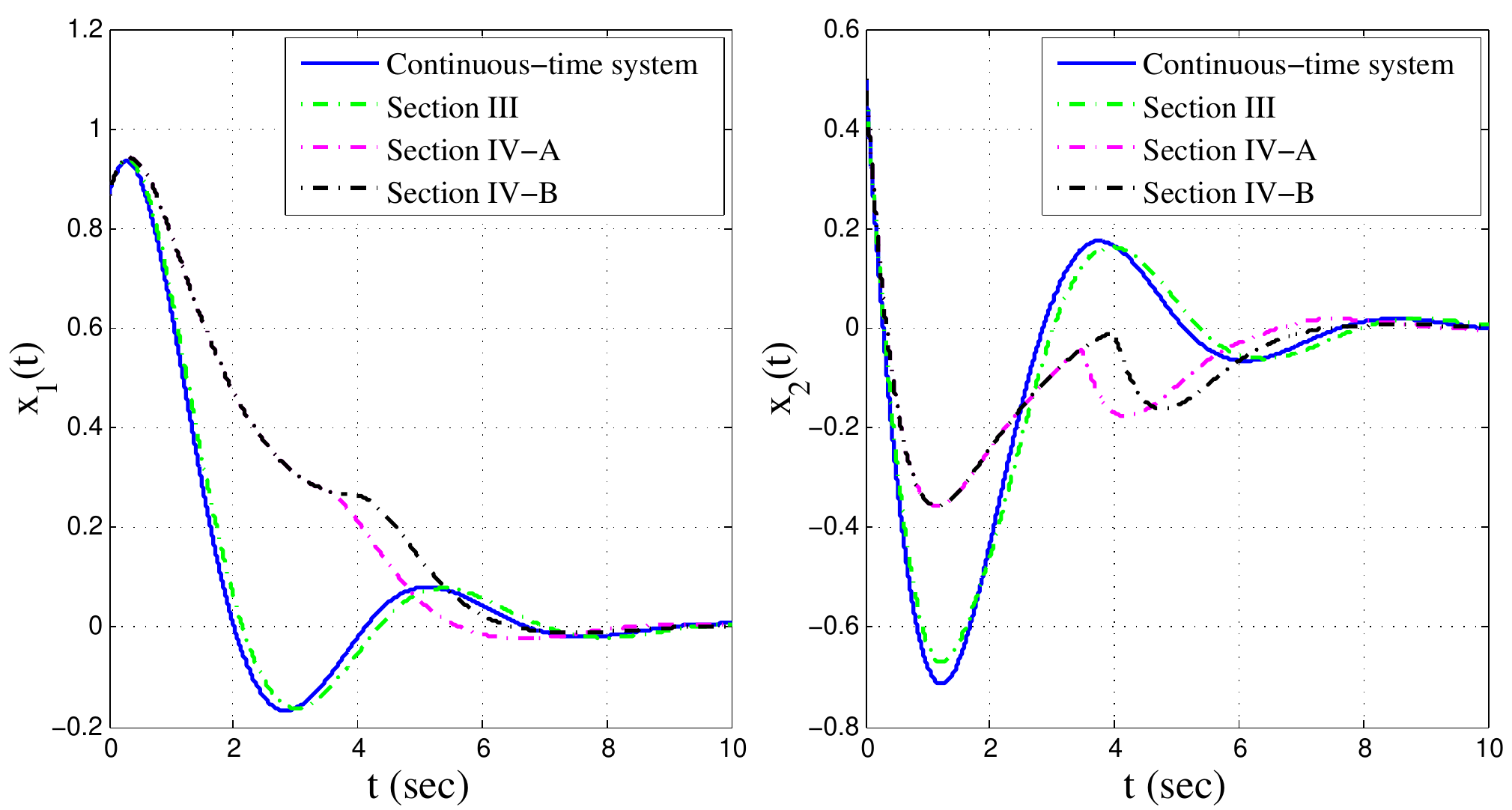}
	\vspace{-1.2em}
	\caption{System's trajectories.}
	\label{fig:ex2-2}
	\vspace{-1em}	
\end{figure}
Finally, a comparison of the number of triggering instants and the minimum intersampling period is given in Table 	\ref{tab:table2}. To this end, we consider $100$ initial conditions uniformly distributed in circle of radius $1$ and average the obtained results.
\begin{table}[H]
	\vspace{-0.5em}
	\centering
	\caption{Comparison of different scenarios.}
	\label{tab:table2}
	\begin{tabular}{ccccc}
		\toprule[1.5pt]
		&Simulation & Section \ref{section main result} & \multicolumn{2}{c}{Section \ref{section improve inter-execution time}} \\
		&time (sec) & & \ref{section_cont} & \ref{subsection discrete} \\
		\midrule
		\multirow{3}{*}{Number of samples} & 10& 47 & 6 & 3 \\
		& 30 & 139 & 89 & 8 \\
		& 100 & 466 & 415 & 70 \\
		\cline{2-5}
		\multirow{3}{*}{Min inter-event time} & 10& 0.09 & 1.21 & 4 \\
		& 30 & 0.09 & 0.1 & 4\\
		& 100 & 0.09 & 0.09 & 0.49 \\
		\bottomrule[1.5pt]
	\end{tabular}
\end{table}
\end{exmp}

In the next example, we apply the refsults of Theorem \ref{thm main} but replacing the Euclidean vector norm with the infinity norm.\\   
\begin{exmp}\label{example3}
Using similar notations as in Example \ref{exmple2}, we define the following second order system
\begin{equation}\label{ex2}
\begin{cases}
\dot{x}_1=x_2-bx_1,\\
\dot{x}_2=-ax_1^3+u+w,\\
z=x_2,\\
\end{cases} 
\end{equation}
where $|w|\leqslant 1$.
Defining Lyapunov function $V(x)=ax_1^4/4+|x|^2/2$, where $x=[x_1~x_2]^T$, we will have $\dot{V}(x)=x_1x_2-bx_1^2-abx_1^4+x_2u+x_2w$, which by taking $u=-(x_1+e_1)-(x_2+e_2)$ can be written as
\setlength{\arraycolsep}{0.0em}
\begin{eqnarray}\label{V1_dot}
\dot{V}(x) \leqslant -bx_1^2-abx_1^4-x_2^2-\sqrt{2}{|x|}_{\infty}|e|+{|x|}_{\infty}|w|
\end{eqnarray}
where $e_1$ and $e_2$ are the measurement errors in $x_1$ and $x_2$, respectively and $e=[e_1~e_2]^T$.
Then in view of the following inequality
\setlength{\arraycolsep}{0.0em}
\begin{eqnarray}
bx_1^2+abx_1^4+\frac{1}{4}x_2^2\geqslant 
\begin{cases}
b{|x|}_{\infty}^2+ab{|x|}_{\infty}^4, & \text{if } |x_1| > |x_2|, \\
\frac{1}{4}{|x|}_{\infty}^2, & \text{otherwise},\\
\end{cases} \nonumber
\end{eqnarray} 
we conclude that $bx_1^2+abx_1^4+x_2^2/4\geqslant \bar{\sigma}({|x|}_{\infty})$, where function $\bar{\sigma}(r)=\smash{\displaystyle\min}\{br^2+abr^4,r^2/4\}$ is of class $\mathcal{K}_{\infty}$. This enables us to write (\ref{V1_dot}) as
\begin{equation}
\dot{V}(x)\leqslant -\bar{\sigma}({|x|}_{\infty})+\sqrt{2}{|x|}_{\infty}|e|+{|x|}_{\infty}|w|.
\end{equation}
To show finite gain stability of continuous-time system, consider $W(x)=\lambda V(x)$ as the Lyapunov function. Thus for $e\equiv0$, we have $\dot{V}(x)=-bx_1^2-abx_1^4-x_2^2+x_2w\leqslant-z^2+zw$ which by using $zw={\hat{\epsilon}}z^2/{2}+w^2/({2\hat{\epsilon}})-{\hat{\epsilon}}{(z- w/{\hat{\epsilon}})}^2/{2}$ for some $\hat{\epsilon}\in\mathbb{R}_{>0}$, gives $\dot{W}(x)\leqslant-\lambda(1-{\hat{\epsilon}}/{2})z^2+\lambda w^2/({2\hat{\epsilon}})$. As a consequence, it is not difficult to show that the minimum upper bound on the $\mathcal{L}_2$-gain of continuous-time system (\ref{ex2}) can be achieved by choosing $\hat{\epsilon}=1$ and is equal to $1$. This value, however, may be improved by a different choice of Lyapunov function $W(x)$. 

Defining $\sigma_3(r)\doteq  \lambda r+a\lambda r^3$ for $r\in\mathbb{R}_{\geq0}$, we have ${|\frac{\partial W}{\partial x}(x)|}_{\infty}\leqslant \sigma_3({|x|}_{\infty})$ since 
\begin{eqnarray}
{|\frac{\partial W}{\partial x}(x)|}_{\infty}&{}={}&\lambda
{\begin{vmatrix}
\begin{bmatrix}
ax_1^3+x_1 \\ x_2
\end{bmatrix}
\end{vmatrix}}_{\infty}\nonumber \\
&{}\leqslant{}& 
\begin{cases}
a\lambda{|x|}_{\infty}^3+\lambda{|x|}_{\infty}, & \text{if } |x_1|(1+ax_1^2) > |x_2|, \\
\lambda{|x|}_{\infty}, & \text{otherwise},\\
\end{cases} \nonumber
\end{eqnarray}
and hence ${|\frac{\partial W}{\partial x}(x)|}_{\infty}\leqslant \smash{\displaystyle\max}\{\lambda{|x|}_{\infty},\lambda{|x|}_{\infty}+a\lambda{|x|}_{\infty}^3\}=\lambda{|x|}_{\infty}+a\lambda{|x|}_{\infty}^3$.
Therefore, by taking $U=V+W$ and following the similar lines as in deriving (\ref{dot V 2}),  we can write
\begin{eqnarray}
\dot{U}(x)&{}\leqslant{}& -\frac{\lambda}{2}z^2{+}\frac{\lambda}{2}w^2{+}{|x|}_{\infty}\Big{(}{-|x|}_{\infty}\smash{\displaystyle\min}\{b{+}ab{|x|}_{\infty}^2,\frac{1}{4}\}\nonumber\\&{}{}&{+}\sqrt{2}|e|{+}|w|{+}2\lambda L_fL_k(1+a{|x|}_{\infty}^2)|e|
\Big{)},\nonumber
\end{eqnarray}
where we used the fact that 
\begin{eqnarray}
&{}{}&\frac{\partial W}{\partial x}(x)(f(x,k(x+e),w)-f(x,k(x),w))\leqslant \nonumber\\ &{}{}&2{|\frac{\partial W}{\partial x}(x)|}_{\infty}{|f(x,k(x+e),w)-f(x,k(x),w)|}_\infty\leqslant \nonumber \\ &{}{}&2{|\frac{\partial W}{\partial x}(x)|}_{\infty}{|f(x,k(x+e),w)-f(x,k(x),w)|}\leqslant \nonumber \\ &{}{}&2L_fL_k|e|{|\frac{\partial W}{\partial x}(x)|}_{\infty}\nonumber.
\end{eqnarray}
We note that since $u=k(x)=-(x_1+x_2)$, it can be easily inferred that $L_k=\sqrt{2}$.
Now assuming $|w(t)|\leqslant\gamma_3({|x(t)|}_{\infty})$ where $\gamma_3(r)= \smash{\displaystyle\min}\{br+abr^3,r/4\}\bar{{c}}$ and taking ${c}\in(0,1-\bar{{c}})$, we conclude that if the execution of control task occurs when
\begin{equation}\label{ex3_tc}
|e|\geqslant {c} \frac{{|x|}_{\infty} \smash{\displaystyle\min}\{b+ab{|x|}_{\infty}^2,\frac{1}{4}\}}{\sqrt{2}(1+2\lambda L_f(1+a{|x|}_{\infty}^2))},
\end{equation}
the system (\ref{ex2}) is finite gain local $\mathcal{L}_2$-stable with zero bias and local $\mathcal{L}_2$-gain $\leqslant 1$. To find $Q$, let $\sigma_1(x)\leqslant V(x) \leqslant \sigma_2(x)$, where $\sigma_1(r)=r^2/2$ and $\sigma_2(r)=(2+a)r^2/4$. As a consequence, assuming initial conditions to be norm bounded by $\varepsilon$, we can take $Q=L_{\gamma_3}\bar{\varepsilon}=L_{\gamma_3}\sigma_1^{-1}(\sigma_2(\varepsilon))=L_{\gamma_3}\varepsilon\sqrt{1+{a}/{2}}$, which by choosing $b>{1}/{4}$, reduces to $Q=({\bar{{c}}\varepsilon}\sqrt{1+{a}/{2}})/{4}$.
In simulations, let $a=1$, $b=10$, $\epsilon=1$, ${c}=0.5$, $\bar{{c}}=0.45$, $\kappa=10$, $\bar{\kappa}=10$, $\zeta=1$, $\theta=5$, $\Delta=1$ and $Q=0.138$. Therefore, the only parameter left to study
the system's response is $\lambda$ which appears in triggering condition (\ref{ex3_tc}). We start our simulation with $\lambda=1$, however, the effect of this parameter on our results will be discussed later. Similar to the past examples, in the next table, we give a comparison of number of samplings and minimum intersampling times over different scenarios. The results are, indeed, the average over $100$ initial conditions uniformly distributed in the circle of radius $1$.      
\begin{table}[H]
	\vspace{-0.5em}
	\centering
	\caption{Comparison of different scenarios.}
	\label{tab:table3}
		\vspace{-0.5em}
	\begin{tabular}{ccccc}
		\toprule[1.5pt]
		&Simulation & Section \ref{section main result} & \multicolumn{2}{c}{Section \ref{section improve inter-execution time}} \\
		&time (sec) & & \ref{section_cont} & \ref{subsection discrete} \\
		\midrule
		\multirow{3}{*}{Number of samples} & 10 & 420 & 8 & 10 \\
		&30 & 1190 & 23 & 30 \\
		&100 & 3882 & 75 & 100 \\
		\cline{2-5}
		\multirow{3}{*}{Min inter-event time} &10 & 0.007 & 0.61 & 1 \\
		&30  & 0.007 & 0.57 & 1\\
		&100 & 0.007 & 0.53 & 1 \\
		\bottomrule[1.5pt]
	\end{tabular}
		\vspace{-1.2em}
\end{table}
Recalling from Definition \ref{def gain}, the system (\ref{ex2}) has local $\mathcal{L}_2$-gain $\leqslant {\Gamma}$ if for any $T$ we have (\ref{eq_ex_gamma}).
The local $\mathcal{L}_2$-gain of the system is then verified in Fig. \ref{fig:ex3-4} for $x_0=[0.87~0.5]^T$ and $\Gamma=1$. Also the corresponding state trajectories is presented in Fig. \ref{fig:ex3-3}.
\begin{figure}[H]
	\vspace{-0.6em} 
	\hspace{-0.4em}
	\includegraphics[width=1.01\columnwidth]{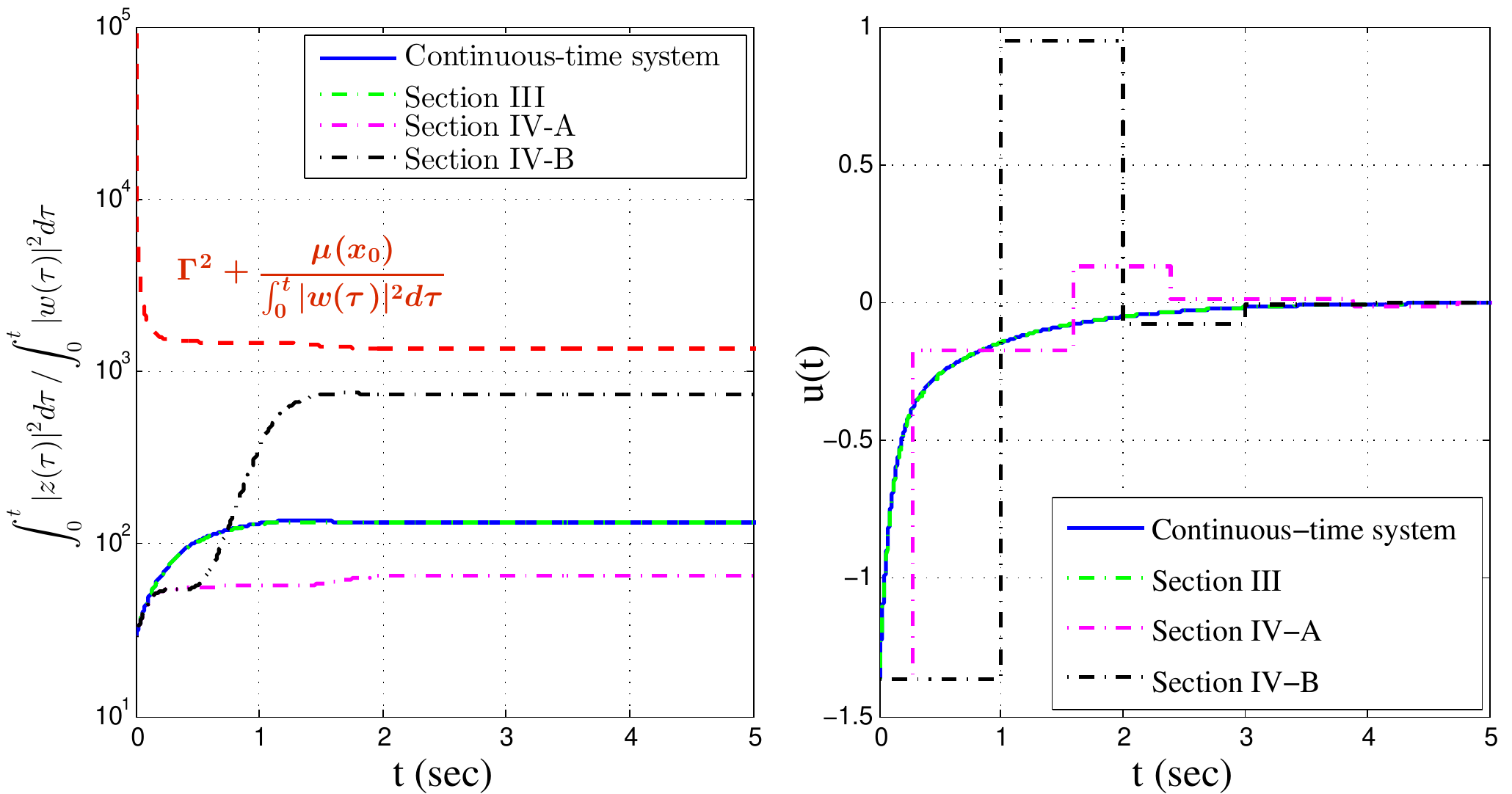}
	\vspace{-1.3em}
	\caption{Verification of $\mathcal{L}_2$-gain (Left). Actuator signal (Right).}
	\label{fig:ex3-4}
	\vspace{-1.3em}	
\end{figure}
\begin{figure}[H]
	\vspace{-0.5em} 
	\hspace{-0.3em}
	\includegraphics[width=1.01\columnwidth]{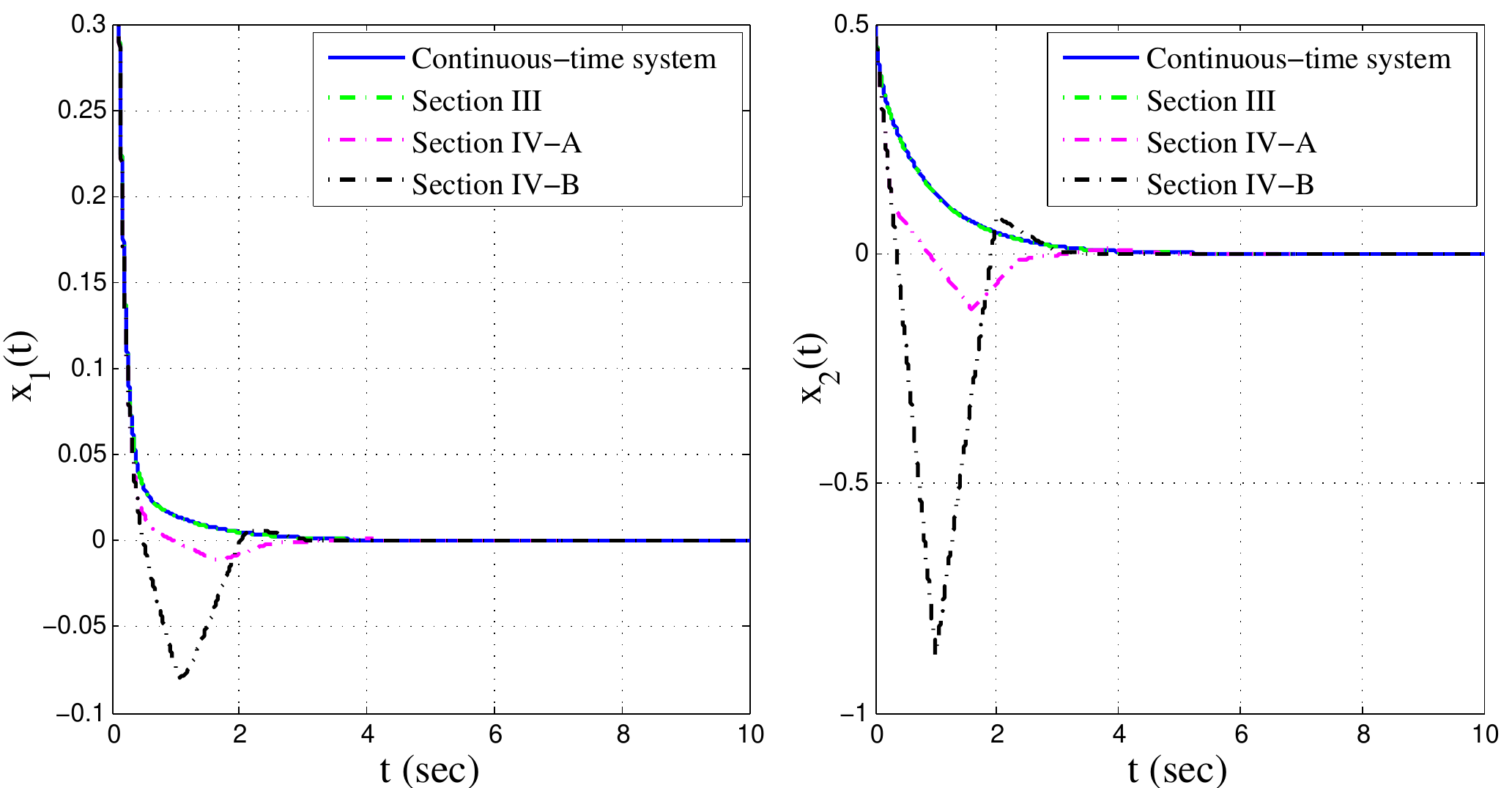}
	\vspace{-1.2em}
	\caption{System's trajectories.}
	\label{fig:ex3-3}
	\vspace{-1em}	
\end{figure} 
Finally, the effect of parameter $\lambda$ on the above results in the first $100$ seconds of response is investigated in Table \ref{tab:table4}. It suggests that $\lambda$ has negligible effect on the triggering numbers and minimum inter-event times using the methods of Sections \ref{section_cont}, \ref{subsection discrete}. However, choosing $\lambda>10^{-2}$ degrades the efficiency of the results of Section \ref{section main result}, significantly. 
\begin{table}[H]
	\vspace{-0em}	
	\centering
	\caption{Investigating the effect of parameter $\lambda$.}
	\label{tab:table4}
	\begin{tabular}{cccccc}
		\toprule[1.5pt]
		&&\multicolumn{4}{c}{$\lambda$}\\
		\cline{3-6}
		&Method of& & & & \\
		& Section &  $10^{-3}$ & $10^{-2}$ & $10^{-1}$ & 1\\
		\midrule
		\multirow{3}{*}{Number of samples} &  \ref{section main result} &149 & 152 & 176 & 420 \\
		& \ref{section_cont}& 8 & 8 & 8 & 8\\
		& \ref{subsection discrete}& 10 & 10 & 10 & 10 \\
		\cline{2-6}
		\multirow{3}{*}{Min inter-event time}&  \ref{section main result} &  0.023 & 0.022 & 0.019 & 0.007 \\
		& \ref{section_cont} &  0.670 & 0.669 & 0.680 & 0.610 \\
		& \ref{subsection discrete} & 0.999 & 0.999 & 0.999 & 0.999 \\
		\bottomrule[1.5pt]
	\end{tabular}
\end{table}
\end{exmp}

\begin{exmp}
	The following example illustrates the necessity of using a local $\mathcal{L}_2$ theory. The example shows that while under	arbitrary perturbations $w$ in $\mathcal{L}_2$ space, the event times are not necessarily guaranteed to be isolated, the local notion serves to exclude Zeno phenomenon. Consider the following linear example from \cite{event-separation}
	\begin{eqnarray}\label{eq_exmp_last}
	\dot{x}(t)=Ax(t)+Bu(t)+w(t),~~u(t)=Kx(t),
	\end{eqnarray}
	where $A$, $B$, $K$ are matrices of appropriate dimensions and the controller is applied in an event-based fashion. The desired output is taken as $z(t)=x(t)$. Assume $t_0=0$, $x_0\neq 0$, and the triggering condition $|e(t)|\geqslant p|x(t)|$ for some $p\in\mathbb{R}_{>0}$, it is shown in \cite{event-separation} that under the following choice of disturbance
	\begin{eqnarray}\label{exmp_dist}
	w(t)=((t-1)A+(t_i-1)BK)x_0-x_0,~t\in[t_i,t_{i+1})
	\end{eqnarray}
	for $t\in[0,1]$ and zero elsewhere (which is a signal in $\mathcal{L}_2(\mathbb{R}_{\geq0})$ space), the state and triggering instants are analytically given by $x(t)=(1-t)x_0$ and $t_i=1-(1+p)^{-i}$, $i\in\mathbb{Z}_{\geq0}$, respectively. It is then obvious that event times has an accumulation point at $t=1$. To address this issue, \cite{event-separation} suggests using the input-to-state practically stable (ISpS) property instead of ISS condition (\ref{ISS eq}). The proposed method, however, is not applicable to the problem studied in this paper since the $\mathcal{L}_2$-gain performance of the event-based system can not be guaranteed.   
		
	Note that the above discussion suggests that when $w$ is an arbitrary signal in $\mathcal{L}_2(\mathbb{R}_{\geq0})$, as in (\ref{exmp_dist}), the execution rules of the form (\ref{execution rule}) does not exclude the Zeno-behaviour. However, in this paper our solution to this problem is to restrict $w$ to be in the admissible space $\mathscr{W}_{Q}$ and also satisfy condition (\ref{bound_on_w}) with $\gamma_3(r)\doteq \hat{c} r$, $\hat{c}\in\mathbb{R}_{>0}$. The price we paid is then the local character of the results. We remark that $w$ defined in (\ref{exmp_dist}) does not satisfy (\ref{bound_on_w}), and hence is not a counter example of the local thoery. This is because (\ref{bound_on_w}) is violated near $t=1$. 
	
	Indeed, applying the results of Theorem \ref{thm main execution time} one can show that limiting $w$ as above, the triggering instants are separated at least by 
	\begin{eqnarray*}
	\tau=\frac{1}{\|A\|+\hat{c}}\ln{(1+\frac{\|A\|+\hat{c}}{p(\|A\|+\|BK\|+\hat{c})+\|BK\|})}.
	\end{eqnarray*}
\end{exmp}	

\section{Conclusion}
This paper addresses the disturbance rejection problem of nonlinear event-based systems. Assuming the existence of a pre-designed control law with desirable local $\mathcal{L}_2$ performance characteristics, we propose a triggering condition that preserves finite gain local $\mathcal{L}_2$-stability of the original continuous-time design. Our formulation is rather general; {\it i.e.} we consider a nonlinear plant and assume that disturbances are bounded by a Lipschitz-continuous function of the state. We also show that, in the absence of external disturbances, the control law render the origin asymptotically stable.

In addition to stability and disturbance rejection, we also study the intersampling behaviour of the proposed event-triggering condition. 
First we show that the inter-event time period is lower bounded by a nonzero constant and focus on enlarging this constant. We show
that, regardless of the construction of the event-triggered mechanism, the inter-event time period increase is actually lower bounded by a constant.
Increasing the value of this constant can be done at the expense of relaxing the stability properties of the design. 

\section{Appendix}
\begin{newproofof}
	\textit{Theorem \ref{thm_equiv}.}
	We need to show that if there exist class $\mathcal{K}_\infty$ functions $\sigma$, $\gamma_i$ ($i=1,2$) so that $\nabla V(\xi)\cdot f(\xi,k(\xi+\mu),w)\leqslant-\sigma(|\xi|)$ holds for any $\xi \in \mathbb{R}^n$, any $\mu \in \mathbb{R}^n$ and any $w\in \mathscr{W}_Q$ such that $|\xi|\geqslant \gamma_1(|\mu|)+\gamma_2(|w|)$ then one can find class $\mathcal{K}_\infty$ functions $\bar{\sigma}$, $\beta_i$ ($i=1,2$) such that $\nabla V(\xi)\cdot f(\xi,k(\xi+\mu),w)\leqslant-\bar{\sigma}(|\xi|)+\beta_1(|\mu|)+\beta_2(|w|)$ and vice versa. 
	Let us start by assuming $\nabla V(\xi)\cdot f(\xi,k(\xi+\mu),w)\leqslant-\sigma(|\xi|)$ for $|\xi|\geqslant \gamma_1(|\mu|)+\gamma_2(|w|)$. Then we can say that $\nabla V(\xi)\cdot f(\xi,k(\xi+\mu),w)+\sigma(|\xi|)\leqslant \bar{\beta}(|\mu|,|w|)$ where 
	\setlength{\arraycolsep}{0.0em}
	\begin{eqnarray}
		\bar{\beta}(|\mu|,|w|)=\smash{\displaystyle{\max}}\{\nabla V(\xi)\cdot&{}{}&  f(\xi,k(\xi+r),s)+\sigma(|\xi|)|~|r|\leqslant|\mu|,\nonumber \\ &{}{}& |s|\leqslant |w|, |\xi|\leqslant \gamma_1(|r|)+\gamma_2(|s|) \}.\nonumber
	\end{eqnarray}
	Defining class $\mathcal{K}_\infty$ functions $\beta_1(|\mu|)\doteq \bar{\beta}(|\mu|,|\mu|)$ and $\beta_2(|w|)\doteq \bar{\beta}(|w|,|w|)$ it is not difficult to verify that $\bar{\beta}(|\mu|,|w|)\leqslant\beta_1(|\mu|)$ for $|\mu|\geq |w|$ and $\bar{\beta}(|\mu|,|w|)\leqslant\beta_2(|w|)$ otherwise. Therefore we conclude that $\bar{\beta}(|\mu|,|w|)\leqslant\beta_1(|\mu|)+\beta_2(|w|)$ that proves one part of the claim. To prove the other side, assume that we have $\nabla V(\xi)\cdot f(\xi,k(\xi+\mu),w)\leqslant-\bar{\sigma}(|\xi|)+\beta_1(|\mu|)+\beta_2(|w|)$. Then we can write $\nabla V(\xi)\cdot f(\xi,k(\xi+\mu),w)\leqslant-\bar{\sigma}(|\xi|)/2$ for $\bar{\sigma}(|\xi|)/4\geqslant \beta_1(|\mu|)$ and $\bar{\sigma}(|\xi|)/4\geqslant \beta_2(|w|)$. Finally defining $\gamma_i\doteq \bar{\sigma}^{-1}(4\beta_i)$ ($i=1,2$), we may conclude that $\nabla V(\xi)\cdot f(\xi,k(\xi+\mu),w)\leqslant-\bar{\sigma}(|\xi|)/2$ for $|\xi|\geqslant \gamma_1(|\mu|)+\gamma_2(|w|)$ which completes the proof. 
\end{newproofof}

\begin{newproofof}
\textit{Lemma \ref{lem unify}.}
(a) This is an immediate consequence of Theorem \ref{thm_equiv}.
(b) We need to show that under conditions I-III, there exists a class $\mathcal{K}_\infty$ function $\gamma$ so that $\nabla V(\xi)\cdot f(\xi,k(\xi+\mu),w)\leqslant-\sigma(|\xi|)$ for $|\xi|\geqslant \gamma(|\mu|)$. To this end, let us start with conditions II and III that suggest $\gamma_4\circ(|\xi|-\gamma_2(|w|))\geqslant \gamma_4\circ(\gamma_{id}-\gamma_2\circ\gamma_3)(|\xi|)\geqslant |\xi|$. Now taking $\gamma=\gamma_4\circ\gamma_1$ we can say that if $|\xi|\geqslant \gamma(|\mu|)$, we have $|\xi|-\gamma_2(|w|)\geqslant \gamma_1(|\mu|)$ which, in view of condition I, implies that $\nabla V(\xi)\cdot f(\xi,k(\xi+\mu),w)\leqslant-\sigma(|\xi|)$.
\end{newproofof}

\begin{newproofof}
\textit{Lemma \ref{key_lem}.}
(if) From (\ref{lem_eq}) we may conclude that $\nabla V(\xi)\cdot f(\xi,k(\xi+\mu),w) \leqslant -\hat{\sigma}(|\xi|)$ for $c\psi(|\xi|)\geqslant \bar{\beta}_1(|\mu|)$. Then taking $\sigma=\hat{\sigma}$, $\gamma=\psi^{-1}(\bar{\beta}_1/c)$ and applying Lemma \ref{lem unify} part (a), the desired result is obtained. (only if) Starting from Assumption \ref{assumption1}, by adding and subtracting $\sigma_0(|\xi|)\beta_0(\mu)$ term to the right hand side of inequality (\ref{assump_eq_V_dot}), we may write $\nabla V(\xi)\cdot f(\xi,k(\xi+\mu),w) \leqslant -(1-c)\bar{\sigma}(|\xi|)-\sigma_0(|\xi|)\beta_0(|\mu|)$ for $\beta_1(\mu)+\sigma_0(|\xi|)\beta_0(|\mu|)\leqslant c\bar{\sigma}(|\xi|)$. Now defining functions $\psi(r)\doteq {\bar{\sigma}(r)}/({1+\sigma_0(r)})$, $\bar{\beta}_1(r)\doteq \smash{\displaystyle\max}{\{\beta_1(r),\beta_0(r)\}}$, we claim that if $c\psi(|\xi|)\geqslant \bar{\beta}_1(|\mu|)$ we have $\beta_1(\mu)+\sigma_0(|\xi|)\beta_0(|\mu|)\leqslant c\bar{\sigma}(|\xi|)$. This is true since $c\bar{\sigma}(|\xi|)\geqslant (1+\sigma_0(|\xi|))\cdot \smash{\displaystyle\max}{\{\beta_1(\mu),\beta_0(\mu)\}}\geqslant \beta_1(\mu)+ \sigma_0(|\xi|)\beta_0(|\mu|)$. Therefore, if $c\psi(|\xi|)\geqslant \bar{\beta}_1(|\mu|)$, (\ref{lem_eq}) holds for $\hat{\sigma}=(1-c)\bar{\sigma}$ and hence the proof is complete.
\end{newproofof}

\begin{newproofof}
\textit{Theorem \ref{thm main}.}
	Let us start with Assumption \ref{assumption1} which, in view of proof of Lemma \ref{key_lem}, implies the existence of ${\bf{C}}^1$ function $V$ such that
	\begin{equation}\label{ISS V1}
	\nabla V(x) \cdot f(x,k(x+e),w)\leqslant -(1-c)\bar{\sigma}(|x|)- \sigma_0(|x|)\beta_0(|e|)
	\end{equation}
	for any $x \in \mathbb{R}^n$, any $e \in \mathbb{R}^n$ and any $w\in \mathscr{W}_Q$ such that $c\psi (|x|)\geqslant \bar{\beta}_1(|e|)$.
	Now consider positive definite ${\bf{C}}^1$ function $U=V+W$, where $W$ is a positive definite ${\bf{C}}^1$ function that, in view of Assumption \ref{assumption2}, guarantees the finite gain local $\mathcal{L}_{2}$-stability of continuous-time system $\mathscr{G}_c$. We can easily write 
	\setlength{\arraycolsep}{0.0em}
	\begin{eqnarray}\label{dot V}
	\dot{U}(x)&{}={}&\nabla V(x)\cdot f(x,k(x+e),w)+ \nabla W(x)\cdot f(x,k(x),w)\nonumber\\&&{+}\: \nabla W(x)\cdot \big{(}f(x,k(x+e),w){-}f(x,k(x),w)\big{)}.
	\end{eqnarray}
	Also applying condition (i) and inequality (\ref{rmk lip eq4}) gives $\nabla W(x)\cdot \big{(}f(x,k(x+e),w)-f(x,k(x),w)\big{)}\leqslant \sigma_0(|x|)\beta_0(|e|)$.
	As a consequence, in view of (\ref{gain}), (\ref{ISS V1}) and (\ref{dot V}) we can write
	\setlength{\arraycolsep}{0.0em}
	\begin{equation}\label{dot V 2}
	\dot{U}(x)\leqslant -(1-c)\bar{\sigma}(|x|)+ {\Gamma}^2|w|^2-|h(x,w)|^2
	\end{equation}
	for any $x \in \mathbb{R}^n$, any $e \in \mathbb{R}^n$ and any $w\in \mathscr{W}_Q$ such that $c\psi (|x|)\geqslant \bar{\beta}_1(|e|)$. 
	Thus under event condition (\ref{execution rule}) we obtain $H_{\Gamma}(U,k(x+e))\leqslant 0$, {\it i.e.}, the event-based system $\mathscr{G}_e$ has the disturbance attenuation local $\mathcal{L}_2$-gain ${\|\mathscr{G}_e\|}_{\mathcal{L}_2}\leqslant {\Gamma}$. 
\end{newproofof}

\begin{newproofof}
\textit{Lemma \ref{pro IC}.}
	We deduce from inequality (\ref{comparison}) that $\bar{\sigma}(|x|)-\beta_1(|e|)
	-\beta_0(|e|)\sigma_0(|x|) \geqslant (1-{c})\bar{\sigma}(|x|)$ and hence $
	\bar{\sigma}(|x|)-\beta_1(|e|)\geqslant 0$. Thus we conclude from (\ref{assump_eq_V_dot}) that $\dot{V}(x)\leqslant0$ and consequently $V(x(t))\leqslant V(x(0))$ for all $t\in\mathbb{R}_{\geq0}$. Since $V$ is a radially unbounded positive definite function, we conclude that there exists $\sigma_1$, $\sigma_2\in\mathcal{K}_\infty$ so that (\ref{ISS V eq}) holds
	and hence $\sigma_1(|x(t)|)\leqslant V(x(t))\leqslant V(x(0)) \leqslant \sigma_2(|x(0)|)$. Then we can write $|x(t)|\leqslant\sigma_1^{-1}(\sigma_2(x_0))$ and since $x_0\in\mathscr{X}_0$ and $\sigma_1^{-1}$, $\sigma_2$ are class $k_{\infty}$ functions, the desired result is obtained. 
\end{newproofof}

\begin{newproofof}
	\textit{Theorem \ref{thm main execution time}.}
	From Lemma \ref{pro IC}, we have $x(t)\in\mathscr{X}$ for all $t\in\mathbb{R}_{\geq 0}$. Now in view of Properties \ref{pro lip}, \ref{pro beta} it can be inferred that function ${\psi}^{-1}({\bar{\beta}_1}/{{c}})$ is Lipschitz-continuous in any compact set in $\mathbb{R}_{\geq0}$. Let us denote by $\bar{L}$ the Lipschitz constant of this function on set $\mathscr{D}_e$ defined as $\mathscr{D}_e= \{\bar{\beta}_1^{-1}( {c}\psi(s))| s\in [0,\bar{\varepsilon}]\}=[0,\bar{\beta}_1^{-1}( {c}\psi(\bar{\varepsilon}))]$. Thus we have ${\psi}^{-1}( \bar{\beta}_1(|e|)/{{c}})\leqslant \bar{L}|e|$ which suggests that a more conservative lower bound on inter-event times can be achieved when instead of (\ref{execution rule}), the next triggering of control task occurs when  $\bar{L}|e|\geqslant|x|$.
	Following the same procedure as in (\cite{tabuada}, Theorem $\textrm{III}.1$), we can upperbound the dynamics of $y\doteq {|e|}/{|x|}$ as $\dot{y}\leqslant\big{(}1+y\big{)}{|\dot{x}|}/{|x|}$, which using (\ref{dotx}) reads as
	\begin{equation}\label{error upper bound}
	\begin{gathered}
	\dot{y}\leqslant
	L_f \Big{(}1+y\Big{)}\Big{(}L_k+L_{\gamma_3}+1+L_ky\Big{)}.
	\end{gathered}
	\end{equation}
	Thus the inter-execution times are lower bounded by the solution $\tau$ of $y(\tau)={1}/{\bar{L}}$, where $y$ is the solution to 
	\begin{equation}\label{differnetial eq}
	\dot{y}=L_f(1+y)(L+L_k y),~y(0)=0
	\end{equation}
	with $L=L_k+L_{\gamma_3}+1$. It then follows that the lower bound on inter-event times is
	\begin{equation}\label{lower bound}
	0<\tau=\frac{1}{L_f(L-L_k)}\ln{(1+\frac{L-L_k}{L\bar{L}+L_k})}.
	\end{equation} 
\end{newproofof}

\begin{newproofof}
\textit{Property \ref{pro lip}.}
	Let us define $\varepsilon_m\doteq \max_{r\in\mathscr{D}_x}\{r\}$. Also let $L_{\sigma_0}$ and $L_{\bar{\sigma}^{-1}}$ be the Lipschitz constants of functions $\sigma_0$ and $\bar{\sigma}^{-1}$ on compact sets $\{\psi^{-1}(r)|r\in\mathscr{D}_x\}=[0,\psi^{-1}(\varepsilon_m)]$ and $\{\bar{\sigma}(\psi^{-1}(r))| r\in\mathscr{D}_x\}=[0,\bar{\sigma}(\psi^{-1}(\varepsilon_m))]$, respectively. Using the fact that $\bar{\sigma}$ and $\sigma_0$ are class $\mathcal{K}_\infty$ functions, one can write
	\setlength{\arraycolsep}{0.0em}
	\begin{eqnarray}
	|\psi(r)-\psi(\tilde{r})| &{}={}& \Big{|}\frac{\bar{\sigma}(r)}{1+\sigma_0(r)}-\frac{\bar{\sigma}(\tilde{r})}{1+\sigma_0(\tilde{r})}\Big{|}
	\nonumber\\ &{}\geqslant{}& 				\Big{|}\frac{\big{(}1+\sigma_0(r)\big{)}\Delta_{r,\tilde{r}}({\bar{\sigma}})- \bar{\sigma}(r)\Delta_{r,\tilde{r}}({\sigma_0})}{(1+\sigma_0(\varepsilon_m))^2}\Big{|}
	\nonumber\\ &{}\geqslant{}& \frac{\big{(}1+\sigma_0(r)\big{)}\big{|}\Delta_{r,\tilde{r}}({\bar{\sigma}})\big{|}-
		\bar{\sigma}(r)\big{|}\Delta_{r,\tilde{r}}({\sigma_0})\big{|}}{(1+\sigma_0(\varepsilon_m))^2}\nonumber
	\end{eqnarray}
	for any $r,\tilde{r}\in\{s|\psi(s)\in \mathscr{D}_x\}$, where functional $\Delta_{r,\tilde{r}}$ is defined as $\Delta_{r,\tilde{r}}({\varphi})\doteq\varphi(r)-\varphi(\tilde{r})$ for some function $\varphi$. The Lipschitz-continuity of functions $\sigma_0$ and $\bar{\sigma}^{-1}$ imply that $|\Delta_{r,\tilde{r}}({\sigma_0})|\leqslant L_{\sigma_0}|r-\tilde{r}|$ and $|r-\tilde{r}|\leqslant L_{\bar{\sigma}^{-1}}|\Delta_{r,\tilde{r}}({\bar{\sigma}})|$ which together with Lemma \ref{pro IC} reduces the above inequality to
	\setlength{\arraycolsep}{0.0em}
	\begin{eqnarray}
	|\psi(r)-\psi(\tilde{r})| &{}\geqslant{}& \frac{{L_{\bar{\sigma}^{-1}}^{-1}}-L_{\sigma_0}\bar{\sigma}(\varepsilon_m)}{(1+\sigma_0(\varepsilon_m))^2}|r-\tilde{r}|. 
	\end{eqnarray}
\end{newproofof}

\begin{newproofof}
	\textit{Corollary \ref{cor w-Iss}.}
	From Remark \ref{rmk w=0} we conclude that $\beta_1(|e|)\leqslant{c}\bar{\sigma}(|x|)$ between triggering instants. Then assuming $w=0$ and taking (\ref{assump_eq_V_dot}) into account, we can write $	\nabla V (x)\cdot f(x,k(x+e),0)\leqslant-(1-{c})\bar{\sigma}(|x|) < 0$, {\it i.e.,} $x=0$ is an asymptotically stable point for disturbance-free system $\mathscr{G}_e$. The above argument is global since from Assumption \ref{assumption1}, $V$ is radially unbounded. 
\end{newproofof}

\begin{newproofof}
	\textit{Theorem \ref{thm practical0}.}
	Following the similar lines as in the proof of Theorem \ref{thm main}, we can upper bound $\dot{U}$ as
	\begin{equation}\label{new dotV}
	\dot{U}(x)\leqslant -(1-{c})\bar{\sigma}(|x|)+{\kappa} e^{-\zeta t}+ {\Gamma}^2|w|^2-|h(x,w)|^2
	\end{equation}
	for any $x \in \mathbb{R}^n$, any $e \in \mathbb{R}^n$ and any $w\in \mathscr{W}_Q$ such that $c\tilde{\psi} (|x|)\geqslant \bar{\beta}_1(|e|)$. 
	Integrating (\ref{new dotV}) from $0$ to $T\in\mathbb{R}_{>0}$
	and using the positive definiteness of $U$ we obtain $\int_{0}^{T}{|h(x(t),w(t))|^2}dt\leqslant {{\Gamma}^2}\int_{0}^{T}{|w(t)|^2}ds+ {\kappa}(1-e^{-\zeta T})/{\zeta}+U(x_0)$, which by applying Definition \ref{def gain} with $\eta={\kappa}/{\zeta}$ and $\mu=U$, completes the proof. 
\end{newproofof}

\begin{newproofof}
	\textit{Corollary \ref{cor w-Iss2}.}
	It can be inferred from execution rule (\ref{new execution}) that between successive triggering instants we have $\beta_1(|e|)\leqslant \bar{\beta}_1(|e|)\leqslant{c}\bar{\sigma}(|x|)+{\kappa}e^{-\zeta t}$. Then from (\ref{assump_eq_V_dot}), we can upper bound $\dot{V}$ as
	\begin{equation}\label{convergence eq}
	\dot{V}(x)\leqslant-(1-{c})\bar{\sigma}(|x|)+\kappa e^{-\zeta t}.
	\end{equation}
	Defining $\bar{{c}}=1-{c}$, we conclude that $\dot{V}(x)<0$ for $|x|> \bar{\sigma}^{-1}({\kappa e^{-\zeta t}}/\bar{{c}})$. Now define compact set $\Lambda_i=\{x\in \mathbb{R}^n||x|\leqslant \bar{\sigma}^{-1}({\kappa e^{-\zeta i}}/\bar{{c}})\}$ for $i\in\mathbb{Z}_{\geq 0}$. Also the set of boundary points of $\Lambda_i$ is defined as $\partial \Lambda_i=\{x\in \mathbb{R}^n||x|= \bar{\sigma}^{-1}({\kappa e^{-\zeta i}}/\bar{c})\}$ for $i\in\mathbb{Z}_{\geq 0}$. We denote by $m_i$ the argument of maximum value of $V(x)$ over the set $\partial \Lambda_i$, {\it i.e.}, $m_i=\smash{\displaystyle\operatorname{arg\,max}_{x\in\partial \Lambda_i}}{V(x)}$. Next define compact set $\Omega_i\doteq \{x\in\mathbb{R}^n|V(x)\leqslant V(m_i)\}$ for $i\in\mathbb{Z}_{\geq 0}$. Clearly $\Omega_i$ is positive invariant under the dynamics of event-based system $\mathscr{G}_e$ for $t\geqslant i$. We claim that $\Omega_i$ is the global attracting set of system $\mathscr{G}_e$ for $t\geqslant i$. To see this, let us define the complement of $\Omega_i$ in $\mathbb{R}^n$ as $\Omega_i^c=\{x\in \mathbb{R}^n|V(x)>V(m_i)\}$. If $x\in \Omega_i^c$, we conclude that $|x|>m_i$ and since $m_i\in \partial \Lambda_i$ we deduce that $|x|>\bar{\sigma}^{-1}({\kappa e^{-\zeta i}}/\bar{{c}})$.  
	Then since $t\geqslant i$ it follows that $|x|>\bar{\sigma}^{-1}({\kappa e^{-\zeta t}}/\bar{{c}})$ and consequently $\dot{V}(x)<0$ for all $x\in \Omega_i^c$ which confirms our claim. For $t\geqslant i+1$, however, $\Omega_{i+1}\subset \Omega_i$ is the new global attracting set of the event-based system $\mathscr{G}_e$. Thus the sequence of positive invariant attracting sets ${\{\Omega_i\}}_{i\in\mathbb{Z}_{\geq0}}$ with $\Omega_0\supset \Omega_1\supset \cdots \supset \Omega_i\supset \cdots$ shrinks to the origin as $i\rightarrow\infty$ (since $m_i$ converges to $0$) which confirms the convergence of trajectories of system $\mathscr{G}_e$ to the origin. 
\end{newproofof}

\begin{newproofof}
	\textit{Theorem \ref{thm practical}.}
	First we apply (\ref{convergence eq}) to conclude that $\dot{V}(x)\leqslant-{\bar{{c}}}\bar{\sigma}(|x|)+\kappa$ and hence $\dot{V}(x)<0$ for $|x|>\bar{\sigma}^{-1}({\kappa}/{\bar{{c}}})$, where $\bar{c}=1-c$. Then we just need to show conditions (a)-(c) in Definition \ref{practical stability} hold. To satisfy condition (a) we can choose $\delta(\epsilon)$ such that $0<\delta(\epsilon)<\epsilon$. For condition (b) one can choose $\upsilon(r)=\bar{\sigma}^{-1}({\kappa}/{\bar{{c}}})$ for $r\leqslant\bar{\sigma}^{-1}({\kappa}/{\bar{{c}}})$
	and some $\upsilon(r)>r$ for $r>\bar{\sigma}^{-1}({\kappa}/{\bar{{c}}})$. Finally, to satisfy condition (c) we consider two cases. For $r\leqslant\bar{\sigma}^{-1}({\kappa}/{\bar{{c}}})$ we can choose $T(r,\epsilon)$ to be any positive number since trajectories of the system $\mathscr{G}_e$ do not leave the ball $\{x\in\mathbb{R}^n| |x|\leqslant \bar{\sigma}^{-1}({\kappa}/{\bar{{c}}})\}$ and hence $|x(t)|<\epsilon$ for all $t\geqslant0$ and all $\epsilon>\bar{\sigma}^{-1}({\kappa}/{\bar{{c}}})$. However, for $r>\bar{\sigma}^{-1}({\kappa}/{\bar{{c}}})$ we need a more detailed argument. Let us choose $T^\prime$ such that $|x(T^\prime)|=\epsilon$ and
	integrate (\ref{convergence eq}) from $0$ to $T^\prime$ to obtain $V(\epsilon)-V(x_0) \leqslant -\bar{{c}}\int_{0}^{T^\prime}\bar{\sigma}(|x(t)|)dt+\kappa\int_{0}^{T^\prime}e^{-\zeta t}dt \leqslant -\bar{{c}}T^\prime\bar{\sigma}(|\epsilon|)+\frac{\kappa}{\zeta}(1-e^{-\zeta{T^\prime}})$.
	Since $V(\epsilon)-V(r)\leqslant V(\epsilon)-V(x_0)$ we can upperbound $T^\prime$ as the solution to inequality $\bar{{c}}T^\prime\bar{\sigma}(|\epsilon|)+\kappa e^{-\zeta{T^\prime}}\leqslant V(r)-V(\epsilon)+\kappa$. One can find a more conservative upper bound on $T^\prime$ by neglecting the exponential term in left hand side, {\it i.e.}, $\bar{{c}}T^\prime\bar{\sigma}(|\epsilon|)\leqslant V(r)-V(\epsilon)+\kappa-\kappa e^{-\zeta{T^\prime}}\leqslant V(r)-V(\epsilon)+\kappa$ and obtain $T^\prime\leqslant (V(r)-V(\epsilon)+\kappa)/{(\bar{{c}}\bar{\sigma}(|\epsilon|))}$. This is exactly what if we integrate the more conservative inequality $\dot{V}(x)\leqslant-\bar{{c}}\bar{\sigma}(|x|)+\kappa$ instead. Choosing $T(r,\epsilon)>(V(r)-V(\epsilon)+\kappa)/(\bar{{c}}\bar{\sigma}(|\epsilon|))$ completes the proof. 
\end{newproofof}

\begin{newproofof}
	\textit{Theorem \ref{thm practical2}.}
	It can be inferred from (\ref{2 time}) that $t_i\leqslant t_i^\prime$ and hence we have 
	$\bar{\beta}_1(|e(t)|) \leqslant c \check{\psi}(|x(t)|)$
	for $t\in[t_{i-1},t_i)$. Then following similar lines as the proof of Theorem \ref{thm main}, for $t\in[t_{i-1},t_i)$ we obtain $\dot{U}(x)\leqslant-(1-{c})\bar{\sigma}(|x|)+\frac{}{}{\hat{\kappa}\theta^i}/{i!}+ {\Gamma}^2|w|^2-|h(x,w)|^2$ for any $x \in \mathbb{R}^n$, any $e \in \mathbb{R}^n$ and any $w\in \mathscr{W}_Q$ such that $c\check{\psi} (|x|)\geqslant \bar{\beta}_1(|e|)$. Integrating this inequality from $0$ to some $T{\geqslant0}$, we arrive at
	\setlength{\arraycolsep}{0.0em}
	\begin{eqnarray}
	&{}{}&U(x(T))\leqslant U(x_0)+\int_{0}^{T}({\Gamma}^2|w(t)|^2-|h(x(t),w(t))|^2)dt\nonumber\\ &{}{}&+
	{\hat{\kappa}}\Big{\{}
	\int_{t_0=0}^{t_1}{\frac{\theta^1}{1!}}dt+\cdots+ \int_{t_{i-1}}^{t_{i}}{\frac{\theta^{i}}{i!}}dt+\cdots+\int_{t_{N-1}}^{T}{\frac{\theta^{N}}{{N}!}}dt\Big{\}},\nonumber
	\end{eqnarray}
	where we assume $N$ triggering instants (including the first one at $t_0=0$) occur until $t=T$, {\it i.e.}, $t_{N-1}=\smash{\displaystyle\operatorname{max}_{t_i\leqslant T}\{t_i\}}$. Now since $U(x(T))\geqslant 0$ we conclude that $\int_{0}^{T}{|h(x(t),w(t)|^2}dt \leqslant {U(x_0)}+{{\Gamma}^2}\int_{0}^{T}{|w(t)|^2}dt + {\hat{\kappa}}\smash{\displaystyle\max_{1\leq i\leq N}}\{t_i-t_{i-1}\}\sum_{i=1}^{N}{{\theta^i}/{i!}}$ and hence
	\begin{eqnarray}\label{dissipative discrete}
	\int_{0}^{T}{|h(x(t),w(t)|^2}dt \leqslant {U(x_0)} +{{\Gamma}^2}\int_{0}^{T}{|w(t)|^2}dt+{\hat{\kappa} \Delta}e^\theta.~~~~~
	\end{eqnarray}
	We then choose $\eta={\hat{\kappa}\Delta }e^\theta$ and $\mu=U$ in Definition \ref{def gain} to obtain the desired result. 
\end{newproofof}

\begin{newproofof}
	\textit{Corollary \ref{cor w-Iss discrete}.}
	Following similar lines as the proof of Corollary \ref{cor w-Iss2} we deduce that $	\beta_1(|e|)\leqslant {c}\bar{\sigma}(|x|)+{\hat{\kappa}}{\theta^i}/{i!}$
	for $t\in[t_{i-1},t_i)$ and hence from (\ref{assump_eq_V_dot}) it follows that
	\begin{equation}\label{convergence eq discrete}
	\dot{V}\leqslant-\bar{{c}}\bar{\sigma}(|x|)+{\hat{\kappa}}\frac{\theta^i}{i!}
	\end{equation}
	for $t\in[t_{i-1},t_i)$. As a consequence we conclude that $\dot{V}(x)<0$ for $|x|\geqslant \bar{\sigma}^{-1}({\hat{\kappa} \theta^i}/{(\bar{{c}} i!)})$ and $t\in [t_{i-1},t_i)$. Now define compact set $\Lambda_i=\{x\in \mathbb{R}^n||x|\leqslant \bar{\sigma}^{-1}({\hat{\kappa} \theta^i}/{(\bar{{c}}i!)})\}$ for $i\in \mathbb{Z}_{> 0}$. Then the set of boundary points of $\Lambda_i$ can be defined as $\partial\Lambda_i=\{x\in\mathbb{R}^n||x|=\bar{\sigma}^{-1}({\hat{\kappa} \theta^i}/{(\bar{{c}}i!)})\}$. We denote the argument of maximum value of $V(x)$ over set $\partial \Lambda_i$ by $m_i=\smash{\displaystyle\operatorname{arg\,max}_{x\in\partial \Lambda_i}}{V(x)}$. We remark that the discrete function ${\theta^i}/{i !}$ takes its maximum value at $i=\lfloor \theta \rfloor$ and is strictly decreasing over $i\geqslant \lfloor \theta \rfloor$. Now define compact set $\Omega_i=\{x\in\mathbb{R}^n|V(x)\leqslant V(m_i)\}$ for $i\in\mathbb{Z}_{> 0}$. Following similar lines as the proof of Corollary \ref{cor w-Iss2}, we can show that for $i\geqslant \lfloor \theta \rfloor$, $\Omega_i$ is positive invariant under dynamics of the event-based system $\mathscr{G}_e$ and moreover, is the global attracting set of this system for $t\geqslant t_i$. Since $t_i-t_{i-1}\leqslant \Delta$, we conclude that $i\rightarrow \infty$ as $t\rightarrow \infty$, {\it i.e.}, the triggering instants never terminate. Thus the sequence of positive invariant attracting sets ${\{\Lambda_i\}}_{\lfloor \theta \rfloor\leqslant i\in \mathbb{Z}_{> 0}}$ with $\Lambda_{\lfloor \theta \rfloor}\supset \Lambda_{\lfloor \theta \rfloor+1}\supset \cdots \supset \Lambda_i\supset \cdots$ shrinks to the origin and hence completes the proof. 
\end{newproofof}

\begin{newproofof}
	\textit{Theorem \ref{thm_disc}.}
	In view of (\ref{convergence eq discrete}) which is valid in $[t_{i-1},t_i)$, we conclude that $\dot{V}(x)\leqslant-\bar{{c}}\bar{\sigma}(|x|)+\hat{\kappa}_\theta$ for all $t\in\mathbb{R}_{\geq 0}$. Hence we have $\dot{V}(x)<0$ for $|x|>\bar{\sigma}^{-1}({\hat{\kappa}_\theta}/\bar{{c}})$. Then we just need to show conditions (a)-(c) in Definition \ref{practical stability} hold. To satisfy condition (a) we can choose $\delta(\epsilon)$ such that $0<\delta(\epsilon)<\epsilon$. To satisfy condition (b) we can choose $\upsilon(r)=\bar{\sigma}^{-1}({\hat{\kappa}_\theta}/\bar{{c}})$ for $r\leqslant\bar{\sigma}^{-1}({\hat{\kappa}_\theta}/\bar{{c}})$
	and $\upsilon(r)>r$ for $r>\bar{\sigma}^{-1}({\hat{\kappa}_\theta}/\bar{{c}})$. For condition (c) we consider two cases. For $r\leqslant\bar{\sigma}^{-1}({\hat{\kappa}_\theta}/\bar{{c}})$ we can choose $T(r,\epsilon)$ to be any positive number since the trajectories do not leave the ball $\{x\in\mathbb{R}^n|~|x|\leqslant\bar{\sigma}^{-1}({\hat{\kappa}_\theta}/\bar{{c}})\}$ and hence $|x(t)|<\epsilon$ for all $t\geqslant0$ and all $\epsilon>\bar{\sigma}^{-1}({\hat{\kappa}_\theta}/\bar{{c}})$. For $r>\bar{\sigma}^{-1}({\hat{\kappa}_\theta}/\bar{{c}})$, choose $T^\prime$ such that $|x(T^\prime)|=\epsilon$. Then
	integrating (\ref{convergence eq discrete}) from $0$ to $T^\prime$ gives
	\setlength{\arraycolsep}{0.0em}
	\begin{eqnarray}
	V(\epsilon)-V(x_0) &{}\leqslant{}&-\bar{{c}}\int_{0}^{T^\prime}\bar{\sigma}(|x(t)|)dt + \hat{\kappa}\Big{\{}
	\int_{t_0=0}^{t_1}{\frac{\theta^1}{1!}}dt\nonumber \\  
	 +&{}\cdots{}& + \int_{t_{i-1}}^{t_{i}}{\frac{\theta^i}{i!}}dt+\cdots+\int_{t_{N^\prime-1}}^{T^\prime}{\frac{\theta^{N^\prime}}{{N^\prime}!}}dt\Big{\}}.\nonumber
	\end{eqnarray} 
	Hence we have $V(\epsilon)-V(x_0)\leqslant -\bar{{c}}T^\prime\bar{\sigma}(|\epsilon|)+ {\max_{1\leqslant i\leqslant N^\prime}}\{t_i-t_{i-1}\}\hat{\kappa}\sum_{i=1}^{N^\prime}{\frac{\theta^i}{i!}} \leqslant  -\bar{{c}}T^\prime\bar{\sigma}(|\epsilon|)+{\max_{1\leqslant i\leqslant N^\prime}}\{t_i-t_{i-1}\}\hat{\kappa} e^\theta$,
	where we assume $N^\prime$ triggering instants (including the first one at $t_0=0$) occur until $t=T^\prime$, {\it i.e.}, $t_{N^\prime-1}=\smash{\displaystyle\operatorname{max}_{t_i\leqslant T^\prime}\{t_i\}}$. Then we can find an upper bound on $T^\prime$ as the solution to inequality $\bar{{c}}T^\prime\bar{\sigma}(|\epsilon|)\leqslant V(r)-V(\epsilon)+\smash{\displaystyle\operatorname{\max}_{1\leqslant i\leqslant N^\prime}}\{t_i-t_{i-1}\}\hat{\kappa} e^\theta$ since $V(\epsilon)-V(r)\leqslant V(\epsilon)-V(x_0)$. We remark that $\smash{\displaystyle\operatorname{\max}_{1\leqslant i\leqslant N^\prime}}\{t_i-t_{i-1}\}$ is a function of $\epsilon$ since $N^\prime$ depends on $T^\prime$ which is a function of $\epsilon$. Then one can choose $T(r,\epsilon)$ so that $T(r,\epsilon)>(V(r)-V(\epsilon)+\operatorname{max}_{1\leqslant i\leqslant N^\prime}\{t_i-t_{i-1}\}\hat{\kappa} e^\theta)/(\bar{{c}}\bar{\sigma}(|\epsilon|))$.
\end{newproofof}

\bibliographystyle{IEEEtranS}
\bibliography{creport}

\end{document}